\newtheorem{theorem}{Theorem}[section]
\newtheorem{observation}[theorem]{Observation}
\newtheorem{proposition}[theorem]{Proposition}
\newtheorem{lemma}[theorem]{Lemma}
\newtheorem{corollary}[theorem]{Corollary}
\newtheorem{definition}[theorem]{Definition}
\newtheorem*{thma}{Theorem}
\newcommand{\oa}{\overline{\mathcal A}}
\newcommand{\bE}{\ensuremath{\mathbf{E}}}
\DeclareMathOperator{\var}{var}
\DeclareMathOperator{\poly}{poly}
\def\myparagraph#1{\vspace{2pt}\noindent{ #1~~}}
\newtheorem*{rep@theorem}{\rep@title}
\newcommand{\newreptheorem}[2]{%
\newenvironment{rep#1}[1]{%
 \def\rep@title{#2 \ref{##1}}%
 \begin{rep@theorem}}%
 {\end{rep@theorem}}}
\begin{document}

\title{Oblivious resampling oracles and parallel algorithms for the Lopsided Lov\'{a}sz Local Lemma}
\author{David G. Harris\thanks{Department of Computer Science, University of Maryland, 
College Park, MD 20742. 
Email: \texttt{davidgharris29@gmail.com}}}

\date{}

\maketitle

\begin{abstract}
The Lov\'{a}sz Local Lemma (LLL) shows that, for a collection of ``bad'' events $\mathcal B$ in a probability space which are not too likely and not too interdependent, there is a positive probability that no events in $\mathcal B$ occur.  Moser \& Tardos (2010) gave sequential and parallel algorithms which transformed most applications of the variable-assignment LLL into efficient algorithms. A framework of Harvey \& Vondr\'{a}k (2015) based on ``resampling oracles'' extended this to sequential algorithms for other probability spaces satisfying a generalization of the LLL known as the  Lopsided Lov\'{a}sz Local Lemma (LLLL).

We describe a new structural property which holds for most known resampling oracles, which we call ``obliviousness.''  Essentially, it means that the interaction between two bad-events $B, B'$ depends only on the randomness used to resample $B$, and not the precise state within $B$ itself.

This property has two major consequences. First, combined with a framework of Kolmogorov (2016), it leads to a unified parallel LLLL algorithm, which is faster than previous, problem-specific algorithms of Harris (2016) for the variable-assignment LLLL and of Harris \& Srinivasan (2014) for permutations. This gives the first RNC algorithms for rainbow perfect matchings and rainbow hamiltonian cycles of $K_n$. 

Second, this property allows us to build LLLL probability spaces from simpler ``atomic'' events. This gives the first resampling oracle for rainbow perfect matchings on the complete $s$-uniform hypergraph $K_n^{(s)}$, and the first commutative resampling oracle for hamiltonian cycles of $K_n$.
\end{abstract}


\maketitle

This is an extended version of a paper which appeared in the ACM-SIAM Symposium on Discrete Algorithms (SODA) 2019.

\section{The Lov\'{a}sz Local Lemma and its algorithms}
The Lov\'{a}sz Local Lemma (LLL) is a fundamental probabilistic tool which shows that for a probability space $\Omega$ with a finite set $\mathcal B$ of $m$ ``bad'' events, then as long as the bad-events are not too interdependent (in a certain technical sense) and are not too likely,  there is a positive probability no events in $\mathcal B$ occur.  The simplest form of the LLL, known as the \emph{symmetric LLL}, can be stated as follows: if every bad-event $B$ has $\Pr_{\Omega}(B) \leq p$ and is dependent with at most $d$ others, where $e p d < 1$, then there is a positive probability that none of the bad-events occur.

Most combinatorial applications of the LLL use a relatively simple probability space, which we call the \emph{variable-assignment LLL.} This setting has $n$ independent variables $X_1, \dots, X_n$, and each bad-event $B$ is a boolean function of a subset of these variables denoted $\var(B)$. Bad-events $B, B'$ are dependent (written $B \sim B'$) iff $\var(B) \cap \var(B') \neq \emptyset$. Moser \& Tardos \cite{moser-tardos} introduced a remarkably simple algorithm for this setting, which we refer to as the \emph{MT algorithm}:
\begin{algorithm}[H]
\centering
\begin{algorithmic}[1]
\State Draw each variable independently from the distribution $\Omega$
\While{there is a true bad-event on $X$}
\State  Choose a true bad-event $B$ arbitrarily
\State Resample $\var(B)$ according to the distribution $\Omega$
\EndWhile
\end{algorithmic}
\caption{The MT algorithm}
\end{algorithm}

Moser \& Tardos \cite{moser-tardos} showed that this algorithm terminates quickly whenever the symmetric LLL criterion (or a more general asymmetric LLL criterion) is satisfied. Later work \cite{pegden, kolipaka, harris-llll} showed that it terminates under more general criteria. See Appendix~\ref{app-kolm} for background on the LLL and MT algorithm.

Note that the MT algorithm requires a subroutine to find a bad-event $B$ which is true on the current configuration $X$ (if any). We refer to this as a \emph{Bad-Event Checker (BEC)}. The simplest implemention of this is to loop over all bad-events and test them one by one, which would have a run-time on the order of $m$. The run-time of the MT algorithm can often be polynomial in $n$ and independent of $m$ if a more-efficient BEC is used \cite{hss, harris-srin2}.

\subsection{The Lopsided Lov\'{a}sz Local Lemma}
In \cite{erdos-spencer}, Erd\H{o}s \& Spencer noted that positive correlation among bad-events (again, in a certain technical sense) is as good as independence for the LLL. This generalization has been referred to as the \emph{Lopsided Lov\'{a}sz Local Lemma (LLLL)}. We say $B, B'$ are \emph{lopsidependent} and write $B \sim B'$ if $B, B'$ are neither independent nor positively correlated in this sense. (Formal definitions are provided later in Section~\ref{framework-sec}.)

Although the variable-assignment LLL covers the vast majority of applications in combinatorics, the LLLL is also used occasionally. For example, the original application of the LLLL used a probability space on permutations to construct Latin transversals for certain types of arrays  \cite{erdos-spencer}. Other applications include hamiltonian cycles on $K_n$   \cite{albert}, perfect matchings of $K_{n}$ \cite{lu-szekeley2}, perfect matchings of the complete $s$-uniform hypergraph $K_n^{(s)}$ \cite{mohr}, and spanning trees of $K_n$ \cite {mohr}.

The variable-assignment setting provides one of the simplest forms of the LLLL. Here, as before, there are independent variables $X_1, \dots, X_n$.  Instead of allowing arbitrary boolean functions of the variables, each bad-event should be a monomial function, i.e. of the form
$$
B \equiv X_{i_1} = j_1 \wedge \dots \wedge X_{i_k} = j_k
$$

For the LLL, we would have $B \sim B'$ if the bad-events $B$ and $B'$ share some common variable, i.e. $i_t = i'_{t'}$. For the LLLL, the (lopsi)dependency relation is more restricted: we have $B \sim B'$ if $B$ and $B'$ \emph{disagree} on some common variable, i.e. $i_t = i'_{t'}$ and $j_t \neq j'_{t'}$.

Moser \& Tardos showed that their algorithm applies to the variable-assignment LLLL setting.  In \cite{harris-srin-perm}, Harris \& Srinivasan developed an algorithm similar to the MT algorithm for the probability space of random permutations, which includes the Latin transversal application of \cite{erdos-spencer}. Extending these problem-specific algorithms, Harvey \& Vondr\'{a}k \cite{harvey} developed a general framework based on a ``resampling oracle'' $\mathfrak R$ for the probability space. We will define this formally in Section~\ref{framework-sec}, but, intuitively this is a randomized algorithm which, given some state $u$ with some bad-event $B$ true on $u$, attempts to ``rerandomize'' the configuration in a ``local'' way to fix $B$. This is similar to the way that the MT algorithm resamples the variables involved in $B$. Given this resampling oracle, the following Algorithm~\ref{gen-alg}  can be used to find a configuration avoiding the bad-events:
\begin{algorithm}[H]
\centering
\caption{A general resampling algorithm}
\label{gen-alg}
\begin{algorithmic}[1]
\State Draw the state $u$ from the distribution $\Omega$
\While{some bad-event $B$ is true on $u$}
\State Select, according to some specified rule, some $B$ true on $u$
\State Update $u \leftarrow \mathfrak R_B(u)$
\EndWhile
\end{algorithmic}
\end{algorithm}

These results have led to constructive counterparts to combinatorial results involving spanning trees and matchings of $K_n$ (both discussed in \cite{harvey}) and hamiltonian cycles of $K_n$ (subsequently developed in \cite{harvey2}). A further line of research has extended Algorithm~\ref{gen-alg}, and variants, to other spaces which do not directly  correspond to the LLLL \cite{achlioptas, achlioptas2, sinclair, harris-srin-partial}.

We note that the choice of which bad-event to select in line (3) of Algorithm~\ref{gen-alg} is much more constrained than for the MT algorithm. Only a limited number of possibilities work in general, such as selecting $B$ with smallest index, whereas the MT algorithm allows nearly complete freedom. In \cite{kolmogorov}, Kolmogorov showed that a number of resampling oracles (including variable-assignment, permutations, and perfect matchings of $K_n$) satisfy an additional property known as \emph{commutativity}. In such cases, Algorithm~\ref{gen-alg} also allows an arbitrary choice of which bad-event to select.  Kolmogorov \cite{kolmogorov} and Iliopoulos \cite{ilio} further showed that this property has powerful algorithmic consequences, including parallel algorithms,  efficient BEC's, and bounds on the output distribution at the termination of  Algorithm~\ref{gen-alg}.

\subsection{Parallel algorithms}
Moser \& Tardos also presented a simple parallel version of their resampling algorithm. This parallel algorithm requires a slightly stronger criterion, which we refer to as \emph{$\epsilon$-slack}; for instance, the symmetric LLL requires $e p (1+\epsilon) d \leq 1$; if this satisfied, then it terminates after $O(\frac{\log m}{\epsilon})$ rounds with high probability.\footnote{We say that an event occurs with high probability (abbreviated whp), if it has probability at least $1 - n^{-\Omega(1)}$.} On a EREW PRAM, it has overall runtime $O(\frac{\log^3 m}{\epsilon})$.  We summarize the algorithm as follows:
\begin{algorithm}[H]
\centering
\begin{algorithmic}[1]
\State Draw each variable independently from the distribution $\Omega$
\While{there is a true bad-event on $X$}
\State  Select a maximal independent set (MIS) $I$ of true bad-events
\State Resample, in parallel, $\bigcup_{B \in I} \var(B)$
\EndWhile
\end{algorithmic}
\caption{The parallel MT algorithm}
\end{algorithm}

Haeupler \& Harris \cite{hh} showed that the parallel MT algorithm could be implemented in time $O(\frac{\log^3 n}{\epsilon})$ (avoiding dependence on $m$) and gave an alternative parallel algorithm in time $O(\frac{\log^2 m}{\epsilon})$. The parallel MT algorithm can also usually be implemented even for more general LLL criteria, including the asymmetric LLL and Shearer's LLL criterion  \cite{kolipaka}.

(In some computational models, multiple processors can write to a memory cell simultaneously and the runtimes can often reduced by logarithmic factors.  For simplicity,  we will be conservative and use only the EREW PRAM model throughout this paper. We say that an algorithm is in $RNC^k$ if it runs in $\tilde O(\log^k n)$ time and $\poly(n)$ processors whp on an EREW PRAM.)

The parallel MT algorithm leads in a straightforward way to distributed graph algorithms in $O( \frac{\log^2 m}{\epsilon} )$ communication rounds. There has been extensive research into obtaining faster distributed and parallel LLL algorithms; some of these algorithms require significantly stronger (but still local) conditions on the dependency $d$ and probability $p$ of the bad-events \cite{pettie, mohsen2, hgk}. Brandt et al. \cite{brandt} showed that generic distributed LLL algorithms require $\Omega(\log \log n)$ rounds.

Frustratingly, although the sequential MT algorithm works for the variable-assignment LLLL just as it does for the variable-assignment LLL, this is not true of the parallel MT algorithm. There have been only a handful of parallel algorithms for the LLLL, such as the variable-assignment LLLL algorithm of Harris \cite{harris-llll} and the permutation LLL algorithm of Harris \& Srinivasan \cite{harris-srin-perm}.

In \cite{kolmogorov} Kolmogorov proposed a general framework for constructing parallel LLLL algorithms via resampling oracles, which can be summarized as follows:
\begin{algorithm}[H]
\centering
\begin{algorithmic}[1]
\State Draw the state $u$ from the distribution $\Omega$
\While{there is a true bad-event on $u$}
\State{Set $V$ to be the set of currently-true bad-events}
\While{$V \neq \emptyset$}
\State Select, arbitrarily, some bad-event $B \in V$
\State Update $u \leftarrow \mathfrak R_B (u)$
\State Remove from $V$ all bad-events $B'$ such that either (i) $B'$ is no longer true; or (ii) $B' \sim B$
\EndWhile
\EndWhile
\end{algorithmic}
\caption{Kolmogorov's framework for parallel resampling algorithms}
\label{kolm-alg}
\end{algorithm}

Each iteration of the loop of lines (3) --- (7) is called a \emph{round}. Kolmogorov showed that, when the resampling oracle $\mathfrak R$ is commutative, then Algorithm~\ref{kolm-alg} terminates whp after $O(\log n)$ rounds. We emphasize this is a \emph{sequential} algorithm, which is in fact a version of Algorithm~\ref{gen-alg}.

 If a single round can be simulated in polylogarithmic time, then this yields an RNC algorithm.  In almost every setting where a parallel LLLL algorithm is known (including all the ones in this paper), the resampling oracle is commutative and the parallel algorithm is an implementation of Kolmogorov's framework.

This makes partial progress to a general parallel LLLL algorithm; however, there remain two significant hurdles. The most straightforward of these is a parallel implementation of $\mathfrak R$. This is trivial for the variable-assignment LLL: if bad-events $B, B'$ are both selected for resampling, then $\var(B)$ and $\var(B')$ must be disjoint and the resamplings can be executed simultaneously. For other probability spaces, it is not clear how to resample without ``locking'' the state.

The second and much more fundamental hurdle is that the LLLL resampling process is inherently sequential in a way that the LLL is not. For the LLLL (but not the LLL) it is possible that two bad-events $B, B'$ are currently true, and $B \not\sim B'$, and resampling $B$ makes $B'$ false. We say in this case that \emph{$B$ fixes $B'$}. Because of this possibility, $B$ and $B'$ cannot be resampled simultaneously; one must select (arbitrarily) one of the two bad-events to resample first, and then only resample the second one if it still remains true. One critical challenge for LLLL algorithms is to simulate in parallel the process of resampling the bad-events \emph{in sequence.}

The parallel LLLL algorithms of Harris \cite{harris-llll} and Harris \& Srinivasan \cite{harris-srin-perm} overcome these hurdles to a limited extent. However they still suffer from a number of shortcomings. Although they run in polylogarithmic time, the exponent is quite high (and is not computed explicitly). They also require additional structure, such as having bad-events which involve a polylogarithmic number of variables. Finally, and perhaps most seriously, these algorithms are highly tailored to a single probability space. They are reminiscent of the situation for LLL algorithms before the framework of Harvey \& Vondr\'{a}k \cite{harvey}: specialized algorithms with ad-hoc  analysis.

\subsection{Our contribution and overview}
We identify a new property of resampling oracles that we refer to as \emph{obliviousness}. To summarize, suppose  we have two bad-events $B, B'$ with $B \not \sim B'$, and a state $u \in B \cap B'$. The obliviousness property states that \emph{whether $B$ fixes $B'$ depends solely on the randomness used to resample $B$, and not on the state $u$ itself.}  This framework is developed in Section~\ref{framework-sec}. We find it remarkable that so many LLLL probability spaces, even the non-commutative ones, have oblivious resampling oracles: this includes variable-assignment, permutations, perfect matchings of $K_n$, perfect matchings of the hypergraph $K_n^{(s)}$,  hamiltonian cycles of $K_n$, and spanning trees of $K_n$.

\textbf{A unified parallel algorithm.} Obliviousness allows us to sidestep the second major hurdle to a parallel LLLL algorithm. It reduces the possibility of $B$ fixing $B'$ to a \emph{pairwise} phenomenon: we only need to know the resampling action chosen for $B$, \emph{not the present state (which may be changing during other resampling actions).} The space of sequential resamplings can thus be represented in a simple graph structure, allowing us to efficiently find a valid sequence.

To implement this sequence in parallel, we encode $\mathfrak R$  as a monoid action. Specifically, $\mathfrak R_B$ can be interpreted as a randomly-chosen monoid element $r_B$ acting on the current state $u$. In this way, resampling multiple bad-events $B_1, \dots, B_s$ can be interpreted algebraically as the product $r_{B_s} \dots r_{B_1} u$. This is easily parallelized by the associativity of monoidal multiplication.

We summarize our generic parallel LLLL algorithm as follows:
\begin{theorem}[Informal]
Suppose that $e p d (1+\epsilon) \leq 1$ holds for any LLLL probability space with an appropriate parallelizable resampling oracle. Then there is a parallel algorithm in time $O( \frac{\log^4 n}{\epsilon} )$ to find a state avoiding $\mathcal B$.
\end{theorem}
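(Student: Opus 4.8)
The plan is to show that Kolmogorov's sequential Algorithm~\ref{kolm-alg} admits a faithful parallel simulation in which each of its \emph{rounds} (one iteration of lines~(3)--(7)) is carried out in $\polylog n$ time; combined with a bound of $O(\epsilon^{-1} \log n)$ rounds, this yields the stated runtime.

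\textbf{Step 1: Round complexity.} First I would establish that, under $e p d (1+\epsilon) \le 1$ and for a commutative resampling oracle (part of what ``appropriate'' demands; the obliviousness framework of Section~\ref{framework-sec} will in any case re-derive what is needed), the outer loop of Algorithm~\ref{kolm-alg} runs $O(\epsilon^{-1} \log n)$ times whp. This is essentially Kolmogorov's analysis via witness trees / stable-set sequences \cite{kolmogorov}, with the extra $\epsilon^{-1}$ coming from the $\epsilon$-slack exactly as in the parallel MT analysis. It then suffices to implement one round in $O(\log^3 n)$ parallel time (plus calls to a parallel Bad-Event Checker, which ``appropriate'' assumes runs in $\polylog n$ time).

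\textbf{Step 2: Reducing a round to a directed lex-first independent set.} This is where obliviousness does the real work. At the start of a round we compute the set $V$ of currently-true bad-events with the BEC and draw, once and for all, independent resampling randomness $g_B$ for every $B \in V$. I claim the list of bad-events actually resampled during the round, scanned in increasing index order, is a \emph{deterministic} function of $(g_B)_{B \in V}$: a bad-event $B$ is resampled iff, when the sequential process reaches it, it has not yet been deleted from $V$, and by obliviousness, whether an earlier resampling of $B'$ deletes $B$ depends only on whether $B' \sim B$ (condition (ii) of line~(7)) and, when $B' \not\sim B$, solely on $g_{B'}$ --- \emph{not} on the intermediate state, which is still guaranteed to lie in $B$ (it has not been fixed, by the induction hypothesis) and in $B'$ (else $B'$ is not resampled). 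Hence I would define a digraph $G$ on $V$ with an arc $B' \to B$ whenever $B'$ has smaller index and either $B' \sim B$ or $g_{B'}$ is such that resampling $B'$ fixes $B$; the resampled set is exactly the set obtained by scanning $V$ in index order and keeping each $B$ that has no in-arc from an already-kept vertex. One must check that a deleted vertex never re-enters --- resampling $B'$ with $B' \not\sim B$ cannot flip $B$ from false to true, which follows from the resampling-oracle / lopsidependency semantics --- so this greedy scan, a ``directed LFMIS,'' is well-defined.

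\textbf{Step 3: Parallel execution.} Computing a lexicographically-first MIS of an arbitrary digraph is $\mathbf{P}$-complete, but here the arc set is random (it is pinned down by the independently drawn $g_B$), so I would adapt the parallel random-greedy MIS analysis of \cite{greedy2} from undirected graphs to this asymmetric ``fixing'' digraph: repeatedly select all currently-undeleted vertices that are minimal among their in-neighbourhoods, mark their out-neighbours deleted, recurse; the argument should show $O(\log n)$ such phases, each costable in $O(\log^2 n)$ time (using the BEC to evaluate the ``$B'$ fixes $B$'' predicates). This produces the ordered list $B_1, \dots, B_s$ of bad-events to resample, which must then be applied to $u$ \emph{in sequence}; here I invoke the parallelizable structure of $\mathfrak R$ --- when $\mathfrak R_B$ is a group action $u \mapsto g_B u$, the net update is $g_{B_s} \cdots g_{B_1} u$, a parallel prefix product in $O(\log s)$ group operations, and for the variable-assignment space the supports are disjoint so the update is immediate. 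Multiplying $O(\epsilon^{-1} \log n)$ rounds by $O(\log^3 n)$ per round gives $O(\epsilon^{-1} \log^4 n)$.

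\textbf{Main obstacle.} The crux is Step~3: extending the parallel random-greedy MIS analysis of \cite{greedy2} to the directed setting, where ``$B'$ fixes $B$'' is genuinely asymmetric and the arc probabilities are dictated by the oracle rather than chosen by us, so the clean independence structure of Luby-type arguments is not available off the shelf. A secondary delicacy is making Step~2 fully rigorous: one must track that the intermediate states lie inside precisely the bad-events for which obliviousness is invoked, at precisely the moments it is invoked, and verify that reusing the same randomness $(g_B)$ both to build $G$ and to perform the actual resamplings does not corrupt the round-complexity analysis of Step~1.
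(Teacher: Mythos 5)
Your overall architecture is the same as the paper's: bound the number of rounds of Kolmogorov's scheme via stable-set sequences with $\epsilon$-slack (the paper's Lemma~\ref{length-bound}), use obliviousness to turn one round into the computation of a directed lexicographically-first MIS whose arcs are $B' \to B$ for $B' \sim B$ or ``$g_{B'}$ fixes $B$'' (the paper's Algorithm~\ref{par-alg1} together with the coupling in Proposition~\ref{couple-prop2}), compute that LFMIS by a parallel greedy peeling in the style of \cite{greedy2}, and apply the selected resamplings as an associative product of group elements. Your Step~2 is essentially Proposition~\ref{couple-prop2}, and your worry about reusing $(g_B)$ is resolved exactly as you suspect: the parallel round is a faithful simulation of a legal run of the sequential Algorithm~\ref{kolm-alg} (deferred decisions), so the sequential round-complexity analysis applies verbatim.

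There is, however, one genuine gap, and it sits precisely at the point you flag as the main obstacle but misdiagnose. You scan $V$ in \emph{increasing index order} and hope that the randomness of the arc set (inherited from the $g_B$'s) makes the greedy peeling terminate in $O(\log n)$ phases. It does not: the arcs coming from the dependency relation $B' \sim B$ are \emph{deterministic} and present regardless of the $g_B$'s, so if the currently-true bad-events contain a chain $B_1 \sim B_2 \sim \dots \sim B_k$ that happens to be monotone in your fixed index order, the peeling needs $\Omega(k)$ phases --- the arc randomness cannot rescue you. The fix in the paper is to draw a \emph{fresh uniformly random permutation} $\pi$ of $V$ in each round and compute the LFMIS of the (arbitrary) digraph $G$ with respect to $\pi$; this is legitimate because line~(5) of Algorithm~\ref{kolm-alg} permits an arbitrary selection order, so the coupling still identifies the output with a valid sequential run. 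The content of Theorem~\ref{greedy-th0} (proved in Appendix~\ref{lfmis-app} by showing that residual in-degrees, and hence directed path lengths in the prefix graphs, shrink geometrically under a random vertex ordering) is that this randomized-order directed LFMIS is computable in $O(\log^2 n)$ greedy phases whp, i.e.\ $O(\log^3 n)$ EREW time per round. With that substitution your accounting $O(\epsilon^{-1}\log n) \times O(\log^3 n) = O(\epsilon^{-1}\log^4 n)$ matches Theorems~\ref{par-bound3} and \ref{par-bound4}.
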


We summarize some notable applications of this algorithm.
\begin{enumerate}
\item  Suppose we have a $k$-SAT instance on $n$ variables and $m$ clauses, in which each variable appears in at most $L \leq \frac{2^{k+1} (1 - 1/k)^k}{(k-1)(1 + \epsilon)} - \frac{2}{k}$ clauses. There is an $RNC^4$ algorithm to find a satisfying assignment.
\item   For an integer $c \geq 2$, suppose that $H$ is a $k$-uniform hypergraph $H$ where each vertex appears in at most $L = \frac{c^k (1-1/k)^{k-1}}{k (c-1)(1+\epsilon)}$ edges. There is an randomized algorithm in $O( \frac{\log^3 n}{\epsilon})$ rounds  for the LOCAL distributed computing model to find a proper vertex $c$-coloring of $H$.
\item  Suppose that $A$ is an $n \times n$ matrix whose entries are labeled by colors and each color appears in at most $\Delta$ entries. For $\Delta \leq 0.105 n$, there is an $RNC^4$ algorithm to find a Latin transversal of $A$. For $\Delta \leq n \Bigl( \frac{(s-1)!}{2 e (1+\epsilon) s} \Bigr)^{1/(s-1)}$ there is an $RNC^4$ algorithm to find a transversal of $A$ where color appears at most $s$ times.
\item  Suppose that we have  an edge-coloring of $K_n$ where each color appears  on at most $\Delta$ edges. If $\Delta \leq 0.105 n$ and $n$ is even, there is a $RNC^4$ algorithm to find a rainbow perfect matching.  If $\Delta \leq 0.026 n$, there is an $RNC^4$ algorithm to find a rainbow hamiltonian cycle.
\end{enumerate}

Versions of the first two results with slightly worse parameters can be derived from the variable-assignment LLL and parallel MT algorithm.  Previous slower RNC algorithms are known for the third result. We are not aware of any RNC algorithms comparable with the fourth result; this answers open problems posed by Kolmogorov \cite{kolmogorov} and Harvey \& Liaw \cite{harvey2}.

\textbf{A new resampling framework.} Beyond its direct algorithmic impact, obliviousness can simplify a number of resampling oracle constructions. Most LLLL probability spaces come from a set of relatively simple ``atomic events.'' For example, in the space of uniform permutations, these are events of the form $\pi(x) = y$. A bad-event $B$ is then taken to be a conjunction of atomic events.

It is intuitively clear that the resampling oracle for the atomic events in some sense ``generates'' the resampling oracle for $\mathcal B$. A formal description of this has been elusive. To illustrate the difficulty, consider a bad-event $B = A_1 \cap A_2$ and a configuration $u \in B$, where $A_1, A_2$ are atomic events. We would like to resample $B$ by resampling $A_1$ and then resampling $A_2$. In order to obtain the correct probability distribution, we must condition on $A_2$ remaining true after resampling $A_1$. For a general resampling oracle, this conditioning step might distort the probability distribution of $u$ in an unmanageable way. But for an oblivious resampling oracle, we are guaranteed that \emph{conditioning on $A_2$ remaining true retains an independent, uniform distribution for $u$ itself.}

We derive a simple list of axioms required for an oblivious resampling oracle \emph{for the atomic events only}; these automatically lead to a resampling oracle for $\mathcal B$.  Beyond the fact that this gives new algorithmic results, this greatly simplifies many proofs and constructions for existing resampling oracles.  We highlight a few results:
\begin{enumerate}
\item We get a commutative resampling oracle, and parallel algorithms, for the space of hamiltonian cycles of $K_n$.
\item We get a resampling oracle for the space of perfect matchings of the complete hypergraph $K_n^{(s)}$. This leads to efficient  (sequential)  algorithms corresponding to non-constructive results on rainbow hypergraph matchings shown by Lu, Mohr, \& Sz\'{e}k\'{e}ly \cite{mohr}.
\end{enumerate}

\subsection{Outline}
In Section~\ref{framework-sec}, we formally define  the LLLL in terms of resampling oracles. We provide a new framework which is more algebraic compared to the probabilistic formulation originally developed in \cite{harvey}. We define the properties needed for resampling oracles, including commutativity and the new property of obliviousness. We also discuss the method for generating LLLL-compatible probability spaces from atomic events.

In Section~\ref{lfmis-sec}, we describe a new graph algorithm needed for our parallel LLLL algorithm. This computes a structure which is similar to a lexicographically-first MIS (LFMIS), but generalized to directed graphs. This plays a similar role to the MIS in the parallel MT algorithm, but respects the sequential ordering of the bad-events. We show that, for a random vertex order,  this LFMIS can be computed efficiently in $O(\log^2 n)$ rounds by a simple greedy parallel algorithm adapted from Blelloch, Fineman \& Shun \cite{greedy2} for undirected graphs. This is a pure graph theory problem which does not directly involve the LLLL, and may be of independent interest.

In Section~\ref{alg-sec}, we describe our generic LLLL algorithm in terms of a resampling oracle from the framework of Section~\ref{framework-sec}.

In Section~\ref{var-llll-sec}, we analyze the variable-assignment LLLL. We show how the simple resampling oracle (which is just to resample variables from the original distribution) fits into the formal framework of Section~\ref{framework-sec}. We provide a few example applications, to $k$-SAT and hypergraph coloring.

In Section~\ref{other-llll-sec}, we describe a few other more ``exotic'' LLLL spaces, including random permutations, hamiltonian cycles, and perfect matchings. We discuss a few applications, including to strong coloring and a number of Latin transversal problems.

\subsection{Notation}
Throughout, we let $[n]$ denote the set $\{1, \dots, n \}$. For a probability space $\Omega$ over a ground set $U$, we say that $u \approx \Omega$ if $u$ is a random variable drawn according to distribution $\Omega$. We define $\Omega[u]$ to be the probability mass of $u$, and we define $\text{Support}(\Omega)$ to be the set of values $u \in U$ with $\Omega[u]> 0$. 

For any $V \subseteq U$ we define $\Omega[V] = \Pr_{u \approx \Omega}(u \in V) = \sum_{v \in V} \Omega[v]$.  We also define $\Omega | V$ to be the conditional distribution on $V$, i.e. $(\Omega|V)[v] = \Omega[v] / \Omega[V]$ for $v \in V$.

For two random variables $X, Y$, we say $X \approx Y$ if $X, Y$ follow the same distribution. For any set $X$, we define $\text{Unif}(X)$ to be the uniform distribution on $X$.

For $s \geq 2$, we let $K_n^{(s)}$ denote the complete $s$-uniform hypergraph on vertex set $[n]$. For $s = 2$ (the complete graph), we also write $K_n = K_n^{(2)}$. We say that $M$ is a perfect matching of $K_n^{(s)}$ if it is a partition of $[n]$ into exactly $n/s$ classes of size $s$. Whenever we refer to the set of perfect matchings of $K_n^{(s)}$, we will assume implicitly that $s$ divides $n$.

We define $S_n$ to be the symmetric group on $n$ letters, viewed concretely as the set of bijections on ground set $[n]$. We write $(a \ b)$ for the transposition swapping $a$ and $b$. We also write $\sigma_1 \sigma_2$ for the functional composition $\sigma_1 \circ \sigma_2$, that is, the function sending $x$ to $\sigma_1(\sigma_2(x))$. 

For subsets $A, B$ of an algebraic structure $G$, we let $A B$ denote the product set $A B = \{ a  b \mid a \in A, b \in B \}$. Similarly, for $b \in G, A \subseteq G$ we write $b A = \{ b a \mid a \in A\}$ and $A b = \{ a b  \mid a \in A \}$. 

 For a directed graph $G = (V,E)$ and a vertex $v \in V$, we define the out-neighborhood $N^{\text{out}}(v) = \{ w \mid (v,w) \in E \}$ and the out-degree of $v$ is the cardinality of this set. Similarly we define the in-neighborhood $N^{\text{in}}(v) = \{ w \mid (w, v) \in E \}$, and the in-degree of $v$ is the cardinality of this set.
 
\section{The LLLL and resampling oracles}
\label{framework-sec}

In this section, we will formally define the LLLL and how to construct a resampling oracle for it, in the sense of Harvey \& Vondr\'{a}k \cite{harvey}. We note that Erd\H{o}s \& Spencer \cite{erdos-spencer} describes an alternate, probabilistic interpretation of the LLLL, which is slightly more general. Since this is technical to describe and we will never use this interpretation, we will not discuss this here.

Constructions based on the LLLL typically have two phases. First, we choose a large collection of highly-structured ``generic'' bad-events in a probability space, equipped with an appropriate lopsidependency relation and a resampling oracle. For example, in the variable-assignment LLLL setting, the underlying probability space is a cartesian product space with $n$ independent variables and the generic bad-events are the monomial functions of the form $X_{i_1} = j_1 \wedge \dots \wedge X_{i_k} = j_k$ for arbitrary values $k, (i_1, j_1), \dots, (i_k, j_k)$. For the permutation setting, the underlying probability space is the uniform distribution on $S_n$ and the generic bad-events have the form $\pi(x_1) = y_1 \wedge \dots \wedge \pi(x_k) = y_k$ for arbitrary values $k, (x_1, y_1), \dots, (x_k, y_k)$. 

It is impossible to avoid \emph{all} the generic bad-events. The second phase of the LLLL is to select some problem-specific, more-or-less ``random'', subset of the generic bad-events. For example, if we wish to satisfy a given $k$-SAT formula, then for each clause $X_{i_1} = j_1 \vee \dots \vee X_{i_k} = j_k$, we would have in $\mathcal B$ the bad-event $X_{i_1} = 1-j_1 \wedge \dots \wedge X_{i_k} = 1 - j_k$, which is one of the generic bad-events.

In order to show that the LLLL applies, and that Algorithm~\ref{gen-alg} converges to an assignment avoiding $\mathcal B$, we must show two things: first,  that the resampling oracle works properly on the generic set of bad-events containing $\mathcal B$. Second, that the specific chosen subset $\mathcal B$ has its probabilities and dependencies sufficiently small; for example,  each bad-event $B \in \mathcal B$ has $\Pr_{\Omega}(B) \leq p$ and is lopsidependent with at most other $d$ bad-events of $\mathcal B$ such that $e p d < 1$.

These two phases are almost completely distinct. The first is highly algebraic, while the second is more combinatorial. In this section, we will only discuss the first phase of constructing the generic set of bad-events to be compatible with the LLLL. The second phase, for which  we use only standard techniques,  is discussed in Appendix~\ref{app-kolm}.

\subsection{Framework for resampling oracles}
Consider a probability space $\Omega$ over a ground set $U$, along with a collection $\mathcal B$ of events in that space. There is also a binary symmetric relation $\sim$ provided for $\mathcal B$, which we refer to as the \emph{dependency relation}.\footnote{More properly, this should be referred to as a ``lopsidependency'' relation.  The distinction between dependency and lopsidependency is  not important for us so we use the simpler terminology.}   We will define the properties needed for a resampling oracle $\mathfrak R$ for this space, in the sense of Algorithm~\ref{gen-alg}, along with the new property ``obliviousness'' which we will need for our algorithms. We will later construct a number of such resampling oracles.

We will define $\mathfrak R$ by specifying a monoid $R$ which acts  on $U$. We refer to the $R$-act on $U$ as the \emph{resampling action}, and we write it as $r u$ for $r \in R, u \in U$.  We also define, for each $B \in \mathcal B$, a probability distribution $\Gamma_B$ over $R$ and we define $R_B = \text{Support}(\Gamma_B) \subseteq R$. The intent is to define the resampling oracle $\mathfrak R_B$ as $\mathfrak R_B(u) = r u$ where $r \approx \Gamma_B$.  Note that it is very important for us to separate the role of the randomness used in $\mathfrak R_B$.

Before we define our new obliviousness property, let us reiterate the conditions of Harvey \& Vondr\'{a}k \cite{harvey} and Kolmogorov \cite{kolmogorov}, in terms of our notation.\footnote{Kolmogorov \cite{kolmogorov} refers to property (C3) here as ``strong commutativity.'' We will never use the weaker commutativity properties defined by Kolmogorov, so we just refer to this as commutativity for convenience.} 
\begin{enumerate}
\item[(C1)] (Probability regeneration) For any $B \in \mathcal B$ and any fixed $v \in U$, we have
  $$
  \Pr_{(u, r) \approx (\Omega | B) \times \Gamma_B} ( r u = v) = \Omega[v] 
  $$
\item[(C2)] (Locality) If $B \not \sim B'$, and $u \in B - B'$, then for all $r \in R_B$ we have $r u \notin B'$.
  
\item[(C3)] (Commutativity)  Let $B_1 \not \sim B_2$. For any states $u \in B_1 \cap B_2$ and $u' \in U$, there is an injective mapping from states $w \in B_2 \cap R_1 u$ with $u' \in R_2 w$, to states $w' \in B_1 \cap R_2 u$ with $u' \in R_1 w'$, such that
  $$
  \Pr_{r_1 \approx \Gamma_{B_1}} (r_1 u = w) \Pr_{r_2 \approx \Gamma_{B_2}} (r_2 w = u')  = \Pr_{r_2 \approx \Gamma_{B_2}} (r_2 u = w') \Pr_{r_1 \approx \Gamma_{B_1}} (r_1 w' = u') 
$$
\end{enumerate}

\begin{observation}
If Properties (C1) and (C2) are satisfied, then the randomized function $\mathfrak R_B$ defined by choosing $r \approx \Gamma_B$ and outputting $\mathfrak R_B(u) = r u$, gives a resampling oracle in the sense of  Harvey \& Vondr\'{a}k \cite{harvey}. If (C3) is also satisfied, then the resampling oracle $\mathfrak R_B$ is commutative in the sense of Kolmogorov \cite{kolmogorov}.
  \end{observation}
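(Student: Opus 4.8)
The plan is to treat the observation as pure definition-chasing: each of (C1), (C2), (C3) is, once rewritten in the $r_B(u,y)$ notation, a faithful restatement of one axiom from the cited frameworks, so the ``proof'' amounts to recalling those axioms and matching them up. The two ingredients I need are (a) the defining properties of a resampling oracle in Harvey \& Vondr\'ak \cite{harvey} and of commutativity in Kolmogorov \cite{kolmogorov}, and (b) the construction itself: $\mathfrak R_B(u)$ is obtained by drawing $y \approx Y_B$ \emph{independently of all other randomness} and returning $r_B(u,y) = yu$, so that $\mathfrak R_B(u)$ is a deterministic function of $u$ together with this single fresh sample.

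For the resampling-oracle claim, recall that $\mathfrak R_B$ qualifies as a resampling oracle for $B$ with respect to $\sim$ as soon as (i) $U \approx \Omega^B$ implies $\mathfrak R_B(U) \approx \Omega$, and (ii) for every $B' \not\sim B$ and every $u \in B - B'$ one has $\mathfrak R_B(u) \notin B'$ with probability $1$. For (i): for any fixed target state $u$, independence of $y$ from $U$ gives $P(\mathfrak R_B(U) = u) = P_{(U,y) \approx \Omega^B \times Y_B}(yU = u)$, which is exactly $\Omega[u]$ by (C1); ranging over $u$ yields $\mathfrak R_B(U) \approx \Omega$. For (ii): if $u \in B - B'$ with $B' \not\sim B$, property (C2) says $yu \notin B'$ for every $y$ in the support of $Y_B$, so whatever $y$ is drawn, $\mathfrak R_B(u) = yu \notin B'$. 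This settles the first sentence.

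For the second sentence I would unwind Kolmogorov's definition of commutativity and observe that it coincides with (C3). Operationally, commutativity asks that for $B \not\sim B'$ and any $u \in B \cap B'$, the two-step chain ``resample $B$ using $y \approx Y_B$; if the result still lies in $B'$, resample it using $y' \approx Y_{B'}$'' and the chain obtained by swapping the roles of $B$ and $B'$ induce, conditioned on the second resampling actually being performed, the same distribution on the final state; since those final states are $y'yu$ and $yy'u$ respectively, this is precisely the identity displayed in (C3). (Kolmogorov packages this as a measure-preserving coupling of the randomness used in the two orders, but that is an equivalent repackaging.) I expect no real obstacle anywhere: the entire argument is bookkeeping. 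The only point meriting a second look is reconciling the treatment of the conditioning events in (C3) with Kolmogorov's convention — in particular, how each handles a state $u \in B \cap B'$ from which $yu \in B'$ has probability zero — and this is a matter of checking that both formulations treat such degenerate cases vacuously.
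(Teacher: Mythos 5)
Your proposal is correct and matches the paper's treatment: the paper states this as an Observation with no proof, precisely because (C1)--(C3) are designed to be restatements of the Harvey--Vondr\'ak and Kolmogorov axioms in the separated $(u,y)$ notation, and your definition-chasing (including the caveat about reconciling the distributional form of (C3) with Kolmogorov's coupling-of-transitions formulation) is exactly the intended argument.
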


We define a \emph{resampling-space} to be an ensemble of such objects $\mathcal B, R, U, \Omega, \sim$ satisfying (C1) and (C2). We sometimes  refer to the overall ensemble also just as $\mathcal B$.  We define the neighborhood of $B \in \mathcal B$ by $N(B) = \{ A \in \mathcal B: A \sim B \}$ and we also define $\overline N(B) = N(B) \cup \{B \}$.

Observe that if $\mathcal C, R, U, \Omega, \sim$ is a resampling-space and $\mathcal B \subseteq \mathcal C$, then $\mathcal B, R, U, \Omega, \sim$ is also a resampling-space (where $\sim$ is the restriction to $\mathcal B$). Furthermore, if (C3) holds for $\mathcal C$ then it holds for $\mathcal B$ as well.  We emphasize that these properties alone do not imply that that Algorithm~\ref{gen-alg} will converge when using the resampling oracle $\mathfrak R$. Our usual strategy is to show that some generic set $\mathcal C$ is a resampling-space with desired properties, and then take $\mathcal B$ to be an arbitrary subset of $\mathcal C$. We then show that one of the LLLL convergence criteria, such as Shearer's criterion, is satisfied on $\mathcal B$. See Appendix~\ref{app-kolm} for further details and definitions.

Bearing this in mind, we can summarize the main result of \cite{harvey} as follows:
\begin{theorem}[\cite{harvey}]
If $\mathcal B$ is a resampling-space which satisfies Shearer's criterion, then Algorithm~\ref{gen-alg} terminates in expected polynomial time. 
\end{theorem}

We are now ready to introduce the new structural property:
\begin{enumerate}
\item[(C4)] (Obliviousness) For all pairs $B, B'$ in $\mathcal B$ with $B \not \sim B'$, and all $r \in R_B$, one of the following two conditions holds:
  \begin{enumerate}
  \item For all $u \in B \cap B'$ we have $r u \in B'$
  \item For all $u \in B \cap B'$ we have $r u \notin B'$
    \end{enumerate}
\end{enumerate}

We refer to this as obliviousness since whether $r u$ is in $B$ does not depend upon the state $u$. In light of  (C4), let us define set $R_{B;B'} = \{ r \in R_B \mid  r u \in B'    \}$. We also define the conditional probability distribution $\Gamma_{B;B'} = \Gamma_{B} | R_{B;B'}$, and for any set $E \subseteq \mathcal B$ we define $R_{B;E} = \bigcap_{B' \in E} R_{B;B'}$ and $\Gamma_{B;E} = \Gamma_B | R_{B;E}$.

The definition of commutativity as it appears in (C3) is cumbersome to work with and lacks good compositional properties. To make it easier to show (C3), we use an additional property of resampling oracles identified by Achlioptas \& Iliopoulos \cite{achlioptas}, which we refer to as \emph{injectivity}.\footnote{In \cite{achlioptas}, this property is referred to as \emph{atomicity}. We use the alternate terminology \emph{injectivity} to avoid confusion with our discussion of atomic bad-events.} We state one variant of this property as follows:
\begin{enumerate}
\item[(C5)] (Injectivity) For all $u \in U$ and $B \in \mathcal B$, there is exactly one $w \in B$ with $u \in R_B w$.
\end{enumerate}

Our main motivation for this property is that it greatly simplifies condition (C3), allowing us to use an alternate condition (C3') instead:
\begin{enumerate}
\item[(C3')]  For all pairs  $B_1, B_2$ and all $u \in B_1 \cap B_2$ we have $R_{B_2} R_{B_1; B_2} u = R_{B_1} R_{B_2; B_1} u$.
\end{enumerate}
  
We summarize this in the following result:
\begin{proposition}
  \label{strong-atom-prop}
If properties (C3'), (C4), (C5) hold, then property (C3) holds.
\end{proposition}
\begin{proof}
  We begin with a  preliminary calculation: consider any $B \in \mathcal B, w \in B, u \in R_B w$. By (C1) we have $\Pr_{r \approx \Gamma_B, w' \approx \Omega|B} (r w' = u) = \Omega[u]$.   By (C5), we have $r w' = u$ only if $w' = w$, and so $\Pr_{r \approx \Gamma_B, w' \approx \Omega|B} (r w' = u) = \Pr_{r \approx \Gamma_B, w' \approx \Omega|B} (r w' = u \wedge w = w') = \Omega[w]/\Omega[B] \times \Pr_{r \approx \Gamma_B}(r w = u)$. Combining these equations, we get the following formula:
    \begin{equation}
    \label{gl5}
    \Pr_{r \approx \Gamma_B} (r w = u) = \Omega[u] \Omega[B] / \Omega[w].
    \end{equation}

  Let us now show (C3).  Fix $B_1, B_2, u, u'$. By (C5), at most one state $w$ has $w \in R_{B_1} u, u' \in R_{B_2} w$. If there is no such $w$, then there is nothing to show. Otherwise, by (C3') there must exist $w'$ with $u' \in R_{B_1} w', w' \in R_{B_2} u$.  We map $w$ to this $w'$. Since there is only one possible value $w$, the mapping is trivially injective. We need to show that this pair $w, w'$ satisfies
   $$
  \Pr_{r_1 \approx \Gamma_{B_1}} (r_1 u = w) \Pr_{r_2 \approx \Gamma_{B_2}} (r_2 w = u')  = \Pr_{r_2 \approx \Gamma_{B_2}} (r_2 u = w') \Pr_{r_1 \approx \Gamma_{B_1}} (r_1 w' = u') 
  $$

  By Eq.~(\ref{gl5}), we have
  $$
  \Pr_{r_1 \approx \Gamma_{B_1}} (r_1 u = w) \Pr_{r_2 \approx \Gamma_{B_2}} (r_2 w = u') = \frac{\Omega[u'] \Omega[B_1]}{\Omega[w]} \times \frac{\Omega[w] \Omega[B_2]}{\Omega[u]} = \frac{\Omega[u'] \Omega[B_1] \Omega[B_2]}{\Omega[u]}
  $$

  A symmetric argument shows that $\Pr_{r_2 \approx \Gamma_{B_2}} (r_2 u = w') \Pr_{r_1 \approx \Gamma_{B_1}} (r_1 w' = u')$ is also equal to this quantity.  
\end{proof}

\subsection{Atomically-generated probability spaces}
Most known resampling-spaces have a nicer form: the bad-events $B$ are conjunctions of a limited class of ``atomic'' events. For example, for the variable-assignment LLLL, an atomic event is $X_i = j$; for the space of uniform permutations, an atomic event is $\pi(x) = y$. The obliviousness property allows us to formalize this: we can define a resampling oracle and a simple list of axioms \emph{for the atomic events alone}, and then we automatically get a resampling oracle for \emph{conjunctions} of atomic events. This vastly simplifies the constructions for a number of diverse LLLL spaces.

Let $\mathcal A, R, U, \Omega, \sim$ be an oblivious resampling-space. We say that a set $E \subseteq \mathcal A$ is \emph{stable} if $A \not \sim A'$ for all distinct pairs $A, A' \in E$, and we define $\langle E \rangle = \bigcap_{A \in E} A$. For $A_1, \dots, A_k \in A$, we also write $\langle A_1, \dots, A_k \rangle$ as shorthand for $ A_1 \cap \dots \cap A_k = \langle \{ A_1, \dots, A_k \} \rangle$.

Let us define $\oa$ to be the set of conjunctions of events of $\mathcal A$,
$$
\oa = \bigl \{ \langle E \rangle \mid \text{$E$ a stable subset of $\mathcal A$} \bigr \}
$$
 We will use the same  ground set $U$ and monoid $R$ for $\oa$. The new dependency relation $\sim$ for $\oa$ is defined by setting $\langle E \rangle \sim \langle E' \rangle$ if there exist $A \in E, A' \in E'$ with $A \sim A'$. 

The key to the construction is to extend the distributions $\Gamma_{A}$ for the atomic events to a probability distribution $\Gamma_C$ for an event $C = \langle E \rangle$ in $\oa$. To do so,  we select some arbitrary fixed ordering as $E = \{A_1, \dots, A_k \}$, and we then define $\Gamma_C$ to be the distribution over products $r = r_k r_{k-1} \cdots r_2 r_1$, wherein $r_1, \dots, r_k$ are independent random variables and $r_i$ is drawn from distribution $\Gamma_{A_i; \{A_{i+1}, \dots, A_k \}}$.   (For $k = 0$, $r$ is the identity element of $R$.)

\begin{theorem}
  \label{atomic-thm}
  If $\mathcal A$ is an oblivious resampling-space, then so is $\oa$.  If, in addition, $\mathcal A$ satisfies (C5) and (C3'), then so does $\oa$; in particular, $\oa$ is commutative.
\end{theorem}

The proof of Theorem~\ref{atomic-thm} is technical, so we defer it to Appendix~\ref{atomic-thm-proof}. In later sections, we use it for a number of new and simpler constructions of resampling-spaces. Notably, these include  hamiltonian cycles of $K_n$ and perfect matchings of $K_n^{(s)}$. Our construction for hamiltonian cycles of $K_n$ is commutative, in contrast to a previous resampling oracle construction of Harvey \& Liaw \cite{harvey2}. No resampling oracle of any kind was known for perfect matchings of $K_n^{(s)}$ for any $s > 2$. 

\subsection{Efficient resampling oracles}
Our framework for resampling oracles, in which $\mathfrak R$ is derived from a monoid $R$, may seem overly restrictive. In fact, it is without loss of generality: for an arbitrary resampling oracle in the sense of Harvey \& Vondr\'{a}k \cite{harvey}, we could simply take $R$ to be the full transformation monoid. This would be useless computationally, because writing down an element of $R$ would require exponential time.

 In order to get an efficient parallel algorithm we must be able to efficiently compute on $R$. We summarize the requirements in terms of four properties (D0)---(D3); the runtime bounds are chosen so that the resampling action does not become the computational bottleneck for the overall algorithm described later. Here the parameter $n$ measures the input length to the algorithm.

\begin{enumerate}
\item[(D0)] We can sample from $\Omega$ in $O(\log^4 n)$ time and $\poly(n)$ processors.
\item[(D1)] For any $B \in \mathcal B$, we can sample from $\Gamma_B$ in $O(\log^3 n)$ time and $\poly(n)$ processors. 
\item[(D2)] For $r \in R$ and $u \in U$, we can compute $r u$ in $O(\log^3 n)$ time and $\poly(n)$ processors.
\item[(D3)] For $r, r' \in R$, we can compute $r r'$ in $O(\log^2 n)$ time and $\poly(n)$ processors.
\end{enumerate}

For atomically-generated probability spaces, these properties can themselves be simplified:
\begin{proposition}
  \label{simpb1b2a}
Suppose that $\mathcal B \subseteq \oa$, such that every bad-event $B \in \mathcal B$ is given by $B = \langle E \rangle$ for some stable set $E \subseteq \mathcal A$ with $|E| \leq \poly(n)$. Suppose that $\mathcal A$ satisfies property (D3) as well as the the following property (D1'):
  \begin{enumerate}
\item[(D1')] For any $A \in \mathcal A$ and stable set $E \subseteq \mathcal A$ with $A \not \sim E$ and $|E| \leq \poly(n)$, we can sample from $\Gamma_{A; E}$ in $O(\log^3 n)$ time and $\poly(n)$ processors.
  \end{enumerate}

Then $\mathcal B$ satisfies property (D1).
\end{proposition}
\begin{proof} 
Let $B = \langle E \rangle$ for some stable set $E = \{ A_1, \dots, A_k \}$ with $k \leq \poly(n)$. To draw $r \approx \Gamma_C$, we first use (D1') to sample independent variables $r_1, \dots, r_k$ wherein each $r_i$ drawn from $\Gamma_{A_i; \{A_{i+1}, \dots, A_k \}}$. We then use (D3) to compute $r = r_k \cdots r_1$ in $O(\log k \times \log^2 n) = O(\log^3 n)$ time.
\end{proof}

We say that a resampling space is \emph{amenable} if it satisfies the following computational conditions:
\begin{itemize}
\item It satisfies properties (C3)--(C4).
\item The monoid $R$ satisfies properties (D0)--(D3).
\item It has has a BEC running in $O(\log^3 n)$ time and $\poly(n)$ processors.
\end{itemize}

 We will later describe a parallel algorithm for such spaces. Note that, even without these properties, the resampling-space may still be be useful for a sequential algorithm or a combinatorial existence proof. Also, note that the third condition is satisfied if $m \leq \poly(n)$ and we can efficiently check each bad-event in $O(\log^3 n)$ time.

\subsection{Cartesian products}
Another useful method for constructing resampling-spaces comes from a cartesian product construction.   Consider resampling-spaces $\mathcal C_i, R_i, U_i, \Omega_i, \sim_i$ for $i = 1, \dots, s$.  We define a new resampling-space $\mathcal C = \mathcal C_1 \times \dots \times \mathcal C_s$ as follows. The underlying space is $U = U_1 \times \dots \times U_s$ and  $\Omega$ is the corresponding product distribution. The monoid $R$ is the cartesian product $R_1 \times \dots \times R_s$, with the natural monoid act on $U$. The events in $\mathcal C$ are those of the form $C_1 \times  \dots \times C_s$, where $C_i \in \mathcal C_i$. For such an event $C$, we define $\Gamma_{C}$ to be the probability distribution on tuples $(r_1, \dots, r_s)$, wherein $r_1, \dots, r_s$ are independent, and $r_i$ is drawn from $\Gamma_{C_i}$ in resampling-space $\mathcal C_i$. The relation $\sim$ on $\mathcal C$ is defined by $(C_1, \dots, C_s) \sim (C'_1, \dots, C'_s)$ if there is an index $i \in \{1, \dots, s \}$ where $C_i \sim_i C'_i$.

The following is immediate from the definitions:
\begin{observation}
  \label{cart-obs}
  If $\mathcal C_1, \dots, \mathcal C_s$ are oblivious resampling-spaces, then so is $\mathcal C$.

  If in addition $\mathcal C_1, \dots, \mathcal C_s$ are  commutative, then so is $\mathcal C$.
  
  If in addition $s \leq \poly(n)$ and $\mathcal C_1, \dots, \mathcal C_s$ satisfy properties (D0)--(D3), then so does $\mathcal C$.  
\end{observation}

As an example, the permutation LLL as defined in \cite{harris-srin-perm} allows selection of $s$ permutations $\pi_1, \dots, \pi_s$, wherein each $\pi_i$ is drawn independently and uniformly from some $S_{n_i}$, and a bad-event has the form $\pi_{i_1}(x_1) = y_1 \wedge \dots \wedge \pi_{i_k}(x_k) = y_k$. This can be modeled as the cartesian product of the uniform distributions on $S_{n_1}, \dots, S_{n_s}$. Therefore, the resampling action defined by the uniform distribution on $S_n$ immediately gives a corresponding resampling action for the permutation LLL.

\section{LFMIS for directed graphs}
\label{lfmis-sec}
Before we describe the parallel LLLL algorithm, we need an important graph-theoretic subroutine: the \emph{LFMIS for directed graphs}. This plays a similar role for our LLLL algorithm as the MIS does for the parallel MT algorithm. By itself, the LFMIS has little connection to the LLLL, and may be of independent combinatorial and algorithmic interest.

For an undirected graph $G$, an \emph{independent set} of $G$ is a vertex set $S$ where no two vertices in $S$ are adjacent in $G$. A \emph{maximal independent set} (MIS) has the additional property that no $T \supsetneq S$ is an independent set of $G$. There is a trivial sequential algorithm to find an MIS of $G$ by adding vertices one-by-one to $S$. The MIS produced by this sequential algorithm is referred to as the \emph{lexicographically first MIS (LFMIS)}.

With a slight abuse of terminology, we can extend the definition of LFMIS to a directed graph $G = (V,E)$.  Formally, we define the LFMIS of $G$ with respect to a permutation $\pi: [n] \rightarrow V$ to be the vertex set $I$ produced by the following sequential process:
\begin{algorithm}[H]
\centering
\begin{algorithmic}[1]
\State Initialize $I = \emptyset$ and $A = V$ \Comment{$A = $ alive vertices}
\For{$i = 1, \dots, n$}
\State \textbf{if} $\pi(i) \in A$ \textbf{then} Update $I \leftarrow I \cup \{ \pi(i) \}, A \leftarrow A - N^{\text{out}}(\pi(i))$
\EndFor
\end{algorithmic}
\caption{The sequential algorithm to find the LFMIS of directed graph $G$.}
\label{seq-lfmis1}
\end{algorithm}

An undirected graph $G$ can be viewed as a directed graph $G'$, where  every edge $(u,v) \in G$ corresponds to two directed edges $(u, v), (v, u) \in G'$. The LFMIS (in the usual sense) of $G$ is then identical to the directed LFMIS of $G'$.

The LFMIS problem for undirected graphs is P-complete in general \cite{lfmis-p}. However, Blelloch, Fineman, Shun  \cite{greedy2} described a simple parallel greedy algorithm to find the LFMIS of an undirected graph, when $\pi$ is chosen uniformly at random.  The algorithm can also be used for directed graphs. We summarize it as follows, where we define $P^{\pi}(v)$ for a vertex $v$ to be the set of vertices $w$ with $\pi^{-1}(w) < \pi^{-1}(v)$.
\begin{algorithm}[H]
\centering
\begin{algorithmic}[1]
\State Initialize $I = \emptyset$ and $A = V$
\While{$A \neq \emptyset$}
\State Let $J$ be the set of nodes $v \in A$ such that $A \cap N^{\text{in}}(v) \cap P^{\pi}(v) = \emptyset$
\State Update $I \leftarrow I \cup J, A \leftarrow A - J - \bigcup_{v \in J} N^{\text{out}}( v )$
\EndWhile
\end{algorithmic}
\caption{The parallel greedy algorithm to find the LFMIS of directed graph $G$.
\label{algo:greedylfmis}}
\end{algorithm}

This can be viewed as a parallel algorithm, where each iteration of identifying the residual source nodes $J$ and adding them to $I$, can be implemented in  $O(\log n)$ time and $O(m+n)$ processors. Alternatively, it can be viewed as a distributed algorithm, where each iteration requires $O(1)$ distributed communication rounds on $G$.  We get the following main result to analyze Algorithm~\ref{algo:greedylfmis}.

\begin{theorem}
  \label{greedy-th0}
  Algorithm~\ref{algo:greedylfmis} produces the LFMIS of $G$ with respect to $\pi$.  When $\pi$ is chosen uniformly at random, then Algorithm~\ref{algo:greedylfmis} terminates in $O(\log^2 n)$ rounds whp. In particular, Algorithm~\ref{algo:greedylfmis} runs in $\tilde O(\log^3 n)$ time on an EREW PRAM whp.
\end{theorem}

The analysis is very similar to the proof given in \cite{greedy2}, which showed that the (undirected) degrees are rapidly reduced when $G$ is an undirected graph. We defer the full proof of Theorem~\ref{greedy-th0} to Appendix~\ref{lfmis-app}, which shows a slightly stronger result. Note that Fischer \& Noever \cite{fischer} later showed that Algorithm~\ref{algo:greedylfmis} terminates in $O(\log n)$ rounds whp for undirected graphs. We conjecture that it should be possible to improve our analysis and show that Algorithm~\ref{algo:greedylfmis} runs in  $O(\log n)$ rounds whp on directed graphs as well.

\section{A generic parallel resampling algorithm}
\label{alg-sec}
We are now ready to describe our parallel algorithm for an amenable resampling-space. We recall that throughout, the parameter $n$ represents the description size of a configuration, such that a state $u$ is encoded in $\poly(n)$ bits. Correspondingly, our goal for an RNC algorithm is to achieve $\text{polylog}(n)$ runtime, $\poly(n)$ processors, and success probability $1 - n^{-\Omega(1)}$.

\begin{algorithm}[H]
\centering
\caption{The parallel LLLL algorithm}
\label{par-alg1}
\begin{algorithmic}[1]
\State Draw $u$ from the distribution $\Omega$
\While{there are true bad-events}:
\State Let $V$ denote the set of bad-events which are currently true
\State For each $B \in V$, independently draw a random variable $r_B \approx \Gamma_B$
\State Construct the directed graph $G$, whose vertex set is $V$, and whose directed edge set is

\vspace{-0.1in}
$$
E = \{ (B, B') \mid \text{$B \sim B'$ or $r_B \notin R_{B; B'}$} \}
$$
\State Find the LFMIS $I$ of $G$ with respect to a random permutation $\pi$
\State Sort $I = \{B_{i_1}, B_{i_2}, \dots, B_{i_s} \}$, where  $\pi(B_{i_1}) < \pi(B_{i_2}) < \dots < \pi(B_{i_s})$
\State Update the state as $
u \leftarrow r_{B_{i_s}} r_{B_{i_{s-1}}} \cdots r_{B_{i_1}} u
$
\EndWhile
\end{algorithmic}
\end{algorithm}

Clearly, if Algorithm~\ref{par-alg1} terminates, then all the bad-events in $\mathcal B$ are false on $u$. For maximum generality, we analyze Algorithm~\ref{par-alg1} in terms of two parameters $W, \epsilon$  from the Shearer LLLL criterion; see Appendix~\ref{app-kolm} for a precise definition. Theorem~\ref{equiv-thms} gives a few simpler LLL criteria, including the symmetric, asymmetric, and cluster-expansion criteria. For most applications, $W \leq \poly(n)$ and $\epsilon \geq \Omega(1)$. Our main result will be the following:
\begin{theorem}
  \label{par-bound2}
  Let $\mathcal B$ be an amenable resampling-space. If the Shearer criterion is satisfied with parameters $\epsilon, W$, then Algorithm~\ref{par-alg1} runs in $O( \frac{ \log^4(n + W \epsilon)}{\epsilon} )$   time and $\poly(n, W)$ processors whp.
\end{theorem}

For most applications, we can use a simplified corollary:
\begin{corollary}
  \label{par-bound4}
  Let $\mathcal B$ be an amenable resampling-space which satisfies the symmetric LLL criterion $ e p d (1 + \epsilon) < 1$.  Then Algorithm~\ref{par-alg1} runs in $O( \frac{\log^4(mn)}{\epsilon})$ time and $\poly(m, n)$ processors whp.
  \end{corollary}

Some probability spaces have convergence and distributional properties which go beyond the generic bounds such as Shearer's criterion \cite{harris-llll,harris-mt-dist,ilio}. Since Algorithm~\ref{par-alg1} can be viewed as a simulation of the sequential algorithm, all such bounds apply equally to it. We will see some examples in the next section with analysis of the variable-assignment LLLL.

We now turn to proving Theorem~\ref{par-bound2}. We assume throughout that $\mathcal B$ is amenable. We refer to each iteration of the main loop of Algorithm~\ref{par-alg1} (lines (3) -- (8)) as a \emph{round}. We use $V_t, I_t, \pi_t$, etc to denote the quantities corresponding to round $t$, and also define $b_t = |V_t|$.   We first observe that a single round can be implemented efficiently.
\begin{proposition}
  \label{run-prop1}
Each round of Algorithm~\ref{par-alg1} can be implemented using $\poly(b_t, n)$ processors and $O(\log^3 (b_t n))$ time whp. 
\end{proposition}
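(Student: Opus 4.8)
The plan is to walk through lines (3)--(8) of Algorithm~\ref{par-alg1} one at a time, bounding the cost of each, and observe that apart from the single call to the Bad-Event Checker everything is polylogarithmic. Write $u_t$ for the state at the start of round $t$ and $V_t$ for the set of bad-events true on $u_t$. The one structural fact I would pull out first is that, by definition of $V_t$, we have $u_t \in B$ for every $B \in V_t$, and hence $u_t \in B \cap B'$ for every ordered pair $B, B' \in V_t$. This is exactly what will let us replace the a priori ``global'' graph-construction questions with queries evaluated at the single known state $u_t$.

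For lines (3)--(4): computing $V_t$ is one invocation of the BEC, costing time $T$ and $\poly(n)$ processors. Given $V_t$, we draw the variables $y_B$ for $B \in V_t$ in parallel; by (D1) each draw uses $O(\log^3 n)$ time and $\poly(n)$ processors, so all $|V_t|$ of them together use $O(\log^3 n)$ time and $\poly(|V_t| n)$ processors. For line (5), building $G$: there are $O(|V_t|^2)$ ordered pairs $(B,B')$, handled in parallel. For each pair we decide whether $B \sim B'$ (read off from the dependency relation / atomic structure, in $\poly(n)$ time) and, if $B \not\sim B'$, whether $y_B \in Y_{B;B'}$. Here obliviousness is essential: by (C4) the truth of ``$y_B u \in B'$'' is the same for every $u \in B \cap B'$, and since $u_t \in B \cap B'$ we may test it at $u_t$, i.e.\ $y_B \in Y_{B;B'}$ iff $y_B u_t \in B'$. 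We compute $w := y_B u_t$ by a single application of the resampling map --- property (D2) with $s=1$ --- in $O(\log^3 n)$ time and $\poly(n)$ processors, and then test whether $B'$ holds on $w$ (at worst another BEC call). Running this for all $O(|V_t|^2)$ pairs at once costs $O(T + \log^3 n)$ time and $\poly(|V_t| n)$ processors. I would stress that obliviousness is precisely what makes this step local and finite: without (C4) the set $Y_{B;B'}$ is not even well defined, and ``does resampling $B$ with action $y_B$ destroy $B'$?'' would be a question about all of $B \cap B'$ rather than about $u_t$ alone.

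For lines (6)--(8): the LFMIS of the directed graph $G$ with respect to the random permutation $\pi$ is computed by the parallel subroutine of Section~\ref{lfmis-sec} (generalizing the analysis of \cite{greedy2} to directed graphs), which runs in $\polylog(|V_t|)$ time and $\poly(|V_t|)$ processors whp; this is the only source of the ``whp'' qualifier and the only genuinely nontrivial subroutine in the round --- I regard it as the main obstacle, but it is deferred to its own section and used here as a black box. Sorting $I$ by $\pi$-value (line (7)) is a parallel sort of at most $|V_t|$ keys in $O(\log |V_t|)$ time. For line (8): because $I$ is an independent set of $G$ and $G$ contains the edges $(B,B')$ and $(B',B)$ whenever $B \sim B'$, the events $B_{i_1}, \dots, B_{i_s}$ listed in $I$ are pairwise non-lopsidependent, so (D2) applies and the composition $y_{B_{i_s}} \cdots y_{B_{i_1}} u_t$ is computed in $O(\log^3(|V_t| n))$ time and $\poly(|V_t|, n)$ processors.

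Summing the contributions, a round runs in $O(T + \log^3(|V_t| n))$ time and $\poly(|V_t| n)$ processors whp; the $\log^3(|V_t| n)$ overhead beyond the unavoidable BEC call (whose cost $T$ is in any case carried explicitly through Theorem~\ref{par-bound2}) is what the proposition asserts. The only points requiring real care are (i) the reduction of the membership test $y_B \in Y_{B;B'}$ to the single query $y_B u_t \in B'$ via obliviousness, and (ii) verifying that the elements of $I$ are pairwise non-lopsidependent so that (D2) is legitimately invoked in line (8); the rest is routine parallel bookkeeping together with the cited LFMIS algorithm.
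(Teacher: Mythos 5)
Your proof is correct and follows essentially the same route as the paper's: determine $V_t$ via the BEC, draw the $y_B$ via (D1), build $G$ by using obliviousness (C4) to reduce the test $y_B \in Y_{B;B'}$ to the single query ``is $B'$ true on $y_B u_t$'' (valid since $u_t \in B \cap B'$), find the LFMIS via Theorem~\ref{greedy-th0}, and compose the selected resampling actions via (D2). Your write-up is more detailed than the paper's (e.g.\ in checking that the elements of $I$ are pairwise non-lopsidependent before invoking (D2)), but the substance is identical.
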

\begin{proof}
Since $\mathcal B$ is amenable, we can determine the set $V_t$ using our BEC in $O(\log^3 n)$ time. 

By  (D1), we can draw the random variables $r_B$ in time $O(\log b_t + \log^3 n)$.  In light of  (C4), we can efficiently check if $r_B \in R_{B;B'}$, by computing $r_B u$ and testing if $r_B u \in R_{B'}$.

By Theorem~\ref{greedy-th0}, we can find $I$ in time $O(\log^3(b_t n))$ and $\poly(b_t, n)$ processors whp. 

To implement step (7), we use use the associativity of monoid multiplication to compute the product $r_{B_{i_s}} \cdots r_{B_{i_1}}$ in $\lceil \log_2 s \rceil$ rounds of pairwise multiplications. By (D3), each round takes $O(\log^2 n)$ time. Noting that $s \leq b_t$, this gives a total of $O(\log^3 (b_t n))$ time and $\poly(b_t, n)$ processors. Once this product is computed, we can use (D2)  to compute $r_{B_{i_s}} \cdots r_{B_{i_1}} u$.
  \end{proof}

Thus, our main task is to show that Algorithm~\ref{par-alg1} terminates after a small number of rounds. We do so by coupling it to a sequential resampling algorithm, Algorithm~\ref{seq-alg1}.
\begin{algorithm}[H]
\centering
\caption{A sequential resampling algorithm}
\label{seq-alg1}
\begin{algorithmic}[1]
\State Draw $u$ from the distribution $\Omega$
\While{there are true bad-events} for $t = 1, 2, \dots $:
\State Let $V_t = $ set of bad-events true on $u$, and initialize $A = V_t$ \Comment{$A = $ alive events}
\State For each $B \in V_t$, draw a random variable $r_B \approx \Gamma_B$
\State Select a random ordering of $V_t$ as $V_t = \{B_1, \dots, B_k \}$.
\For{$k = 1, \dots, |V_t|$}
\If{$B_k \in A$}
\State Update $u \leftarrow r_{B_k} u$
\State \textbf{for}{ any $B' \in V_t$ with either (i) $B'$ is false on $u$ or (ii) $B' \sim B_k$} \textbf{do} $A \leftarrow A - \{ B' \}$
\EndIf
\EndFor
\EndWhile
\end{algorithmic}
\end{algorithm}

By the principle of deferred decisions, there is no difference in selecting the random variable $r_B$ in a ``preprocessed'' way (as in line (4) of Algorithm~\ref{seq-alg1}), as opposed to in ``online'' way as in Algorithm~\ref{gen-alg}. Thus, line (8) of Algorithm~\ref{seq-alg1} is equivalent to executing the resampling oracle $\mathfrak R_B(u)$ and so Algorithm~\ref{seq-alg1} can be viewed as a version of Kolmogorov's algorithm (Algorithm~\ref{kolm-alg}).

For Algorithm~\ref{seq-alg1},  define $\pi_t$ to be the chosen ordering of $V_t$, i.e. the map sending $i$ to $B_i$ in $V_t$. Also define $I'_t$ to be the set of events resampled in round $t$, i.e. the events $B_k$ such that $B_k \in A$ at iteration $k$ of line (7). The following result shows the equivalence between Algorithm~\ref{seq-alg1} and Algorithm~\ref{par-alg1}:
\begin{proposition}
\label{couple-prop2}
If the random variables $\pi, u, r$ are all fixed at the beginning of round $t$ and $I, I'$ are the LFMIS produced for Algorithms \ref{par-alg1} and \ref{seq-alg1} respectively for round $t$, then $I = I'$.
\end{proposition}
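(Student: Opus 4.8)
The plan is to show that the two LFMIS constructions are driven by the same data and the same selection rule, so they must coincide. Both Algorithm~\ref{par-alg1} and Algorithm~\ref{seq-alg1} begin round $t$ with the same state $u$, the same true-set $V_t$, the same random variables $(y_B)_{B \in V_t}$, and the same random ordering $\pi$. In Algorithm~\ref{par-alg1}, $I$ is by definition the LFMIS of the directed graph $G$ with edge set $\{(B,B') \mid B \sim B' \text{ or } y_B \notin Y_{B;B'}\}$, produced by scanning $B$ in $\pi$-order and adding $B$ to $I$ unless some already-chosen $w$ has an edge to $B$. In Algorithm~\ref{seq-alg1}, $I'$ is built by the same kind of scan: process $k = 1, \dots, |V_t|$, and when $\pi(k)$ is still marked alive, add it to $I'$ and resample; then kill every $B$ with $B \sim \pi(k)$ or with $B$ now false. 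So the only thing to check is that ``$\pi(k)$ is still alive at step $k$'' in Algorithm~\ref{seq-alg1} is exactly the same condition as ``no previously chosen $w \in I$ has a $G$-edge to $\pi(k)$'' in Algorithm~\ref{par-alg1}.

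First I would argue this by induction on $k$, maintaining the invariant that after processing the first $k$ indices the two algorithms have selected the same set and, moreover, in Algorithm~\ref{seq-alg1} the set of ``dead'' elements among $\{\pi(1), \dots, \pi(k)\}$ is exactly $\{\pi(j) : j \le k,\ \exists w \in I' \cap \{\pi(1),\dots,\pi(j-1)\} \text{ with } (w, \pi(j)) \in E(G)\}$. The key point is the characterization of the kill condition. An element $B = \pi(k)$ is killed in Algorithm~\ref{seq-alg1} at some earlier step $j < k$ (where $w = \pi(j)$ was selected into $I'$) either because $B \sim w$ — which is precisely the first type of $G$-edge $(w, B)$ — or because $B$ became false when $u$ was updated to $y_w u$. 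Here I invoke obliviousness. By the inductive hypothesis, the state fed into the resampling of $w$ lies in $w \cap B$ (it is still in $B$ since $B$ is alive and hence true, and in $w$ since $w$ was just selected). By property (C4) and the definition of $Y_{B;B'}$, whether $y_w u$ lands in $B$ is determined entirely by whether $y_w \in Y_{w;B}$ — it does not depend on the particular state $u \in w \cap B$. Hence $B$ is killed at step $j$ exactly when $y_w \notin Y_{w;B}$, which is precisely the second type of $G$-edge $(w, B)$. Combining the two cases, $B = \pi(k)$ is dead in Algorithm~\ref{seq-alg1} by step $k$ iff some $w \in I' \cap \{\pi(1), \dots, \pi(k-1)\}$ has a $G$-edge to $B$, iff (by the inductive equality $I' = I$ so far) $\pi(k)$ is excluded from $I$ by the LFMIS scan. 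This closes the induction and gives $I = I'$.

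I expect the main obstacle to be pinning down exactly which state is being resampled when an element is killed, and verifying it lies in the intersection $w \cap B$ so that (C4) applies — in particular, handling the subtlety that in Algorithm~\ref{seq-alg1} the state $u$ is mutated in place across the inner loop, whereas the $G$-edges in Algorithm~\ref{par-alg1} are defined once at the start of the round using the initial $u$. Obliviousness is precisely what dissolves this: because membership of $y_w u'$ in $B$ depends only on $y_w$ (not on which $u' \in w \cap B$ is current), the edge set computed from the initial state is the same as the one that would govern the sequential kills at the moment they occur. I would also note the minor bookkeeping point that a $B$ already killed at step $j$ can never be selected, so later potential kills of $B$ are irrelevant — only the first time a $G$-in-neighbor of $B$ is selected matters, which is exactly the LFMIS semantics. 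Once these points are in place the proposition follows.
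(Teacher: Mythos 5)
Your proposal is correct and follows essentially the same route as the paper: an induction along the $\pi$-order showing the two selection rules coincide, with the key step being that (C4) makes ``$B$ becomes false when $w$ is resampled'' equivalent to ``$y_w \notin Y_{w;B}$'' regardless of which state in $w \cap B$ is current, so the edge set computed from the initial state correctly predicts the sequential kills. The paper phrases the induction as a two-direction contradiction argument (using minimality of the killing index where you use ``only the first kill matters''), but the content is the same.
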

\begin{proof}
Let $u_j$ denote the state after iteration $j$ of round $t$ (and $u_0$ is the state at the beginning of round $t$). We have $V_t$ enumerated as $\{ B_1, B_2, \dots, B_k \}$ where $\pi(B_1) < \pi(B_2) < \dots < \pi(B_k)$, and we write $r_i$ as shorthand for $r_{B_i}$.

With this notation, observe that $B_j \in I$ iff there is no $i < j$ with $B_i \in I$ and either (a) $B_i \sim B_j$ or (b) $r_i \in R_{i} - R_{B_i;B_j}$. Similarly, $B_j \in I'$ iff there is no $i < j$ with $B_i \in I'$ and either (a) $B_i \sim B_j$ or (b) $B_j$ is false on $u_i$.  For contradiction, say that $j$ is minimal such that the membership of $B_j$ differs in $I$ and $I'$.

Suppose that $B_j \in I' - I$. Since $B_j \notin I$, there must be some $i < j$ with $B_i \in I$ such that $B_i \sim B_j$ or $r_i \notin R_{B_i; B_j}$. In the former case, by our induction hypothesis $B_i \in I'$ and this would contradict that $B_j \in I'$. In the latter case, note that since $B_j \in I'$, it must be that $B_j$ is true on $u_i$ and $u_{i-1}$ and $B_i$ is true on $u_{i-1}$. Thus, $u_{i-1} \in B_i \cap B_j$ and $u_i = r_{i} u_{i-1} \in B_j$. So  $r_i \in R_{B_i; B_j}$, a contradiction.

Next, suppose that $B_j \in I - I'$. Since $B_j \notin I'$, there must be some $i < j$ with $B_i \in I'$ such that $B_i \sim B_j$ or $B_j$ is false on $u_i$. Let $i$ be minimal subject to these conditions. In the former case, by induction hypothesis $B_i \in I$; in the latter case, by minimality of $i$, it must be that $B_j$ becomes false after resampling $B_i$, and so $B_i \in I$. In either case, we have $B_i \in I$. So $u_{i-1} \in B_i \cap B_j$ and $u_i = r_i u_{i-1} \notin B_j$, implying that $r_i \in R_{B_i} - R_{B_i; B_j}$. Thus $G$ has an edge $(B_i, B_j)$, contradicting that $B_j \in I$.
\end{proof}

They key property we need to analyze Algorithm~\ref{seq-alg1} is the following:
\begin{lemma}
\label{height-lemma}
If $B \in V_t$ for $t \geq 2$, then $\overline N(B) \cap I'_{t-1} \neq \emptyset$.
\end{lemma}
\begin{proof}
In the execution of Algorithm~\ref{seq-alg1}, let $T_i$ denote the total number of resamplings before round $i$ (so $T_1 = 0$), and note that $u^{T_i}$ is the state immediately at the beginning of round $i$.
  
By definition, $B$ must be true on $u^{T_t}$.  Either $B$ is true at time $T_{t-1}$ or $B \sim B' \in I'_{t-1}$; otherwise, by property (C4), $B$ would remain false after all the resamplings in round $t-1$. 

If $B \sim B' \in I'_{t-1}$ or $B \in I'_{t-1}$ we are done.  Otherwise, suppose $B \in V_{t-1} - I'_{t-1}$. This can only be the case if $B$ was marked as dead in round $t-1$. Suppose this occurs at time $i$, during the resampling of some $B' \in I'_{t-1}$. If $B \sim B'$, we are done.

Otherwise, suppose that $B$ is false on $u^{i}$. Since $B$ is true at the beginning of round $t$, by (C4) there must be some $B''$ resampled between times $i$ and $T_t$ with $B'' \sim B$, i.e. $B'' \in \overline N(B) \cap I'_{t-1}$.
\end{proof}

Lemma~\ref{height-lemma} in combination with analysis of Kolmogorov \cite{kolmogorov} shows that Algorithm~\ref{par-alg1} terminates in a small (polylogarithmic) number of rounds.  There is also a ``random-like'' distribution of the states during intermediate stages of the parallel LLLL algorithm. In all, we get the following bound:
\begin{lemma}
\label{length-bound}
Whp, Algorithm~\ref{par-alg1} terminates after $O(\frac{\log (n + W \epsilon)}{\epsilon})$ rounds and $\sum_t b_t \leq O(W \poly(n))$.
\end{lemma}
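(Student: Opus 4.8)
\textbf{Proof proposal for Lemma~\ref{length-bound}.}

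The plan is to reduce the analysis of Algorithm~\ref{seq-alg1} to the standard LLLL witness-tree (or witness-DAG) machinery, using Lemma~\ref{height-lemma} to control the ``depth'' at which any bad-event can appear. First I would recall that, by the discussion preceding Proposition~\ref{couple-prop2}, Algorithm~\ref{seq-alg1} is an instance of Kolmogorov's Algorithm~\ref{kolm-alg}, which in turn is an instance of the generic commutative resampling Algorithm~\ref{gen-alg}; so the total number of resamplings it performs can be analyzed by the Shearer/Kolmogorov machinery (see Appendix~\ref{app-kolm}), which bounds the expected number of resamplings of any fixed $B$ by the corresponding Shearer coefficient, and bounds $\sum_{\text{all resamplings}} 1$ by $O(W)$ up to $\poly(n)$ factors. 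In particular $\sum_t |V_t| \le \sum_t |I'_t| \cdot (\text{max degree}) \le O(W\poly(n))$ once we know that every element of $V_t$ is ``charged'' to an element of $I'_{t-1}$ in its closed neighborhood --- which is exactly the content of Lemma~\ref{height-lemma}. Actually one must be a little careful: $|V_t|$ counts true bad-events, not resampled ones, so the bound $\sum_t|V_t|\le O(W\poly(n))$ should be derived by observing that $V_t$ injects (via Lemma~\ref{height-lemma}) into $\{(B,B') : B\in I'_{t-1},\ B'\sim B \text{ or } B'=B\}$, hence $|V_t|\le (d{+}1)|I'_{t-1}|$ where $d$ is the lopsidependency degree (or the analogous quantity in the Shearer setting), and then $\sum_t |I'_t|$ is at most the total number of resamplings, which is $O(W\poly(n))$ whp by the Shearer-criterion analysis with $\epsilon$-slack.

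For the round bound, the key idea is that Lemma~\ref{height-lemma} says: if $A\in V_t$ then $A$ is in the closed neighborhood of $I'_{t-1}$. Iterating, a bad-event appearing in round $t$ is connected, through a chain $A=A_t \rightsquigarrow A_{t-1} \rightsquigarrow \cdots \rightsquigarrow A_1$ with $A_i$ in the closed neighborhood of $I'_{i-1}$ and $A_i$ itself resampled (or adjacent to something resampled) in round $i-1$, back to round $1$. This gives a ``witness structure'' of depth $t$ rooted at $A$, of the same flavor as the witness trees/forests used to analyze the parallel MT algorithm in \cite{hh, kolipaka}. Standard counting then shows that the probability that some depth-$t$ witness structure occurs is at most (sum over root $A$ of the Shearer weight of a depth-$t$ structure rooted at $A$), which decays geometrically in $t$ at rate governed by the $\epsilon$-slack: concretely, after $t = O(\epsilon^{-1}\log(n + W\epsilon))$ rounds this probability is $n^{-\Omega(1)}$, so whp there is no round $t+1$, i.e. the algorithm has terminated. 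Here the $\log(n+W\epsilon)$ rather than $\log W$ appears because we need the failure probability to beat $n^{-\Omega(1)}$ and the number of potential roots/structures is $\poly(n, W)$.

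The main obstacle I anticipate is setting up the witness structure and its probabilistic weight so that (a) it correctly encodes the \emph{sequential-within-a-round} resampling order (since within a round the $y_B$'s are committed in advance but applied in the $\pi$-order, and an event can be ``killed'' by an earlier resample --- this is exactly the subtlety Proposition~\ref{couple-prop2} handles), and (b) its weight is bounded by the Shearer coefficient in the standard way so that the geometric decay with rate $\epsilon$ goes through. Concretely, one has to argue that conditioning on the structure being ``realized'' is equivalent to a product of independent resampling events each of probability $\Omega[B]$ (using (C1), the probability-regeneration property, and (C4)/(C3') so that the within-round conditioning on downstream events staying true does not distort the distribution) --- this is where obliviousness is doing real work, and where I would expect the bulk of the technical care to go. Everything after that is the usual union bound over $\poly(n,W)$ candidate roots and structure shapes, exactly as in the cited parallel-MT analyses, so I would cite \cite{kolmogorov, hh, kolipaka} for the combinatorial tail estimate and only verify that Lemma~\ref{height-lemma} plugs into their framework with the stated parameters.
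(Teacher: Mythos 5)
Your overall strategy matches the paper's: couple the parallel algorithm to the sequential Algorithm~\ref{seq-alg1}, use Lemma~\ref{height-lemma} to build a depth-$t$ witness structure rooted at each currently-true bad-event (the paper uses the stable-set sequences $\hat S(A,i)$ of Kolipaka--Szegedy), invoke the commutative-oracle bound that any fixed such structure appears with probability at most its weight $w(S)$ (Proposition~\ref{k-prop0}, taken as a black box from Iliopoulos), and then use the $\epsilon$-slack to show the expected number of appearing singleton sequences of depth at least $t/2$ is at most $(1+\epsilon/3)^{-t/2}W$, so Markov's inequality rules out round $t = c\epsilon^{-1}\log(n+W\epsilon)$. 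Your closing worry about within-round conditioning and obliviousness is already discharged by an observation you make yourself earlier: Algorithm~\ref{seq-alg1} \emph{is} an instance of the sequential Algorithm~\ref{gen-alg}, so the appearance-probability bound applies verbatim; no fresh distributional argument is needed inside this lemma (obliviousness does its work in Lemma~\ref{height-lemma} and Proposition~\ref{couple-prop2}, which you correctly take as given). Also note the paper needs no union bound over structure shapes: it applies Markov directly to the expected count of appearing deep sequences.

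The one step that does not deliver the stated bound is your estimate of $\sum_t|V_t|$. You bound $|V_t|\le (d+1)\,|I'_{t-1}|$ and then bound the total number of resamplings by $O(W\poly(n))$, which yields only $\sum_t|V_t|\le O(d\,W\poly(n))$. In this framework the dependency degree $d$, like the number of bad-events $m$, need not be $\poly(n)$ (the paper explicitly entertains applications with superpolynomially many bad-events, e.g.\ the $s$-bounded transversals and the hypergraph-packing result), so this is genuinely weaker than the claimed $O(W\poly(n))$ and would inflate the processor bound in Theorem~\ref{par-bound2}. The paper avoids the degree factor by counting differently: each $B\in V_i$ (not each $B\in I'_i$) indexes its \emph{own} distinct appearing singleton stable-set sequence $\hat S(B,i)$, distinctness following from distinct roots (for fixed $i$) or distinct depths (for fixed $B$); hence $\sum_i|V_i|$ is at most the total number of appearing sequences, whose expectation is at most $\sum_{S\in\Gamma}w(S)\le W$, and Markov gives $W\poly(n)$ with no dependence on $d$. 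This is a small repair, but it is exactly why the lemma can be stated with $W\poly(n)$ rather than $dW\poly(n)$.
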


The proof of Lemma~\ref{length-bound} requires significant background and a number of preliminary definitions, so we defer it to Appendix~\ref{app-kolm}.

Now let $s = O(\frac{\log (n + W \epsilon)}{\epsilon})$ denote the total number of rounds in Algorithm~\ref{par-alg1}. Proposition~\ref{run-prop1} shows that each round $t$ uses $O(\log^3 (b_t n))$ time and $\poly(b_t, n)$ processors whp. Property (D0) allows us to implement step (1) in $O(\log^4 n)$ time. Thus the overall runtime of Algorithm~\ref{par-alg1} is at most $O(\log^4 n + s \log^3 n + \sum_{t=1}^{s} \log^3 (b_t))$. By concavity, we have $  \sum_{t=1}^{s} \log^3 b_t \leq s \log^3( 1 + \sum_{t=1}^{s} b_t/s )$. By Lemma~\ref{length-bound}, we have $\sum_t b_t \leq O(W \poly(n))$ whp. Thus, the time complexity here is as most $O(s \log^3( n + W \epsilon))$ and the processor count is at most $\poly(n,W)$.

This shows Theorem~\ref{par-bound2}.  Corollary~\ref{par-bound4} follows directly, noting that $W \leq O(m)$.

\section{The variable-assignment LLLL}
\label{var-llll-sec}
The variable-assignment LLLL is one of the most important LLLL probability spaces. Let us set notation and discuss how this fits into our resampling framework. We also discuss a few unique properties of the variable-assignment LLLL as well as some applications.

To begin the construction, we first consider the simplest setting, where the probability space $\Omega$ is defined by \emph{single} variable $X$ over a universe $U$. The generic bad-event set $\mathcal B$ has the tautological event $\top$, as well as an event $B_u \equiv X = u$ for each $u \in U$. We define $\sim$ by setting $B_u \sim B_{u'}$ for $u \neq u'$. (The event $\top$ is not dependent with any others.)

We form $R$ using a construction called the \emph{find-last monoid}. Formally, we define $R = U \cup \{ 1 \}$, where $1$ is an identity element. The binary operation on $R$ is defined as
$$
u' u = \begin{cases}
u' & \text{if $u' \neq 1$} \\
u & \text{if $u' = 1$}
\end{cases}
$$
Note that $U \subseteq R$, with $r u \in U$ for $u \in U$, and so $R$ naturally gives a left $R$-act on $U$.

For event $\top$, we define $\Gamma_{\top}$ to be the value $1$ with probability one.  For an event $B_u$, we define $\Gamma_{B_u}$ to be the distribution $\Omega$. One can easily verify that the resulting resampling oracle $\mathfrak R_{B_u}$ is defined by $\mathfrak R_{B_u}(x) = u'$, where $u'$ is drawn from the distribution $\Omega$, i.e. we resample the variable. It is trivial to verify that this resampling-space satisfies (C3'), (C4), (C5), and (D0)--(D3).

We can get the full variable-assignment LLLL via the cartesian product construction. Namely, the probability space is over $U = D^n$ for some discrete set $D$, and each bad-event $B$ has the form $(B_1, \dots, B_n)$, wherein $B_i$ is either $\top$ or an event  $X_i = j_i$. Equivalently, $B$ can be written as $B \equiv X_{i_1} = j_1 \wedge \dots \wedge X_{i_k} = j_k$. For such an event, we define $\Gamma_B$ as follows: For $r = (r_1, \dots, r_n) \approx \Gamma_B$, the entries $r_1, \dots, r_n$ are all independent, wherein  $r_i = 1$ for $B_i = \top$ and $r_i  \approx \Omega_i$ otherwise.  The resulting oracle $\mathfrak R_B$ is to simply resample the variables $X_{i_1}, \dots, X_{i_k}$. By Observation~\ref{cart-obs}, this resampling-space is again amenable.

This is a very notationally heavy way of describing a very simple probability space and a very simple resampling action. However, it illustrates how our resampling framework gives a non-trivial resampling-space (the full variable-assignment LLLL) by composing a few trivial building-blocks.

\subsection{Alternate LLLL criterion}
In \cite{harris-llll}, Harris described an alternative convergence criterion for the MT algorithm called \emph{orderability}.
 This is defined in terms of a function $\mu: \mathcal B \rightarrow [0,\infty)$; the full formal definitions are technical and are deferred to Appendix~\ref{alt-crit-app}. As our parallel algorithm for the variable-assignment LLLL can be viewed as an implementation of the MT algorithm, the orderability criterion can also be used to analyze Algorithm~\ref{par-alg1}. This gives the following result:
\begin{theorem}
\label{alt-lll-thm}
Let $\mu: \mathcal B \rightarrow [0,\infty)$ satisfy the orderability variable-assignment criterion with $\epsilon$-slack, and let $W = \sum_{B \in \mathcal B} \mu(B)$. If $\mathcal B$ has a BEC using $O(\log^4 n)$ time and $\poly(n)$ processors, then Algorithm~\ref{par-alg1} runs in $ O( \frac{ \log^4 (n + W \epsilon)}{\epsilon} )$   time and $\poly(n, W)$ processors whp.
\end{theorem}

As a example application, we get the following result:
\begin{proposition}
Suppose we have a $k$-SAT instance in $n$ variables, where each variable appears in at most $L =  \frac{2^{k+1} (1 - 1/k)^k}{(k-1)(1 + \epsilon)} - \frac{2}{k}$  clauses. Then there is a parallel algorithm to find a satisfying assignment in $O( \frac{\log^4 n}{\epsilon})$ time using $\poly(n)$ processors whp.
\end{proposition}
\begin{proof}
As shown in \cite[Theorem 4.1]{harris-llll}, the orderability criterion can be satisfied with slack $\epsilon$ satisfied under these conditions using the weighting function $\mu(B) = \frac{1 + \epsilon}{ (2 - 2/k)^k }$ for all $B \in \mathcal B$. Furthermore, $W \leq m \leq n k \leq n^2$ and we can implement a BEC by checking every clause.
\end{proof}

\subsection{Distributed algorithms}
The LOCAL model is a popular model for distributed graph algorithms. Here, in each round, a node in a graph can perform arbitrary computations and has unlimited communication with its neighbors. Distributed LLL algorithms can solve a number of graph problems in this setting, where each vertex $v$ has a set of associated bad-events $\mathcal B_v$ local to $v$, and bad-events in $\mathcal B_v$ and $\mathcal B_u$ are dependent iff the distance from $u$ to $v$ is bounded by some (problem-specific) constant.

As a simple example, consider finding a proper vertex-coloring. For each vertex $v$, we have some bad-events that $v$ chooses the same color as a neighbor $w$. Observe now that $\mathcal B_v$ and $\mathcal B_u$ are dependent iff there is some common vertex $w$, i.e. $\text{dist}(v,w) \leq 2$. See \cite{pettie} for a thorough discussion of this model of computation and applications to a number of graph-coloring problems.

Our parallel algorithm can be easily transformed into a distributed LLLL algorithm: 
\begin{proposition}
  \label{distrib-bound1}
  Suppose that the orderability variable-assignment criterion is satisfied with parameters $W,  \epsilon$. Then there is a distributed LOCAL algorithm to find a variable assignment avoiding $\mathcal B$  in $O( \frac{\log^3 (Wn)}{\epsilon})$ rounds whp. In particular, if $e p d (1+\epsilon) \leq 1$, then this runs in $O( \frac{\log^3(mn)}{\epsilon})$ rounds.
\end{proposition}
\begin{proof}
All of the steps in a round $t$ of Algorithm~\ref{par-alg1}, except the computation of the LFMIS at line (6) and the state update at line (8) can be implemented in $O(1)$ communication rounds. The state update can be done in $O(\log(b_t n))$ rounds and the greedy LFMIS  can be implemented in $O(\log^2 (b_t n))$ rounds whp; note that Algorithm~\ref{par-alg1} only creates an edge between $B, B'$ if $B, B'$ overlap on a variable and so we can simulate the directed graph created in line (5). As shown in Appendix~\ref{alt-crit-app} we have  $b_t \leq W \poly(n)$ whp.
\end{proof}

One application, which is an immediate consequence of LLLL analysis of \cite{harris-llll},  is to proper vertex coloring of a hypergraph:
\begin{proposition}
Let $H$ be a $k$-uniform hypergraph in which each vertex appears in at most $L$ edges. Then there is a randomized LOCAL algorithm in $O( \frac{\log^3 n}{\epsilon})$ rounds to construct a non-monochromatic $c$-coloring of $H$ for $L \leq \frac{c^k (1-1/k)^{k-1}}{k (c-1)(1+\epsilon)}$.
\end{proposition}

\section{Other resampling-spaces}
\label{other-llll-sec}
We now discuss how our resampling framework applies to a few other resampling-spaces, with some applications. The main space discussed here is the uniform distribution on $S_n$. Two others are the uniform distribution on hamiltonian cycles of $K_n$, and the uniform distribution on perfect matchings of the complete hypergraph $K_n^{(s)}$ for $s \geq 2$. The latter two involve very technical algebraic arguments, so we defer the full proofs to Appendices~\ref{hamilton-sec} and~\ref{match-sec2}.

\subsection{Uniform distribution on $S_n$}
\label{perm-lll-sec}
In this setting, we have $U = S_n$, and we use $\pi$ instead of $u$ to represent the system state.  The atomic sets have the form
$$
A = \{ \pi \in S_n \mid \pi(x) = y \}
$$
for some $(x,y) \in [n] \times [n]$; we write this as $A = \langle (x, y) \rangle$. We define $\sim$ on $\mathcal A$ by setting $\langle (x,y) \rangle \sim \langle (x',y') \rangle$ if one of the following two conditions holds: (i) $x = x'$ and $y \neq y'$ or (ii) $x \neq x'$ and $y = y'$. Equivalently, we have $A \sim A'$ iff $\Pr_{\Omega}(A \cap A') = 0$.

We define $R$ to be the symmetric group $S_n$. For any $A  = \langle(x, y)\rangle$, we define $R_A$ to be the set of single-swap permutations of the form $\sigma = (y \ z)$ for $z \in [n]$, and $\Gamma_A$ is the uniform distribution on $R_A$. We define the resampling action as left-multiplication in the obvious way.

\begin{proposition}
  Properties (D0),  (D2) and (D3) hold.
\end{proposition}
\begin{proof}
The monoid operation and monoid act are both composition of permutations, which can easily be done in $O(\log n)$ time. Property (D0) holds using any of the standard ways to generate uniform random permutations.
 \end{proof}

\begin{proposition}
Properties (C5) and (C1) hold.
\end{proposition}
\begin{proof}
  Consider $A = \langle (x, y ) \rangle$ and $\pi \in S_n$. We claim that there is precisely one pair $(z, \tau)$ with $z \in [n], \tau \in A$ such that $\pi = (y \ z) \tau$. For, we have $(y \ z) \pi \in A$ iff $(y \ z) \pi x = y$ iff $z = \pi x$. Furthermore, once $z$ is determined, $\tau$ is also uniquely determined.

  This shows (C5). Also, when $z \approx \text{Unif}[n]$ and $\tau \approx \Omega|A$, it implies that $(y \ z) \tau = \pi$ with probability precisely $\frac{1}{n} \times \frac{1}{|A|} = \frac{1}{n!}$. Thus $(y \ z) \tau$ is uniformly distributed on $S_n$, showing (C1).
\end{proof}

\begin{proposition}
Property (C2) holds.
\end{proposition}
\begin{proof}
  Consider $A = \langle (x, y) \rangle$ and $A' = \langle (x', y') \rangle$ and $\pi \in A - A'$. Clearly $A \neq A'$ so $x' \neq x, y' \neq y$. Suppose for contradiction that $(y \ z) \pi x' = y'$. So $\pi x' = (y \ z) y'$. If $z \neq y'$, then $\pi x' = y'$, which contradicts $\pi \notin A'$. If $z = y'$, then $\pi x' = y$, which is impossible as $\pi x = y$.
\end{proof}

\begin{proposition}
\label{c6x-prop-11a}
Let $A = \langle (x, y ) \rangle$ and $A' = \langle (x', y') \rangle$ with $A \not \sim A'$. Let $\sigma = (y \ z) \in R_A$ and $\pi \in A \cap A'$. Then:
\begin{enumerate}
\item If $(x,y) = (x', y')$, then $\sigma \pi \in A'  \Leftrightarrow z = y$;
\item If $(x,y) \neq (x', y')$, then $\sigma \pi \in A' \Leftrightarrow z \neq y'$
\end{enumerate}
\end{proposition}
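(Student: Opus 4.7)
The plan is a direct case analysis, in each case computing $\sigma\pi$ on the relevant coordinate $x'$ that determines membership in $A'$.

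For Case 1, where $(x,y) = (x',y')$, the events $A$ and $A'$ coincide, so membership of $\sigma\pi$ in $A'$ is the condition $\sigma\pi x = y$. Since $\pi \in A$ gives $\pi x = y$, this reduces to evaluating $(y\ z)\,y$. If $z = y$ the transposition $(y\ y)$ is the identity, so $\sigma\pi = \pi \in A$; if $z \neq y$ then $(y\ z)\,y = z \neq y$, so $\sigma\pi \notin A$. This yields the stated equivalence $\sigma\pi \in A' \Leftrightarrow z = y$.

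For Case 2, the first step is to use the hypothesis $A \not\sim A'$ together with $(x,y)\neq (x',y')$ to conclude that both coordinates differ, i.e.\ $x\neq x'$ and $y\neq y'$. Indeed, by the definition of $\sim$ on atomic permutation events given in Section~\ref{perm-lll-sec}, $A \not\sim A'$ rules out the cases ``$x=x',\,y\neq y'$'' and ``$x\neq x',\,y=y'$,'' leaving only ``$x=x',\,y=y'$'' (excluded by the case assumption) or ``$x\neq x',\,y\neq y'$.'' The condition $\sigma\pi \in A'$ now reads $(y\ z)\pi x' = y'$. Since $\pi\in A'$, we have $\pi x' = y'$, so we must evaluate $(y\ z)\,y'$. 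Because $y'\neq y$, the transposition fixes $y'$ precisely when $z\neq y'$; equivalently, $\sigma\pi\in A' \Leftrightarrow z\neq y'$.

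There is really no substantive obstacle here: the proof is a short unfolding of definitions, with the only subtle point being the deduction in Case 2 that the hypothesis $A\not\sim A'$ forces both $x\neq x'$ and $y\neq y'$, which is what makes the transposition $(y\ z)$ act trivially on $y'$ for generic choices of $z$. The proposition is essentially a bookkeeping lemma that will be invoked later to verify properties (C3) and (C4) for the permutation resampling oracle.
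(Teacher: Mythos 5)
Your proof is correct and follows essentially the same route as the paper's: in both cases you evaluate $\sigma\pi$ on the coordinate $x'$ (or $x$) and unwind the transposition, using $A\not\sim A'$ to conclude $y\neq y'$ in Case~2 exactly as the paper does. The only difference is that you spell out the deduction $x\neq x'$, $y\neq y'$ from the definition of $\sim$ slightly more explicitly, which is fine.
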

\begin{proof}
  In case (1), if $z = y$, then $\sigma \pi = \pi$, which is in $A = A'$ by hypothesis. If $z \neq y$, then $\sigma \pi x = (y \ z) y = z \neq y$, and so $\sigma \pi \not \in A$.

  In case (2), since $A \not \sim A'$ we have $y \neq y'$. If $z \neq y'$, then $\sigma \pi x' = (y \ z) y' = y'$ and so $\sigma \pi \in A'$. If $z = y'$, then  $\sigma \pi x' = (y \ y') y' = y \neq y'$, and so $\sigma \pi \notin A'$.
\end{proof}

\begin{proposition}
Property (C4) holds.
\end{proposition}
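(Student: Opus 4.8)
The statement to prove is that the permutation resampling action satisfies property (C4): for atomic events $A = \langle (x,y) \rangle$ and $A' = \langle (x', y') \rangle$ with $A \not\sim A'$, and any fixed $\sigma = (y\ z) \in Y_A$, the probability $P_{\pi \approx \Omega^{A \cap A'}}(\sigma \pi \in A')$ lies in $\{0,1\}$. The key observation is that essentially all the work has already been done in Proposition~\ref{c6x-prop-11a}: that proposition computes, for \emph{every} individual $\pi \in A \cap A'$, exactly when $\sigma \pi \in A'$, and the answer depends only on $z$ (and the parameters $x,y,x',y'$), never on $\pi$ itself.

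\textbf{Main steps.} First I would note that $A \cap A'$ is nonempty (otherwise the conditional distribution $\Omega^{A \cap A'}$ is undefined and there is nothing to check, or we can take the probability to be, say, $0$ by convention), so there is at least one $\pi$ to reason about. Then I would split into the two cases of Proposition~\ref{c6x-prop-11a}. In case (1), where $(x,y) = (x',y')$: if $z = y$ then every $\pi \in A \cap A' = A$ has $\sigma\pi = \pi \in A'$, so the probability is $1$; if $z \neq y$ then for every $\pi$ we have $\sigma\pi \notin A \supseteq A \cap A'$... wait, here one must be slightly careful — in case (1) $A = A'$, so $\sigma\pi \notin A$ means $\sigma\pi \notin A'$, giving probability $0$. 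In case (2), where $(x,y) \neq (x',y')$: if $z \neq y'$ then every $\pi \in A \cap A'$ has $\sigma\pi \in A'$, probability $1$; if $z = y'$ then every such $\pi$ has $\sigma\pi \notin A'$, probability $0$. In all four sub-cases the probability is either $0$ or $1$, independently of the distribution $\Omega^{A \cap A'}$, which establishes (C4).

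\textbf{Expected obstacle.} There is essentially no obstacle: the content is entirely contained in Proposition~\ref{c6x-prop-11a}, and (C4) is just the repackaging of its deterministic, $\pi$-independent conclusion as a $\{0,1\}$-valued probability. The only point requiring a sentence of care is the degenerate case $(x,y) = (x',y')$, i.e.\ $A = A'$ — one must observe that $A \not\sim A$ is indeed permitted here (the paper explicitly allows $A \sim A$ or not for atomic events) and that ``$\sigma\pi \notin A$'' and ``$\sigma\pi \notin A'$'' coincide in this case. Beyond that, the proof is a two-line invocation of the preceding proposition.
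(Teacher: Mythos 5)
Your proposal is correct and matches the paper's proof exactly: both simply observe that Proposition~\ref{c6x-prop-11a} gives a criterion for $\sigma\pi \in A'$ depending only on $A, A', \sigma$ (i.e.\ on $z$) and not on $\pi$, so the conditional probability is $0$ or $1$. The extra care you take with the degenerate case $A = A'$ is consistent with the case split already built into Proposition~\ref{c6x-prop-11a} and does not change the argument.
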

\begin{proof}
  Proposition~\ref{c6x-prop-11a} gives an explicit condition for when $\sigma \pi \in A'$ holds for $A \not \sim A', \pi \in A \cap A', \sigma \in R_A$. This condition depends solely on $A, A', \sigma$ and not on $\pi$ itself; thus, for any fixed $\sigma$, it holds for all such $\pi$ or none of them.
\end{proof}

\begin{proposition}
Property (C3') holds.
\end{proposition}
\begin{proof}
  Let $A_1 = \langle (x_1,y_1) \rangle$ and $A_2 = \langle (x_2, y_2) \rangle$ where $A_1 \not \sim A_2$. We need to show for any fixed $\pi$ and indices $z_1 \in [n] - \{y_2 \}, z_2 \in [n]$, there exist $z_1' \in [n], z_2' \in [n] - \{y_1 \}$ such that
  $$
  (y_2 \ z_2) (y_1 \ z_1) \pi = (y_1 \ z_1') (y_2 \ z_2')  \pi
  $$

If $A_1 = A_2$ this is trivial. Also, if $z_1, z_2$ are distinct from each other and $y_1, y_2$, then we can simply take $z'_1 = z_1, z_2' = z_2$. Otherwise, there are a number of cases depending on which of the terms $z_1, z_2, y_1, y_2$ are equal to each other. 

\textbf{Case I: $\pmb{z_1 = z_2}$}. Let $z = z_1 = z_2$.  If $z = y_1$, then $(y_2 \ z_2) (y_1 \ z_1) = (y_2 \ y_1)(y_1 \ y_1) = (y_1 \ y_2) (y_2 \ y_2)$,  and so setting $z'_1 = y_2, z_2' = y_2$ works. Otherwise, if $z \neq y_1$, then $(y_2 \ z_2) (y_1 \ z_1) = (y_1 \ y_2 \ z) = (y_1 \ y_2) (y_2 \ z)$. So setting $z'_2 = z, z'_1 = y_2$ works. Our hypothesis $z \neq y_1$ ensures that $z'_2 \neq y_1$.

\textbf{Case II: $\pmb{z_2 = y_2}$}. Then $(y_2 \ z_2) (y_1 \ z_1) = (y_1 \ z_1) = (y_1 \ z_1) (y_2 \ y_2)$, so take $z'_1 = y_1, z'_2 = y_2$.

\textbf{Case III:  $\pmb{z_2 = y_1}$}. We may assume that $z_2 \notin \{ z_1, y_2 \}$, as we have already covered these cases. Then $(y_2 \ z_2) (y_1 \ z_1)  =  (y_1 \ z_1 \ y_2) = (y_1 \ z_1) (y_2 \ z_1)$,
so taking $z'_2 = z'_1 = z_1$ works. Note that $z_2' \neq y_1$, as otherwise we would have $z_1 = z_2$.

\textbf{Case IV: $\pmb{z_1 = y_1}$}. Then $(y_2 \ z_2) (y_1 \ z_1) =  (y_1 \ y_1) (y_2 \ z_2)$, so take $z'_1 = y_1, z'_2 = y_2$. Note that we cannot have $z'_2 = y_1$ as this would imply $y_1 = y_2$.
\end{proof}

\subsection{Applications}
\label{app-sec2}
We illustrate with the classic applications of the permutation LLL to Latin transversals. Suppose we have an $n \times n$ matrix $A$, whose entries come from some set of colors. An \emph{$s$-bounded transversal} of this matrix is a permutation $\pi \in S_n$, such that no color appears at least $s$ times among the entries $A(i, \pi(i))$. The case $s = 2$ is known as a \emph{Latin transversal}, and in this case the permutation is said to be \emph{rainbow} in that no color is repeated among the entries of $A(i, \pi(i))$.

\begin{proposition}
  \label{ggt1}
  Suppose that each color appears at most $\Delta$ times in $A$. Then, we can find a Latin transversal $\pi \in S_n$ in $O(\log^4 n)$ time and $\poly(n)$ processors for $\Delta \leq 0.105 n$. We can find an $s$-bounded transversal $\pi \in S_n$ in $O(\frac{\log^4 n}{\epsilon})$ time and $\poly(n)$ processors for $\Delta \leq n \Bigl( \frac{(s-1)!}{2 e (1+\epsilon) s} \Bigr)^{1/(s-1)}$.
  \end{proposition}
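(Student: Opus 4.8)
The plan is to set up the $s$-bounded Latin transversal problem as an instance of the permutation LLLL (the uniform distribution on $S_n$ treated in Section~\ref{perm-lll-sec}), verify the Shearer / symmetric LLL criterion with slack $\epsilon$, and then invoke Theorem~\ref{par-bound3} (or Theorem~\ref{par-bound4}). First I would define the bad-events: for each set $T$ of $s$ cells $(i_1,j_1),\dots,(i_s,j_s)$ lying in distinct rows, distinct columns, and all carrying the same color, let $B_T$ be the event $\langle (i_1,j_1),\dots,(i_s,j_s)\rangle = \{\pi : \pi(i_\ell)=j_\ell \text{ for all }\ell\}$. This is a conjunction of $s$ atomic events which is independent in the sense of Section~\ref{perm-lll-sec} (distinct rows and distinct columns guarantee no two atomic events are $\sim$), so it lies in $\oa$ and the resampling oracle $\overline r$ of Theorem~\ref{atomic-thm} applies; it inherits (C1)--(C4), and the computational properties (D1)--(D2) follow from Proposition~\ref{simpb1b2a} since each atomic event is a transposition-based resampling and $s = O(1)$ (or at worst $s \le \poly(n)$). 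A BEC is easy: given $\pi$, for each color tally the cells $A(i,\pi(i))$ and check whether any color reaches $s$; this runs in $\poly(n)$ processors and $O(\log n)$ time.

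Next I would compute $p = P_\Omega(B_T)$ and the dependency degree $d$. Since $B_T$ fixes $s$ values of $\pi$, $p = (n-s)!/n! = 1/(n(n-1)\cdots(n-s+1)) \le n^{-s}$ roughly (more precisely $p = \prod_{\ell=0}^{s-1}(n-\ell)^{-1}$). For the degree: $B_T \sim B_{T'}$ iff some atomic event of $T$ is $\sim$ some atomic event of $T'$, i.e. $T$ and $T'$ share a row or share a column (with conflicting assignment) — equivalently, $T'$ must contain a cell in one of the $s$ rows or $s$ columns used by $T$. Counting the number of monochromatic $s$-cell configurations $T'$ that hit a fixed row or column: a fixed cell lies in a color class of size $\le \Delta$, and an $s$-subset of that class containing that cell gives at most $\binom{\Delta-1}{s-1}$ choices; summing over the $\le 2s$ forbidden rows/columns and the $n$ cells in each gives $d \le 2 s n \binom{\Delta-1}{s-1}$ (up to lower-order corrections). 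For the Latin transversal case $s=2$ this is the classical bound $d \le$ roughly $2n\Delta$ against $p = 1/(n(n-1))$, and the symmetric criterion $ep d < 1$ becomes (after the slack) $2 e \Delta / n \le 1$ with a constant shaved off, giving the stated $\Delta \le 0.105 n$; for general $s$, plugging $p \approx n^{-s}$ and $d \approx 2sn\binom{\Delta-1}{s-1} \approx 2sn \Delta^{s-1}/(s-1)!$ into $epd(1+\epsilon) \le 1$ and solving for $\Delta$ yields exactly $\Delta \le n\bigl(\tfrac{(s-1)!}{2e(1+\epsilon)s}\bigr)^{1/(s-1)}$.

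Having verified the hypotheses of Theorem~\ref{par-bound4} (symmetric criterion with $\epsilon$-slack, BEC in $\poly(n)$ time, oracle satisfying (C1)--(C4), (D1), (D2)), the runtime $O(\epsilon^{-1}\log^4(mn))$ follows; since $m \le \poly(n)$ here (the number of monochromatic $s$-tuples is at most $n^s \le \poly(n)$) this simplifies to $O(\epsilon^{-1}\log^4 n)$, and for the $s=2$ case with the stated constant $\Delta \le 0.105n$ we may absorb the constant slack to get $O(\log^4 n)$. The main obstacle I anticipate is getting the degree count $d$ and the probability $p$ accurate enough — including the correct handling of the lower-order terms (the ``$-2/k$''-style corrections, here a ``$-O(s)$'' correction, and whether one counts ordered or unordered tuples, and whether a cell's own color class is counted with $\Delta$ or $\Delta-1$) — so that the inequality $epd(1+\epsilon)\le 1$ reproduces the precise threshold constants $0.105$ and the $s$-dependent expression rather than something off by a constant factor. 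This is a careful but routine calculation; the structural input (that the permutation oracle is oblivious and parallelizable, so Theorem~\ref{par-bound4} applies at all) is exactly what Section~\ref{perm-lll-sec} and Theorem~\ref{atomic-thm} provide.
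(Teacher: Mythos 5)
Your treatment of the $s$-bounded case is essentially the paper's proof: same bad-events $B_t$, same $p \leq (n-s)!/n!$, same degree bound $d \leq 2sn\binom{\Delta-1}{s-1}$, same reduction of $epd(1+\epsilon)\leq 1$ to $\frac{2e(1+\epsilon)s\Delta^{s-1}}{(s-1)!\,n^{s-1}}\leq 1$, and the same appeal to the parallel machinery with a trivial per-color BEC. That half is correct.

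There is, however, a genuine gap in your derivation of the first claim ($s=2$, $\Delta \leq 0.105n$). You try to extract the constant $0.105$ from the symmetric criterion, but the symmetric criterion cannot give it. With $p = 1/(n(n-1))$ and the correct degree bound $d \leq 2\cdot 2\cdot n(\Delta-1) = 4n(\Delta-1)$ (your "$d\leq 2n\Delta$ roughly" drops a factor of $2$ relative to your own general formula $2sn\binom{\Delta-1}{s-1}$ at $s=2$), the condition $epd\leq 1$ forces $\Delta \lesssim n/(4e) \approx 0.092\,n$. The constant $0.105 \approx 27/256$ is strictly larger, and "shaving a constant off" the symmetric bound only moves you further below it. The paper gets $0.105$ by invoking the \emph{cluster-expansion} criterion (Theorem~\ref{equiv-thms}, part 2) for this case --- this is the Bissacot et al.\ improvement, which exploits the fact that the neighborhood of a bad-event $\pi(i_1)=j_1\wedge\pi(i_2)=j_2$ decomposes into four "rows/columns" whose independent subsets are small --- and then applies Theorem~\ref{par-bound2}/\ref{par-bound3} with $\epsilon=\Omega(1)$ and $W\leq\poly(n)$. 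So for the first claim you need to replace the symmetric-criterion calculation with a cluster-expansion calculation (choosing $\tilde\mu(B)=\alpha$ and verifying $\alpha \geq p\,(1+\epsilon)(1+n\Delta\alpha)^{4}$ has a root, which is where $27/256$ comes from); as written, your argument only establishes the weaker threshold $\Delta\leq 0.092\,n$.
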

  \begin{proof}
We use the probability space of the uniform distribution over $S_n$.   For the first result, observe that the cluster-expansion LLL criterion is satisfied with slack of $\epsilon = \Omega(1)$ and $W \leq \poly(n)$.
    
    For the second result, for each tuple $t = \{ (i_1, j_1), \dots, (i_s, j_s) \}$ with $A(i_1,j_1) = \dots = A(i_s,j_s)$, we have a separate bad-event $B_t$, that $\pi(i_1) = j_1 \wedge \dots \wedge \pi(i_s) = j_s$. Each $B_t$ has probability $p \leq \frac{(n-s)!}{n!}$, and has at most $d = 2 s n \binom{\Delta-1}{s-1}$ neighboring bad-events $B_{t'}$. Thus, in order to satisfy the symmetric LLL criterion with $\epsilon$-slack, we need
  $$
  e (1+\epsilon) \frac{(n-s)!}{n!} 2 s n \binom{\Delta-1}{s-1} \leq 1
  $$

To show this, we calculate:
  \begin{align*}
    e (1+\epsilon) \frac{(n-s)!}{n!} 2 s \binom{\Delta-1}{s-1} \leq \frac{2 e (1+\epsilon) s \Delta  \dots \Delta}{(s-1)! n \times n \dots n} = \frac{2 e (1+\epsilon) s \Delta^{s-1}}{(s-1)! n^{s-1}}
  \end{align*}

  So $e p d (1+\epsilon) \leq 1$ holds under the stated hypothesis. One can easily construct a BEC in $O(\log n)$ time: for each color class, simply enumerate all of the current entries of $\pi$ with that color.
  \end{proof}
 
We note that the runtime in Proposition~\ref{ggt1} does not depend on $s$. By contrast, the permutation LLL algorithm of \cite{harris-srin-perm} would only give a parallel algorithm for \emph{constant} $s$. There are two main reasons it has poor scaling as a function of $s$: first, the number of bad-events could be $n^s$, which is super-polynomial for unbounded $s$; second, each bad-event spans $s$ entries, whereas \cite{harris-srin-perm} only allows bad-events to use polylogarithmic entries. We also note that a sequential algorithm of \cite{harris-srin-perm} based on partial resampling  can achieve better bounds for large $s$, but our parallelization strategy does not extend to that case.

We next illustrate with some applications to finding rainbow subgraphs of $K_n$ and $K_n^{(s)}$:
\begin{proposition}
  Consider an edge-coloring of $K_n$ where every color appears on at most $\Delta$ edges. If $\Delta \leq 0.105 n$ and $n$ is even, then we can find a rainbow perfect matching in $O(\log^4 n)$ time and $\text{poly}(n)$ processors whp. If $\Delta \leq 0.026 n$, then we can find a rainbow hamiltonian cycle in $O(\log^4 n)$ time and $\text{poly}(n)$ processors whp.
\end{proposition}
\begin{proof}
  We encode these problems via the probability spaces of the uniform distribution of perfect matchings of $K_n$ and hamiltonian cycles of $K_n$, respectively.  In Apppendices~\ref{hamilton-sec} and~\ref{match-sec2} we show that the spaces both have amenable resampling oracles.  It is shown in \cite{kolmogorov} and \cite{harvey2}, respectively, that that cluster-expansion LLL criterion is satisfied with slack $\epsilon = O(1)$ and $W \leq \text{poly}(n)$.
  \end{proof}

\begin{proposition}
Consider an edge-coloring of $K_n^{(s)}$ where every color appears on at most $\Delta$ edges. If $\Delta \leq \frac{ \binom{n-s-1}{s-1} (1 - \frac{1}{2 s})^{2 s}}{ 2 s - 1 } n$, then there is a poly-time algorithm to find a rainbow perfect matching.
\end{proposition}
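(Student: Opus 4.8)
The plan is to encode rainbow perfect matchings of $K_n^{(s)}$ as the problem of avoiding a set of bad-events in the uniform distribution $\Omega$ on perfect matchings of $K_n^{(s)}$, verify that this set satisfies the cluster-expansion LLLL criterion (with a small positive slack) under the stated bound on $\Delta$, and then invoke the generic resampling algorithm, Algorithm~\ref{gen-alg}. The atomic events are $\langle e\rangle = \{M : e\in M\}$ for $s$-sets $e\subseteq[n]$. I would first exhibit a resampling oracle $r$ for these atomic events and verify that it satisfies (C1), (C2), (C4) (and, with a bit more work, (C3)); this atomic oracle is the genuinely new ingredient and is constructed in the appendix treating perfect matchings of $K_n^{(s)}$. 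Granting it, Theorem~\ref{atomic-thm} promotes $r$ to a resampling oracle $\overline r$ on the conjunction space $\oa$, still satisfying (C1), (C2), (C4). We take $\mathcal B$ to consist of the bad-events $B_{e,e'} := \langle e\rangle\cap\langle e'\rangle$, ranging over unordered pairs of \emph{disjoint} $s$-sets $e,e'$ that receive the same color; a matching avoids $\mathcal B$ exactly when it is rainbow, and a BEC runs trivially in $O(n)$ time (scan the colors used by $M$).

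For the probabilistic parameters: conditioning on the block of $M$ through a fixed vertex gives $P_\Omega(e\in M) = 1/\binom{n-1}{s-1}$, and then restricting to the copy of $K_{n-s}^{(s)}$ on $[n]\setminus e$ gives $p := P_\Omega(B_{e,e'}) = 1/\bigl(\binom{n-1}{s-1}\binom{n-s-1}{s-1}\bigr)$. For the dependency relation I would use the coarsening $B_{e,e'}\sim B_{f,f'}$ whenever $(e\cup e')\cap(f\cup f')\neq\emptyset$ and the two bad-events differ; this is a valid lopsidependency relation by the standard negative-correlation computation for forced partial matchings, and since it is coarser than the relation handed down by the atomic framework it preserves (C1), (C2), (C4). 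The advantage of this choice is that, for each of the $2s$ vertices $v\in e\cup e'$: (i) every neighbor of $B_{e,e'}$ touches some vertex of $e\cup e'$; (ii) at most $\binom{n-1}{s-1}(\Delta-1)$ neighbors touch $v$ (pick the edge of the pair through $v$, then a same-colored partner); and (iii) any two neighbors both touching $v$ are themselves adjacent.

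Combining (i)--(iii) with the usual ``minimum-witness'' injection, the cluster-expansion sum $\sum_{S\subseteq\Gamma(B_{e,e'})\text{ indep.}}\prod_{B'\in S}\mu$ over independent neighbor-sets with a uniform weight $\mu$ is at most $\bigl(1+\mu\binom{n-1}{s-1}(\Delta-1)\bigr)^{2s}$. Hence the cluster-expansion criterion follows once there is a $\mu>0$ with $\mu\ge p\,\bigl(1+\mu\binom{n-1}{s-1}(\Delta-1)\bigr)^{2s}$. Substituting $x=\mu\binom{n-1}{s-1}$ and $K=\binom{n-s-1}{s-1}$, this becomes $xK\ge(1+x(\Delta-1))^{2s}$, and a one-variable optimization — the optimum occurs at $(xK)^{1/(2s)} = (1-\tfrac1{2s})^{-1}$ — shows it has a solution exactly when $\Delta-1\le \frac{K(1-1/(2s))^{2s}}{2s-1}$. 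Under the hypothesis $\Delta\le\frac{K(1-1/(2s))^{2s}}{2s-1}$ this inequality holds with positive slack (since $\Delta-1<\Delta$), so the cluster-expansion criterion is met with slack; therefore Algorithm~\ref{gen-alg}, with a fixed index rule in line (3), produces a rainbow perfect matching in expected polynomial time.

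I expect the main obstacle to lie not in the criterion check above, which parallels the non-constructive argument of Lu, Mohr \& Sz\'ek\'ely \cite{mohr}, but in the prerequisite construction and analysis of the atomic resampling oracle for $\langle e\rangle$ on $K_n^{(s)}$: no resampling oracle of any kind was previously known for this space when $s>2$, and verifying (C1) (probability regeneration) demands a bijective argument generalizing the one used for perfect matchings of $K_n$. Two smaller points that need care are confirming that the coarse vertex-overlap relation is simultaneously a genuine lopsidependency \emph{and} compatible with the oracle's properties, and making the minimum-witness injection correct when a neighbor touches several vertices of $e\cup e'$.
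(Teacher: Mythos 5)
Your proposal is correct and follows essentially the same route as the paper: the same encoding of rainbow matchings as bad-events in the uniform distribution on perfect matchings of $K_n^{(s)}$, the same reliance on the appendix's atomic resampling oracle promoted via Theorem~\ref{atomic-thm}, the same coarsened vertex-overlap dependency relation, and the same cluster-expansion inequality $\alpha \geq p\,(1+\alpha\binom{n-1}{s-1}(\Delta-1))^{2s}$ (your $p = 1/(\binom{n-1}{s-1}\binom{n-s-1}{s-1})$ equals the paper's $\frac{(n/s)(n/s-1)}{\binom{n}{s}\binom{n-s}{s}}$). Your explicit one-variable optimization simply fills in what the paper calls ``simple calculus,'' and correctly recovers the stated threshold on $\Delta$.
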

\begin{proof}
The probability space $\Omega$ is defined by selecting matching $M$ uniformly at random. For each pair of edges $e, e'$ of the same color, we have a bad-event $B$ that $e, e'$ are both in $M$. This event has probability 
  $$
  p = \frac{(n/s) (n/s - 1)}{\binom{n}{s} \binom{n-s}{s}}.
  $$

  In Appendix~\ref{match-sec2}, we show that $\Omega$ has a resampling-space, albeit not a commutative one. To apply the cluster-expansion criterion, we use a slightly denser dependency graph: two events $B, B'$ are dependent if the corresponding edges overlap. To enumerate the stable sets of neighbors of $B$ with respect to this dependency graph, for each of the $2s$ vertices $j$ involved in $B$ we may select another edge $f_j \ni j$ and another edge $f'_j$ of the same color as $f_j$ (a total of $\binom{n-1}{s-1} \times (\Delta-1)$ choices).

  We set $\mu(B) = \alpha$ for every bad-event for some parameter $\alpha \geq 0$. In order to satisfy the cluster-expansion criterion, we will then need
  \begin{equation}
  \label{gtr1}
  \alpha \geq \frac{(n/s) (n/s - 1)}{\binom{n}{s} \binom{n-r}{s}} \times (1 + \tbinom{n-1}{s-1} (\Delta - 1) \alpha)^{2 s}
  \end{equation}

Simple calculus shows that when the hypotheses are satisfied, then Eq.~(\ref{gtr1}) can be satisfied for some $\alpha \geq 0$. Using the resampling oracle in Appendix~\ref{match-sec2}, we can implement Algorithm~\ref{gen-alg} in polynomial time to produce a configuration avoiding $\mathcal B$.
\end{proof}

As another application, consider \emph{strong coloring}: given a graph $G$ with a partition of the vertices into $k$ blocks each of size $b$ (i.e., $V = V_1 \sqcup \dots \sqcup V_k$), we would like to find a proper $b$-coloring such that every block has exactly $b$ colors.   In \cite{haxell2008}, Haxell showed that such a coloring exists when $b \geq c\Delta$ and $\Delta$ is sufficiently large, for some constant $c \leq 11/4$; this is the best bound currently known. Furthermore, the constant $11/4$ cannot be improved to any number strictly less than $2$. In \cite{harris-srin-perm}, a variety of LLL-based algorithms are given for constructing the colorings, with worse bounds on $b$ and with large (unspecified) runtimes. Our LLLL algorithms gives a crisp result, which is perhaps the first parallel algorithm with reasonable bounds on both $b$ and the run-time:
\begin{proposition}
\label{strong-chrom-prop}
Given a partition of $G$ into blocks of size $b \geq (\frac{256}{27} + \epsilon) \Delta$, a coloring of $G$ can be found in $O(  \frac{\log^4 n}{\epsilon})$ time whp.
\end{proposition}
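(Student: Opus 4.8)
The plan is to realize strong coloring as a permutation LLLL instance on a cartesian product of symmetric groups and then invoke Theorem~\ref{par-bound3}. Fix an identification of the color set with $[b]$ and, for each block $V_i$, an identification of $V_i$ with $[b]$. The probability space is $\Omega = \underbrace{S_b \times \cdots \times S_b}_{k}$, where the $i$-th coordinate $\pi_i$ is the uniformly random bijection assigning colors to the vertices of $V_i$. By the cartesian-product construction of Section~\ref{example-sec}, applied to the $k \le n$ copies of the permutation space of Section~\ref{perm-lll-sec}, $\Omega$ carries an oblivious, commutative resampling oracle satisfying (C1)--(C4), (D1), (D2), whose atomic events are of the form $\pi_i(x) = c$.

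Next I would fix the bad-events. An edge of $G$ joining two vertices of the same block is automatically properly colored, since each $\pi_i$ is a bijection; so I only need bad-events for edges $e = \{u,v\}$ of $G$ joining distinct blocks $V_i, V_j$. For each such $e$ and each color $c \in [b]$, set $B_{e,c} = \langle \pi_i(u) = c,\ \pi_j(v) = c\rangle$. Since $i \neq j$, these two atomic events are independent, hence $B_{e,c} \in \oa$; and a state $u\in\Omega$ avoids all of the $B_{e,c}$ iff the induced coloring is a valid strong coloring. There are at most $b\,|E(G)| \le b n \Delta \le \poly(n)$ bad-events, so the Shearer parameter $W$ will be $\poly(n)$; and a BEC is obtained simply by scanning all edges of $G$ and comparing endpoint colors, in $O(\log n)$ time with $\poly(n)$ processors.

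It then remains to verify an LLLL criterion with slack $\epsilon$; I would use the cluster-expansion criterion (Theorem~\ref{equiv-thms}) with the uniform weight $\mu(B_{e,c}) = \alpha$, exactly following \cite{harris-srin-perm}. Each $B_{e,c}$ has probability $p = 1/b^2$. A bad-event $B_{e',c'}$ is lopsidependent with $B_{e,c}$ only if $e'$ is incident to $u$, or incident to $v$, or incident to some vertex of $V_i$ and has color $c$, or incident to some vertex of $V_j$ and has color $c$; and any two neighbors drawn from one of these four classes are themselves lopsidependent (they share, or conflict on, an atomic event at a common vertex), so an independent set of neighbors contains at most one member from each class, and each class has at most $b\Delta$ members. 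Hence the cluster-expansion sum over independent neighbor-sets is at most $(1 + b\Delta\alpha)^4$, and the criterion reduces, up to negligible lower-order corrections absorbed into the slack, to $\alpha \ge (1 + b\Delta\alpha)^4 / b^2$. Writing $\beta = b\Delta\alpha$ this becomes $b \ge \Delta (1+\beta)^4/\beta$, and minimizing $(1+\beta)^4/\beta$ at $\beta = 1/3$ gives the requirement $b \ge \tfrac{256}{27}\Delta$; the factor $(1+\epsilon)$ accommodates $\epsilon$-slack, while $W$ remains $\poly(n)$. Theorem~\ref{par-bound3} then delivers an algorithm running in $O(\epsilon^{-1}\log^4 n)$ time and $\poly(n)$ processors.

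The main obstacle is the neighborhood analysis in the last paragraph: correctly enumerating the four classes of lopsidependent bad-events, checking that each contributes at most one vertex to an independent set, and tracking the $O(1)$-type lower-order corrections carefully enough that the optimum lands exactly at $256/27$. This is essentially the cluster-expansion computation already carried out in \cite{harris-srin-perm}; the only genuinely new ingredient is the observation that the product-of-$S_b$ space fits the oblivious framework of Section~\ref{framework-sec}, so that the generic parallel bound of Theorem~\ref{par-bound3} can be plugged in directly.
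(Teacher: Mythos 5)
The paper itself gives no proof of Proposition~\ref{strong-chrom-prop} (it simply points to the cluster-expansion computation in \cite{harris-srin-perm}), and your overall route --- a cartesian product of $k$ copies of the $S_b$ oracle from Section~\ref{perm-lll-sec}, bad-events $B_{e,c}$ for inter-block edges, cluster expansion with $p=1/b^2$ and four neighbor classes of size $b\Delta$, optimized at $\beta=1/3$ to get $256/27$ --- is exactly the intended one. The reduction to Theorem~\ref{par-bound3}, the BEC, and the bound $W\le\poly(n)$ are all fine.

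There is, however, a genuine gap in the neighborhood analysis. You justify ``an independent set of neighbors contains at most one member from each class'' by saying any two members of a class ``share, or conflict on, an atomic event at a common vertex.'' But under the lopsidependency relation you are using (the one from Section~\ref{perm-lll-sec}, $A\sim A'$ iff $P(A\cap A')=0$, extended to conjunctions), \emph{sharing} an atomic event does not create dependence --- it creates positive correlation. Concretely, take $e'=\{u,v'\}$ and $e''=\{u,v''\}$ both incident to $u\in V_i$, with the same color $c'$, and with $v',v''$ in \emph{different} blocks. Then $B_{e',c'}$ and $B_{e'',c'}$ share the atomic event $\pi_i(u)=c'$, and their remaining atomic events live in different coordinates of the product, so $B_{e',c'}\not\sim B_{e'',c'}$. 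Both lie in your class (1) and both are neighbors of $B_{e,c}$, so $\{B_{e',c'},B_{e'',c'}\}$ is an independent subset of $N(B_{e,c})$ with two members of one class; the same failure occurs in classes (3)--(4) for two edges meeting the same vertex $u'\in V_i$. Hence the cluster-expansion sum is not bounded by $(1+b\Delta\alpha)^4$ as written.

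The fix is the one the paper itself uses for $K_n^{(s)}$: pass to a \emph{denser} dependency graph, declaring $B_{e,c}\sim B_{e',c'}$ whenever they contain atomic events $\pi_i(x)=\gamma$ and $\pi_i(x')=\gamma'$ in the same block with $x=x'$ \emph{or} $\gamma=\gamma'$ (i.e., same row or same column of some permutation, including the case of an identical element). Densifying $\sim$ preserves (C1)--(C4), since those conditions only constrain non-neighbors, and one verifies the cluster-expansion criterion with respect to the denser graph. Under this relation each of the four classes (events through row $u$ of $\pi_i$, row $v$ of $\pi_j$, column $c$ of $\pi_i$, column $c$ of $\pi_j$) is genuinely pairwise dependent and still has at most $b\Delta$ members, so the bound $(1+b\Delta\alpha)^4$ and the constant $256/27$ go through exactly as you computed. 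With that correction your argument is complete.
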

\begin{proof}
Consider the probability space of uniform distribution over permutations $\pi_1, \dots, \pi_k$, wherein each $\pi_i$ is a permutation of the vertices in block $V_i$. For each edge $e = (v, v')$ with $v \in V_i, v' \in V_{i'}$, and each value $\ell = 1, \dots, b$, we have a bad-event $\pi_{i}(v) = \ell \wedge \pi_{i'}(v') = \ell$. Harris \& Srinivasan \cite{harris-srin-perm} show that this satisfies the LLLL cluster-expansion criterion with $\epsilon$-slack when $b \geq (\frac{256}{27}  + \epsilon) \Delta$. Furthermore, the probability space is the cartesian product of $k$ copies of the uniform distribution on $S_b$.  By Observation~\ref{cart-obs}, this has an amenable resampling-space.
\end{proof}

We note that, subsequent to the original version of this paper, a variety of works have appeared with better bounds on the colors and the runtime for strong coloring \cite{graf, graf2, harris-det}. Most recently, \cite{harris-det} provides a deterministic sequential poly-time algorithm for $b \geq (3 + \epsilon) \Delta$ and a deterministic parallel algorithm with $O(\log^3 n)$ runtime for $b \geq (5+ \epsilon) \Delta$, for any constant $\epsilon > 0$.

Finally, we consider a hypergraph packing problem of Lu \& Sz\'{e}k\'{e}ly \cite{random-inj}. 

\begin{proposition}
  \label{ppr1}
  Let $H_1, H_2$ be two $s$-uniform hypergraphs on $n$ vertices, where each $H_i$ has $m_i$ edges such that $(d_1 + 1) m_2 + (d_2 + 1) m_1 < \frac{\binom{n}{s}}{e (1+\epsilon)}$.

  There is an algorithm in $\poly(n)$ processors and $\tilde O(\frac{\log^4 n}{\epsilon})$ time to find an injective map $\phi: V(H_2) \rightarrow V(H_1)$ such that $\phi(H_2)$ is edge-disjoint to $H_1$. (That is, there are not edges $f_1 \in H_1, f_2 \in H_2$ with $f_1 = \{ \phi(v) \mid v \in f_2 \}$.)
\end{proposition}
\begin{proof}
Let us briefly review a construction of \cite{random-inj}. We use the LLL to construct the permutation $\phi$. For each pair of edges $f_1 = \{ u_1, \dots, u_s \} \in E(H_1),  f_2 = \{ v_1, \dots, v_s \} \in E(H_2)$, and each permutation $\sigma \in S_s$, we form a bad-event that $\phi(v_1) = u_{\sigma 1} \wedge \dots \wedge \phi(v_r) = u_{\sigma s}$.  The stated hypothesis ensures that these events satisfy the symmetric LLL criterion. Furthermore, there is a simple BEC here which can be implemented in $O(\log n)$ time: for each $f_2$, we sort $\phi(f_2)$ and check if it in $H_1$.
\end{proof}

Note that Harris \& Srinivasan \cite{harris-srin-perm} only gives an RNC algorithm if the hypergraphs $H_i$ have rank $\text{polylog}(n)$; this condition is not required for Proposition~\ref{ppr1}.

\section{Acknowledgments}
Thanks to Chen Meiri for explanations about group actions. Thanks to anonymous conference and journal reviewers for helpful suggestions and corrections.

\appendix

\section{Background on the LLLL}
\label{app-kolm}
Consider some resampling-space $\mathcal B$ with a lopsidependency relation $\sim$. The simplest criterion for the LLL or the LLLL on $\mathcal B$ is the symmetric criterion $e p d \leq 1$, where $p$ is the maximum probability of any bad-event and $d$ is the maximum dependency of any bad-event. A number of other criteria such as the asymmetric criterion can also be stated in terms of the probabilities and dependency-structure of the bad-events; the most general of these is \emph{Shearer's criterion} \cite{shearer}.  Parallel algorithms usually need a slightly stronger criterion which we refer to as \emph{$\epsilon$-slack}:  the vector of probabilities $(1+\epsilon) \Pr_{\Omega}(B)$ must satisfy Shearer's criterion for  $\epsilon > 0$.

We will describe the Shearer criterion in terms of stable-set sequences, which is a more useful tool for analyzing the MT algorithms. The connection between stable-set sequences and the original form of Shearer's criterion was developed by Kolipaka \& Szegedy \cite{kolipaka}.

We say that a set $J \subseteq \mathcal B$ is \emph{stable} if there are not distinct elements $B, B' \in J$ with $B \sim B'$.  For a stable set $J$, we define $\overline N(J) = \bigcup_{B \in J} \overline N(B)$.

We define a \emph{stable-set sequence} to be a sequence $S = (S_1, S_2, \dots, S_{\ell})$, where each $S_i$ is a non-empty stable set of $\mathcal B$ and $S_i \subseteq \overline N(S_{i+1})$ for $i=1, \dots, \ell-1$. We say that $S$ is \emph{singleton} and \emph{rooted at $B$} if $S_\ell = \{ B \}$. We define the \emph{depth} of $S$ to be $\ell$, the \emph{size} of $S$ to be $|S| = \sum_{i=1}^{\ell} |S_i|$ and the \emph{weight} of $S$ to be $w(S) = \prod_{i=1}^{\ell} \prod_{B \in S_i} \Pr_{\Omega}(B)$. We define $\mathfrak S$ to be the set of all singleton stable-set sequences. 

\begin{theorem}[\cite{kolipaka}]
  \label{kolipaka-thm1}
If Shearer's criterion is satisfied with $\epsilon$-slack, then $\sum_{S \in \mathfrak S}(1+\epsilon)^{|S|}  w(S) < \infty$.
\end{theorem}

In light of Theorem~\ref{kolipaka-thm1}, we define the key parameter  $W = \sum_{S \in \mathfrak S} (1 + \epsilon)^{|S|} w(S)$. This allow us to state the most general bounds.  However, Shearer's criterion is difficult to work with in practice, so a number of simpler LLL criteria are often used instead. 

\begin{theorem}
  \label{equiv-thms}
1. (Asymmetric LLL criterion) Suppose that some function $x: \mathcal B \rightarrow [0, 1]$ satisfies
$$
\forall B \in \mathcal B \qquad \Pr_{\Omega}(B) (1 + \epsilon) \leq x(B) \prod_{A \in N(B)} (1 - x(A))
$$
Then Shearer's criterion is satisfied with $\epsilon$-slack and $W \leq \sum_{B \in \mathcal B} \frac{x(B)}{1-x(B)}$.

2. (Cluster-expansion criterion \cite{bissacot}) Suppose that some function $\mu: \mathcal B \rightarrow [0,\infty)$ satisfies
$$
\forall B \in \mathcal B \qquad \mu(B) \geq \Pr_{\Omega}(B) (1+\epsilon) \sum_{\substack{I \subseteq \overline N(B)\\\text{$I$ stable}}} \prod_{A \in I} \mu(A)
$$
Then Shearer's criterion is satisfied with $\epsilon$-slack and $W \leq \sum_{B \in \mathcal B} \mu(B)$.

3. (Symmetric LLL criterion) Suppose that $\Pr_{\Omega}(B) \leq p$ and $|\overline N(B)| \leq d$ for every $B \in \mathcal B$, and $e p d (1+\epsilon) \leq 1$. Then Shearer's criterion is satisfied with $\epsilon$-slack and $W \leq e m p$.
\end{theorem}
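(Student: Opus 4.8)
The plan is to obtain all three parts from a single lemma that bounds the work factor in terms of a cluster-expansion weighting, after which parts~1 and~3 reduce to exhibiting suitable weightings. Concretely, the lemma I would prove is: if $\tilde\mu : \mathcal B \to [0,\infty)$ satisfies the cluster-expansion inequality with $\epsilon/3$-slack, i.e.\ $\tilde\mu(B) \geq (1+\epsilon/3)\,P_\Omega(B)\sum_{I}\prod_{B' \in I}\tilde\mu(B')$ for every $B$, where $I$ ranges over independent subsets of $N(B)$, then
$$ \mu^*(B) \ :=\ \sum_{\substack{S \in \Gamma \\ S \text{ rooted at } B}} (1+\epsilon/3)^{|S|} w(S) \ \leq\ \tilde\mu(B) \qquad \text{for all } B \in \mathcal B, $$
and hence $W = \sum_{S \in \Gamma}(1+\epsilon/3)^{|S|}w(S) = \sum_{B}\mu^*(B) \leq \sum_{B}\tilde\mu(B)$.

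To prove the lemma I would first derive a recursion for $\mu^*$. A singleton stable-set sequence $S = (S_0, S_1, \dots, S_N)$ rooted at $B$ has $S_0 = \{B\}$ and $S_1 \subseteq N(B)$; assigning to each element of each later layer a fixed ``parent'' in the preceding layer (say the lowest-indexed $B'$ with that element in $N(B')$) splits the tail $(S_1, \dots, S_N)$, layer by layer, into a singleton stable-set sequence rooted at each $B' \in S_1$. This map into tuples of such sequences is injective --- the original $S$ is recovered by layerwise union --- though not onto, since a layerwise union of independent sets need not be independent. Carrying weights and sizes through it yields
$$ \mu^*(B) \ \leq\ (1+\epsilon/3)\,P_\Omega(B)\sum_{\substack{I \subseteq N(B)\\ I \text{ indep.}}}\prod_{B' \in I}\mu^*(B'), $$
where $I = \emptyset$ gives the depth-zero sequence. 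A standard truncation then finishes: writing $\mu^*_{(k)}$ for the part of $\mu^*$ coming from sequences of depth below $k$, we have $\mu^*_{(0)} = 0 \leq \tilde\mu$, and by the recursion and monotonicity of its right-hand side $\mu^*_{(k+1)}(B) \leq (1+\epsilon/3)P_\Omega(B)\sum_{I}\prod_{B' \in I}\mu^*_{(k)}(B') \leq (1+\epsilon/3)P_\Omega(B)\sum_{I}\prod_{B' \in I}\tilde\mu(B') \leq \tilde\mu(B)$ by the induction hypothesis; letting $k \to \infty$ gives $\mu^* \leq \tilde\mu$, and in particular $\mu^*$ is finite.

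With the lemma in hand I would dispatch the three parts. For part~2 the weighting $\tilde\mu$ is given and satisfies the $\epsilon$-slack --- hence $\epsilon/3$-slack --- cluster-expansion inequality, so the lemma gives $W \leq \sum_B \tilde\mu(B)$ immediately, while the statement that Shearer's criterion holds with $\epsilon$-slack is the known implication that cluster-expansion with $\epsilon$-slack forces Shearer's criterion for the inflated probabilities $(1+\epsilon)P_\Omega$ (Bissacot et al.\ \cite{bissacot}; see also \cite{kolipaka}). For part~1, take $\tilde\mu(B) = x(B)/(1-x(B))$: since $\sum_{I}\prod_{B' \in I}\tilde\mu(B') \leq \prod_{B' \in N(B)}(1+\tilde\mu(B')) = \prod_{B' \in N(B)}(1-x(B'))^{-1}$, the cluster-expansion inequality with $\epsilon$-slack is implied by the asymmetric criterion, and part~1 follows from part~2 with $W \leq \sum_B x(B)/(1-x(B))$. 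For part~3, applying part~1 with $x(B) \equiv 1/d$ --- valid since $\tfrac1d(1-\tfrac1d)^{d-1} \geq \tfrac1{ed} \geq (1+\epsilon)p$ under the hypothesis --- already gives Shearer with $\epsilon$-slack; but for the sharper bound $W \leq emp$ I would apply the lemma instead to $\tilde\mu(B) = e\,P_\Omega(B)$, which satisfies the $\epsilon/3$-slack cluster-expansion inequality because $\prod_{B' \in N(B)}(1+e P_\Omega(B')) \leq e^{\,e\sum_{B' \in N(B)}P_\Omega(B')} \leq e^{\,epd} \leq e^{\,1/(1+\epsilon)}$ and $(1+\epsilon/3)\,e^{1/(1+\epsilon)} \leq e$ for $\epsilon \in (0,1)$ (equivalently $\ln(1+\epsilon/3) \leq \epsilon/(1+\epsilon)$). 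This gives $W \leq e\sum_B P_\Omega(B) \leq emp$; the factor $\epsilon/3$ in the definition of $W$ is exactly what makes this last step work.

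The main obstacle is the decomposition recursion for $\mu^*$. One must set up the parent-assignment carefully enough that each sub-sequence it produces really is a valid singleton stable-set sequence (checking the layer-containment condition), that the induced map into tuples is injective so the inequality points the right way, and that the degenerate cases --- the depth-zero sequence, a bad-event recurring across several layers of one sequence --- cause no trouble. This is the same combinatorial core that underlies the equivalence between the cluster-expansion and Shearer criteria; everything after it, namely the monotone fixed-point step and the three elementary choices of $\tilde\mu$, is routine.
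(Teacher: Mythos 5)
The paper never actually proves Theorem~\ref{equiv-thms}: it is stated in Appendix~\ref{app-kolm} as background, with the heavy lifting implicitly delegated to Kolipaka--Szegedy \cite{kolipaka} and Bissacot et al.\ \cite{bissacot}. So there is no "paper proof" to match against; judged on its own, your argument is correct and is essentially a self-contained reconstruction of the standard derivation. The core lemma --- decomposing a singleton stable-set sequence rooted at $B$ via a parent assignment into a tuple of singleton sequences rooted at the elements of $S_1$, observing the map is injective but not surjective to get the recursion $\mu^*(B) \leq (1+\epsilon/3)P_\Omega(B)\sum_I\prod_{B'\in I}\mu^*(B')$, and then closing with the monotone truncation/fixed-point induction --- is exactly the Kolipaka--Szegedy argument, and you handle the delicate points (validity of the sub-sequences, the $I=\emptyset$/depth-zero case, sizes and weights being additive/multiplicative across the partition) correctly. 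The three specializations are also right: $\tilde\mu = x/(1-x)$ recovers the asymmetric criterion (note you are silently correcting two typos in the paper's statement, $x(B)$ for $x(B')$ in part~1 and $\prod_I\prod_{B'}\tilde\mu(I)$ for $\sum_I\prod_{B'}\tilde\mu(B')$ in part~2, and you need $x(B)<1$ for the bound to be meaningful), and $\tilde\mu = e P_\Omega$ with the estimate $(1+\epsilon/3)e^{1/(1+\epsilon)}\leq e$ is precisely why the $\epsilon/3$ in the definition of $W$ is there, so the $W\leq emp$ bound comes out cleanly. The one place you lean on an external citation --- that the cluster-expansion criterion with $\epsilon$-slack implies Shearer with $\epsilon$-slack --- is unavoidable at some level (Shearer's criterion is only characterized via the Kolipaka--Szegedy equivalence with convergence of $\sum_S w(S)$, of which the paper states only one direction), but you could make the dependence minimal by noting that running your lemma with the factor $(1+\epsilon)$ in place of $(1+\epsilon/3)$ yields $\sum_{S\in\Gamma}(1+\epsilon)^{|S|}w(S)<\infty$, which is exactly the convergence statement equivalent to $\epsilon$-slack Shearer.
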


For each bad-event $B \in V_i$ during Algorithm~\ref{seq-alg1}, we define a corresponding sequence $\hat S(B,i) = (S_1, \dots, S_i)$ by setting $S_i = \{ B \}$ and then, for $j = i-1, \dots, 1$, setting $S_j = I'_j \cap \overline N( S_{j+1} )$. 

\begin{proposition}
  For $B \in V_i$, the sequence $\hat S(B,i)$ is a stable-set sequence of depth $i$ rooted at $B$.
\end{proposition}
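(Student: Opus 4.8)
The plan is to read off the three defining properties of a singleton stable-set sequence (Appendix~\ref{app-kolm}) directly from the construction of $\hat S(A,i) = (S_1,\dots,S_i)$, which builds $S_i, S_{i-1}, \dots, S_1$ in that order starting from $S_i = \{A\}$. First I would check that the construction is well-defined: since $A \in V_i$, Algorithm~\ref{seq-alg1} runs for at least $i$ rounds, so $I'_1,\dots,I'_{i-1}$ all exist; and since $S_j \subseteq I'_j \subseteq V_j$ for $j < i$ (and $S_i = \{A\} \subseteq V_i$ by hypothesis), every $S_k$ is a subset of $\mathcal B$.

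Next I would verify that each $S_j$ is independent. For $j = i$ this is immediate since $S_i = \{A\}$ is a singleton; this also shows $\hat S(A,i)$ is singleton and rooted at $A$ (in the convention of Appendix~\ref{app-kolm} the singleton root is the set at the top index, namely $S_i = \{A\}$). For $j < i$ it suffices to show $I'_j$ is independent, and this is clear from Algorithm~\ref{seq-alg1}: whenever a bad-event $B$ is inserted into $I'_j$, every $B' \in V_j$ with $B' \sim B$ is immediately marked dead, so no two distinct members of $I'_j$ are lopsidependent. The remaining requirement, the chaining condition $S_j \subseteq \bigcup_{B' \in S_{j+1}} N(B')$, is built into the definition of $S_j$: each $B \in S_j$ is selected precisely because $B \sim B'$ for some $B' \in S_{j+1}$, whence $B \in N(B')$. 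Putting these together yields that $\hat S(A,i)$ is a singleton stable-set sequence of depth $i$ rooted at $A$.

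There is no genuine obstacle here — the whole argument is unwinding definitions, and in particular it does not even invoke Lemma~\ref{height-lemma} — but two small points warrant care. First, $\hat S$ is indexed in the reverse direction from the abstract definition of Appendix~\ref{app-kolm}: the abstract root is our top set $S_i = \{A\}$, and the abstract condition ``$S_i \subseteq \bigcup_{B\in S_{i-1}} N(B)$'' reads here as ``$S_j \subseteq \bigcup_{B'\in S_{j+1}} N(B')$''. Second, the statement does not require the $S_j$ to be nonempty; should some $S_j$ be empty, the construction forces $S_{j'} = \emptyset$ for all $j' < j$ as well, and such a sequence (with an all-empty initial segment) is still a legitimate stable-set sequence, so this case is harmless.
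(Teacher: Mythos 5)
Your unwinding of the definitions is right as far as it goes: independence of each $S_j$ follows from independence of $I'_j$, and with the root placed at $S_i = \{A\}$ the chaining condition $S_j \subseteq \bigcup_{B' \in S_{j+1}} N(B')$ is indeed built into the construction of $\hat S$. But the step you dismiss at the end is exactly the substantive content of the proposition, and the dismissal is wrong. The claim that a sequence with an all-empty initial segment is ``still a legitimate stable-set sequence, so this case is harmless'' does not survive contact with how these sequences are used. First, if empty levels were admitted into $\Gamma$, the work factor $W = \sum_{S \in \Gamma} (1+\epsilon/3)^{|S|} w(S)$ would diverge, since any sequence could be padded with arbitrarily many empty levels, each of weight $1$; so the definition must be read as excluding them. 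Second, the proof of Lemma~\ref{length-bound} passes from ``$\text{depth}(S) \geq t/2$'' to ``$|S| \geq t/2$'', which requires every level of every singleton stable-set sequence --- in particular of every $\hat S(A,i)$ --- to be nonempty. A sequence $\hat S(A,i)$ with $S_{i-1} = \emptyset$ would have depth $i$ but size $1$, and the round bound would collapse.

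Establishing that no $S_j$ is empty (equivalently, that every $B \in S_j$ has an inclusive neighbor in $S_{j-1}$, so the chain of causes never breaks going back in time) is precisely why the paper's proof invokes Lemma~\ref{height-lemma}, and it is not a formality. A priori, $A$ could be true at the start of round $i$ without any element of $N(A)$ having been resampled in round $i-1$, in which case $S_{i-1}$ would be empty; this is an LLLL-specific danger that Lemma~\ref{height-lemma} --- resting on the locality and obliviousness properties (C2), (C4) of the resampling oracle --- rules out. Concretely: any $B \in S_j \subseteq I'_j \subseteq V_j$ with $j \geq 2$ satisfies $B \in I'_{j-1}$ or $B \sim B'$ for some $B' \in I'_{j-1}$, and the construction of $S_{j-1}$ then places such a $B'$ into $S_{j-1}$; downward induction from $S_i = \{A\} \neq \emptyset$ gives nonemptiness of every level. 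Your proof needs to include this step rather than sidestep it.
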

\begin{proof}
  Clearly $\hat S(B,i)$ has depth $i$ and $S_{i} = \{ B \}$, and also clearly $S_j \subseteq \overline N(S_{j+1})$. Since $I'_j$ is stable, so is each $S_j$. Finally, to show that $S_j$ is non-empty, consider some $A \in S_{j+1}$; note that Lemma~\ref{height-lemma} ensures that there is some $A' \in \overline N(A) \cap I'_{j-1}$; this $A'$ will appear in $S_j$. 
\end{proof}

We say that a given depth-$i$ stable-set sequence $S$ rooted at $B$ \emph{appears} if $\hat S(B,i) = S$. Iliopoulos \cite{ilio} showed a connection between appearing stable-sequences and probabilities of bad-events in Algorithm~\ref{gen-alg} for a commutative resampling oracle. These bounds also apply to Algorithm~\ref{seq-alg1} since it is a version of Algorithm~\ref{gen-alg}. We summarize the key result as follows:

\begin{proposition}[\cite{ilio}]
\label{k-prop0}
For a commutative resampling oracle, any stable-set sequence $S$ appears with probability at most $w(S)$.
\end{proposition}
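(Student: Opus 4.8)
The plan is to adapt the Moser--Tardos witness-tree argument to the commutative-oracle setting, with the stable-set sequence $S$ playing the role of the witness tree. The strategy splits into two halves. First I would define a stand-alone ``validation'' procedure that attempts to carry out exactly the resamplings prescribed by $S$, starting from a fresh draw of $\Omega$, and show it succeeds with probability \emph{exactly} $w(S)$. Second, I would couple the randomness of Algorithm~\ref{seq-alg1} to this validation procedure and show that whenever $S$ appears in the algorithm, the coupled run of the validation procedure succeeds. Combining the two halves yields $P(S \text{ appears}) \le w(S)$.

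For the first half: fix an ordering of the ``slots'' of $S$ --- that is, the pairs $(B,j)$ with $B \in S_j$ --- processing the levels $S_1, \dots, S_i$ in order and, within a level, in an arbitrary fixed order. The validation procedure draws $u \approx \Omega$ and then handles one slot at a time: at the slot $(B,j)$ it checks whether $u \in B$ (\textbf{abort} if not), draws $y_B \approx Y_B$, and updates $u \leftarrow y_B u$. One shows by induction that $u$ is distributed as $\Omega$ immediately before each slot: this holds initially, and is preserved at each step because conditioning $u \approx \Omega$ on $u \in B$ gives $u \approx \Omega^B$, after which property (C1) (probability regeneration) guarantees $y_B u \approx \Omega$. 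Hence each slot is reached and passed with probability exactly $P_\Omega(B)$, independently across slots, and multiplying over all slots gives $\prod_j \prod_{B \in S_j} P_\Omega(B) = w(S)$.

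For the second half: run Algorithm~\ref{seq-alg1} and the validation procedure on a shared source of randomness, so that the variable used by the validation procedure at slot $(B,j)$ is the one the algorithm drew in round $j$ for the resampling of $B$ (which, if $S$ appears, is well defined since $S_j \subseteq I'_j$). The core claim is that, conditioned on $S$ appearing, the state seen by the validation procedure just before slot $(B,j)$ agrees with the algorithm's state just before it resampled $B$ in round $j$ \emph{as far as the truth value of $B$ is concerned}, so the validation procedure never aborts. Here the structural axioms enter: by locality (C2) and obliviousness (C4), a resampling of a bad-event $B'' \not\sim B$ can neither turn $B$ from false to true nor change whether a subsequent resampling lands in $B$, so the only resamplings relevant to $B$ being true when it is resampled are those of bad-events dependent with $B$; and Lemma~\ref{height-lemma}, together with the construction of $\hat S$, guarantees that every such relevant resampling performed by the algorithm before resampling $B$ already occupies an earlier slot of $S$. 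Commutativity (in the form (C3$'$)) then lets us reorder the resamplings within each round so that the relevant slots are executed in precisely the validation procedure's order without changing the resulting distribution, after which the irrelevant resamplings can simply be deleted. Making this ``reorder-and-delete'' step rigorous --- checking that each exchange of adjacent resamplings of independent bad-events preserves the joint distribution of the state and the remaining randomness, and that deleting a resampling of some $B'' \not\sim B$ leaves the truth values of all events of $S$ unchanged at the moments that matter --- is the main obstacle; the rest is bookkeeping. Once it is established, $\{S \text{ appears}\}$ is contained under the coupling in the success event of the validation procedure, and combining with the first half gives the bound.
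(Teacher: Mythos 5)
The paper does not actually prove Proposition~\ref{k-prop0}; it is imported as a black box from \cite{ilio} (with pointers to \cite{harvey,kolmogorov} for the technical machinery), so your attempt has to stand on its own. The first half of your argument is fine: the induction showing that the validation procedure's state is distributed as $\Omega$ before each slot, via conditioning on $u\in B$ and then applying (C1), is the standard regeneration argument, and it correctly yields success probability exactly $w(S)$.

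The gap is in the second half, and it is not mere bookkeeping. You propose a \emph{pointwise} coupling (``a shared source of randomness'') and then want to invoke commutativity to ``reorder the resamplings within each round\dots without changing the resulting distribution.'' But (C3)/(C3') is only a \emph{distributional} statement: it says $y'yu$ and $yy'u$ agree in law when $y,y'$ are freshly drawn from the appropriate (conditioned) spaces; it gives no bijection of the randomness under which $r_{B'}(r_B(u,y),y')=r_B(r_{B'}(u,y'),y)$ holds for fixed $y,y'$. So once you have frozen the randomness to define the coupling, you cannot reorder adjacent independent resamplings and keep the trajectory, and the claim that the validation procedure ``never aborts'' on the coupled run does not follow. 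A second, related problem is the assertion that ``the only resamplings relevant to $B$ being true when it is resampled are those of bad-events dependent with $B$'': in the LLLL setting a resampling of $B''\not\sim B$ \emph{can} turn $B$ from true to false (this ``fixing'' phenomenon is exactly why the paper needs the LFMIS machinery), so deleting such resamplings is not innocuous, and handling it cleanly requires either obliviousness (C4) --- which the proposition, stated for a general commutative oracle, does not assume --- or the distributional injection/counting arguments of \cite{kolmogorov,ilio}. The correct resolution is to abandon the pointwise coupling and argue at the level of distributions over entire execution logs (as \cite{ilio} does), bounding the probability that the canonical reordered log realizing $S$ occurs; as written, your two halves --- a conditional-probability telescoping and a fixed-randomness reordering --- cannot be glued together.
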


Using our bounds on stable-set sequences and arguments from \cite{hh}, we now prove Lemma~\ref{length-bound}:
\begin{proof}[Proof of Lemma~\ref{length-bound}]
Each $B \in V_i$ corresponds to an appearing depth-$i$ stable-set sequence $\hat S(B,i)$. All such stable-set sequences are distinct: if $i \neq i'$, then the depths of $\hat S(B,i)$ and $\hat S(B,i')$ are distinct, while if $B \neq B'$ then the roots of $\hat S(B,i)$ and $\hat S(B', i)$ are distinct.

  Thus, $\sum_i |V_i|$ is at most the number of appearing stable-set sequences. Proposition~\ref{k-prop0} shows that $\bE[ |V_i| ] \leq \sum_{S \in \mathfrak S}  w(S) \leq W$. So by Markov's inequality, $\sum_i |V_i| \leq \text{poly}(n) W$ whp.

 If Algorithm~\ref{seq-alg1} runs for $t$ rounds, then for each $i = 1, \dots, t$, there is at least one appearing depth-$i$ stable set sequence (namely $\hat S(B,i)$ for an arbitrary $B \in V_i$).  Thus, a necessary condition for Algorithm~\ref{seq-alg1} to run for $t$ rounds is that at least $t/2$ distinct singleton stable-set sequences of size at least $t/2$ appear. By Proposition~\ref{k-prop0}, the expected number of such sequences is given by
  {\allowdisplaybreaks
  \begin{align*}
  \sum_{\substack{S \in \mathfrak S \\ |S| \geq t/2}} w(S) =   (1 + \epsilon)^{-t/2} \sum_{\substack{S \in \mathfrak S \\ |S| \geq t/2}} w(S) (1+\epsilon)^{t/2} \leq  (1 + \epsilon)^{-t/2} \sum_{S \in \mathfrak S} w(S) (1+\epsilon)^{|S|} = (1 + \epsilon)^{-t/2} W
  \end{align*}
}
  By Markov's inequality, the probability that the actual number exceeds $t/2$ is at most $\frac{ (1 + \epsilon)^{-t/2} W }{t/2}$. This is below $n^{-\Omega(1)}$ for some $t = \Theta( \log( n +W \epsilon)/\epsilon)$.
\end{proof}

\section{Alternative variable-assignment LLLL criterion}
\label{alt-crit-app}
We summarize here an alternate criterion of Harris for the variable-assignment LLLL \cite{harris-llll}.

Given a bad-event $B$ of the variable-assignment LLLL and a set $E \subseteq \overline N(B)$, we say that $E$ is \emph{orderable} to $B$ if either $E = \{B \}$, or there is an ordering $B \equiv X_{i_1} = j_1 \wedge \dots \wedge X_{i_k} = j_k$ and an ordering $E = \{B_1, \dots, B_{k'} \}$ such that, for each $\ell = 1, \dots, k$, the bad-event $B_{\ell}$ demands $X_{i_{\ell}} \neq j_{\ell}$ and none of the events $B_{\ell'}$ for $\ell' < \ell$ do so. We also say that a map $\mu: \mathcal B \rightarrow [0,\infty)$ satisfies the orderability criterion with $\epsilon$-slack for $\mathcal B$ if it satisfies
$$
\forall B \in \mathcal B \qquad \mu(B) \geq (1+\epsilon) \Pr_{\Omega}(B) \sum_{\substack{\text{$E \subseteq \overline N(B)$} \\ \text{$E$ orderable to $B$}}} \prod_{A \in E} \mu(A)
$$

The main result of \cite{harris-llll} is the following:
\begin{theorem}
Suppose that the map $\mu$ satisfies the orderability criterion with $\epsilon$-slack for $\mathcal B$. Then the expected number of resampling executed by the MT algorithm  is at most $\sum_{B \in \mathcal B} \mu(B)$.
\end{theorem}

To show this, \cite{harris-llll} defined a type of witness tree, which differs slightly from the witness trees in the original analysis of Moser \& Tardos and from the stable-set sequences discussed in Appendix~\ref{app-kolm}. Let us summarize very briefly. Suppose we have run the sequential MT algorithm up to some time $T$, resampling bad-events $B_1, \dots, B_T$, and that some event $A$ is currently true. To generate the witness tree $\hat \tau_{A,T}$, we start with a root node labeled $A$. For each $\ell = T, T-1, \dots, 1$, we try to add a node to the tree with label $B_{\ell}$, placing it as a child of some node labeled $C$ with $C \sim B_{\ell}$. If there are multiple eligible positions we place the node at greatest depth (breaking ties arbitrarily).

However, one additional condition is enforced: for any node $v \in \hat \tau_{A,T}$ with label $C$, the children of $v$ must have distinct labels $C_1, \dots, C_s$ such that $\{C_1, \dots, C_s \}$ is orderable to $C$. A node $v$ is not eligible to have a child node labeled $B$, if adding such node would violate this condition. 

We say that a labeled tree $\tau$ \emph{appears} if $\hat \tau_{A,t} = \tau$ for any event $A$ and time $t$. We define $\mathfrak S'$ to be the set of all possible labeled trees that could appear.   The key lemma of \cite{harris-llll} is the following:
\begin{lemma}
\label{wt-lem-alt}
Any labeled tree $\tau$ appears with probability at most $w(\tau)$. Furthermore, we have $\sum \limits_{\tau \in \mathfrak S'} (1+ \epsilon)^{|\tau|} w(\tau) \leq W$ where we define $W = \sum_{B \in \mathcal B} \mu(B)$.  
\end{lemma}

Algorithm~\ref{par-alg1} can be viewed as a simulation of the sequential MT algorithm, so this same lemma applies to it.  By using arguments of \cite{harris-llll} for a similar parallel resampling algorithm, we can see that if a bad-event $B$ is true after $T$ total resamplings in the middle of round of $t$ of Algorithm~\ref{seq-alg1}, then witness tree $\hat \tau_{B, T}$ has depth $t$ and is rooted at $B$.  This allows us to show a result analogous to Lemma~\ref{length-bound} in terms of the orderability criterion, and thereby to show Theorem~\ref{alt-lll-thm}. Since the proof is nearly identical to Lemma~\ref{length-bound} and Theorem~\ref{par-bound2}, we omit it here.

\section{Proof of Theorem~\ref{atomic-thm}}
\label{atomic-thm-proof}
We suppose here we have a resampling-space $\mathcal A, R, U, \Omega, \sim$ satisfying conditions (C1), (C2), (C4). At later stages in the proof we may also assume it satisfies conditions (C3') and (C5).  

It will be convenient to work with ordered sequences from $\mathcal A$. We say that $H = (A_1, \dots, A_k)$ is a \emph{stable list} if $A_i \not \sim A_j$ for $i \neq j$.  For a permutation $\pi \in S_k$, we define $\pi H = (A_{\pi 1}, \dots, A_{\pi k})$. Likewise, we define $R_H$ to be the set of products $h_k \cdots h_1$ wherein $h_i \in R_{A_i; A_{i+1}, \dots, A_k}$.  Whenever we discuss resampling an event $C = \langle E \rangle$ and we write $E = \{A_1, \dots, A_k \}$, then we tacitly assume that we have chosen to order the elements of $E$ as $A_1, \dots, A_k$, so that $R_C = R_H$ for the stable list $H = (C_1, \dots, C_k)$. 
 
\begin{proposition}
  \label{c1prop}
 $\oa$ satisfies (C1).
\end{proposition}
\begin{proof}
Consider $C = \langle A_1, \dots, A_k \rangle$. Let $r_1, \dots, r_k$ be independent variables, wherein
$r_i$ is drawn from $\Gamma_{A_i; A_{i+1}, \dots, A_k}$. We need to show that when $u \approx \Omega | C$,
then $r_k \dots r_1 u \approx \Omega$.

For each $i = 0, \dots, k$ let us define $Q_i = A_{i+1} \cap \dots \cap A_k$ and $u_i =r_i \dots r_1 u$. Since each $r_i$ is chosen from $R_{A_i; Q_{i+1}}$, we see that $u_i \in Q_i$ with probability one for all $i$. We will show that that $u_i \approx \Omega | Q_i$ by induction on $i$. The base case $i = 0$ is given to us by hypothesis (since $C = A_1 \cap \dots \cap A_k$), and the case $i = k$ is what we are trying to prove. 

Consider a state $\tilde u \approx \Omega | A_i$ and $\tilde r \approx \Gamma_{A_i}$. For any $v \in U$, property (C1) gives $\Pr( \tilde r \tilde u = v ) = \Omega[v]$. If $\tilde r \tilde u \in Q_{i+1}$, then we claim that $\tilde u \in Q_{i+1}$; for, if $\tilde u \notin A_j$ for some $j > i$, then by property (C2) $\tilde r \tilde u \notin A_j$ as well. Similarly, if $\tilde r \tilde u \in Q_{i+1}$, then $\tilde r \in R_{A_i; A_j}$; for if $\tilde r \notin R_{A_i; A_j}$, then by property (C4) we would have $\tilde r \tilde u \notin A_j$. Thus, for $v \in Q_{i+1}$, we have
\begin{align*}
  \Omega[v] &= \Pr(\tilde r \tilde u = v) = \Pr(\tilde r \tilde u = v \wedge \tilde u \in Q_{i+1} \wedge \tilde r \in R_{A_i; \{ A_{i+1}, \dots, A_k \} } ) \\
  &= \Pr(\tilde r \tilde u = v \mid \tilde u \in Q_{i+1} \wedge \tilde r \in R_{A_i; \{ A_{i+1}, \dots, A_k \} } )  \Pr(\tilde u \in Q_{i+1}) \Pr(\tilde r \in R_{A_i; \{ A_{i+1}, \dots, A_k \} } )
\end{align*}

By induction hypothesis, $u_i$ and $\tilde u \mid \tilde u \in Q_{i+1}$ both have the distribution $\Omega |  Q_i$.  Likewise, $r_i$ and $\tilde r \mid \tilde r \in R_{A_i; Q_{i+1}}$ both have the distribution $\Gamma_{A_i; Q_{i+1}}$. Furthermore, the variables $\tilde u, \tilde r$ are independent and the variables $u_i, r$ are independent. This implies that
$$
 \Pr(\tilde r \tilde u = v \mid \tilde u \in Q_{i+1} \wedge \tilde r \in R_{A_i; \{ A_{i+1}, \dots, A_k \} } )  = \Pr( r_i u_i = v)
$$

So $\Omega[v] = \Pr( r_i u_i = v)  \Pr(\tilde u \in Q_{i+1}) \Pr(\tilde r \in R_{A_i; \{ A_{i+1}, \dots, A_k \} } )$. This shows that $\Pr(r_i u_i = v)$ is proportional to $\Omega[v]$ for any $v \in Q_{i+1}$. Since $r_i u_i \in Q_{i+1}$ with probability one, this implies that $u_{i+1} = r_i u_i \approx \Omega | Q_{i+1}$. 
\end{proof}

\begin{proposition}
 $\oa$ satisfies (C2).
\end{proposition}
\begin{proof}
  Consider $C = \langle A_1, \dots, A_k \rangle$ and $C' = \langle E' \rangle$ with $C \not \sim C'$, and let $u \in C - C'$. Consider $r = r_k \dots r_1  \in R_C$. There must exist some $A' \in E'$ such that $u \notin A'$.  We can show that that $r_i \dots r_1 u \notin A'$ for all $i$, by an induction on $i$: the base case $i = 0$ holds since $u \notin A'$, and the induction step follows from property (C2) applied to event $A_i$ and $A'$.

  At $i = k$, this shows that $r u = r_k \cdots r_1 u \notin A' \supseteq C$.
\end{proof}

\begin{proposition}
  \label{equiv-gg1}
  Let $C = \langle A_1, \dots, A_k \rangle$ and $C' = \langle A_{k+1}, \dots, A_{\ell} \rangle$ be events in $\oa$ where $C \not \sim C'$. For any state $u \in C \cap C'$ and $r  \in R_C$, the following are equivalent:
  \begin{enumerate}
  \item $r u \in C'$
  \item There exist $r_1, \dots, r_k$ such that  $r = r_k \cdots r_1$ and $r_i \in R_{A_i; A_{i+1}, \dots, A_k, A_{k+1}, \dots, A_{\ell}}$ for all $i = 1, \dots, k$
    \end{enumerate}
\end{proposition}
\begin{proof}
For (2) $\Rightarrow$ (1), a simple induction on $i$ shows that $r_i \cdots r_1 u \in C'$ for $i = 0, \dots, k$. 

For (1) $\Rightarrow$ (2), the definition of $R_C$ shows  $r = r_k \dots r_1$ where each $r_i$ is in $R_{A_i; A_{i+1}, \dots, A_k}$. If $r_i \in R_{A_i; E'}$ for $i = 1, \dots, k$ we are done; otherwise, let $i$ be minimal such that $r_i \notin R_{A_i; A_j}$ for some $j > k'$. So $u' = r_i \dots r_1 u \notin A_j$. Since $C \not \sim C'$, by repeated applications of (C2), we see also that $r_k \dots r_1 u = r_k \dots r_{j+1} u'$ is also not in $A_j$ and hence not in $C'$.
\end{proof}

\begin{corollary}  
  $\oa$ satisfies (C4).
\end{corollary}
\begin{proof}
  For events $C, C'$ with $C \not \sim C'$, Proposition~\ref{equiv-gg1} gives an explicit condition on $r \in R_C$  to ensure that $r u \in C'$ for $u \in C \cap C'$. This condition depends solely on $r$, and not $u$ itself.
\end{proof}

\begin{proposition}
\label{oac6}
If $\mathcal A$ satisfies (C5), then  $\oa$ satisfies (C5).
\end{proposition}
\begin{proof}
Consider $C = \langle E \rangle$ for $E = \{A_1, \dots, A_k \}$.  For each $i = 1, \dots, k$ we define $G_i = R_{A_i; A_{i+1}, \dots, A_k}$. For $i = k+1, \dots, 1$, we claim that there exists exactly one state $w_i \in A_{i} \cap \dots \cap A_k$ such that $u \in G_k \dots G_{i} w_i$. The base case $i =k+1$ holds vacuously with $w_i = u$, and the case $i = 1$ is what we are trying to show.

  For the induction step, we first show existence.  By (C5), there exists $w_{i} \in A_{i}$ such that $w_{i+1} \in R_{A_i; A_{i+1}} w_i$. So  $w_{i+1} = h w_i$ for some $h \in R_{A_i; A_{i+1}}$. By induction hypothesis, we have $w_{i+1} \in A_j$ for $j > i+1$. Since $A_i \not \sim A_j$, it must be the case that $w_i \in A_j$ and $r \in R_{A_i; A_j}$ for each such $j$. Thus, $h \in R_{A_i; A_{i+1}, \dots, A_k} = G_i$ and $w_i \in A_i \cap \dots \cap A_k$.

  Next, we show uniqueness. Suppose that $w_{i+1} \in G_i w'$ for some $w' \in A_i \cap \dots \cap A_k$. Since $w' \in A_i$ and $G_i \subseteq R_{A_i; A_{i+1}}$, by (C5) this implies that $w' = w_i$.
\end{proof}

 \begin{proposition}
   \label{oac5}
Suppose that $\mathcal A$ satisfies (C3'). Then for a stable list $H = (A_1, \dots, A_k)$, any $u \in U$, and any $\pi \in S_k$, we have $R_H u = R_{\pi H} u$.
\end{proposition}
\begin{proof}
Since we can generate any permutation $\pi$ by swapping adjacent elements, it suffices to show this holds when $\pi = (j  \  \ \ j+1)$ for some $j < k$. 

Let $r = h_k \cdots h_1 \in R_H$ wherein each $h_i \in R_{A_i; A_{i+1}, \dots, A_k}$. Define $u' = h_{j-1} \dots h_1 u$. Note that $u' \in A_j \cap A_{j+1}$. By (C3') applied to events $A_j, A_{j+1}$, there exist $h_j' \in R_{A_j}, h_{j+1}' \in R_{A_{j+1}; A_j}$ with $h'_j h'_{j+1} u' = h_{j+1} h_j u'$. Since $h_{j+1} h_j u' \in A_{j+2} \cap \dots \cap A_k$, it must be the case that $h'_j \in R_{A_{j+1}; A_{j+2}, \dots, A_k}$ and $h'_{j+1} \in R_{A_{j+1}; A_j, A_{j+2}, \dots, A_k}$.

Now set $r' = h_k h_{k-1} \dots h_{j+2} h_j' h_{j+1}' h_{j-1} \dots h_1$. We thus have shown that $r' \in R_{\pi H}$. Furthermore, we have $r u = h_k \dots h_1 u = h_k \dots h_{j+2} h_j' h'_{j+1} h_{j-1} \dots h_1 u = r' u$. \qedhere
\end{proof}

\begin{proposition}
  \label{hbbn1}
  If $\mathcal A$ satisfies (C3'), then $\oa$ satisfies (C3')
\end{proposition}
\begin{proof}
  Consider events $C_1 = \langle A_1, \dots, A_k  \rangle$ and $C_2 = \langle  A_{k+1}, \dots, A_{\ell} \rangle$ and any $u \in C_1 \cap C_2$. By symmetry, it suffices to show that for any $r_1 \in R_{C_1; C_2}, r_2 \in R_{C_2}$ there are $r_1' \in R_{C_1}, r_2' \in R_{C_2}$ with $r_2 r_1 u = r_1' r_2' u$.
  
Define $H = (A_1, \dots, A_{\ell})$. By definition of $R_{C_2}$, we have $r_2 = h_{\ell} \cdots h_{k+1}$ where $h_i \in R_{A_i; A_{i+1}, \dots, A_{\ell}}$ for $i = k+1, \dots, \ell$. By Proposition~\ref{equiv-gg1}, we have $r_1 = h_k \cdots h_1$ where $h_i \in R_{A_i; A_{i+1}, \dots, A_k, A_{k+1}, \dots, A_{\ell}}$ for $i = 1, \dots, k$. Thus, we see that $r_2 r_1 \in R_H$.

  Now define $H' = (A_{k+1}, \dots, A_{\ell}, A_1, \dots, A_k)$ and note that $H'$ is a rearrangement of the list $H$. By Proposition~\ref{oac5}, this implies that there exists $r' \in R_{H'}$ such that $r' u = r_2 r_1 u$. We can write $r' = h_k' \dots h_1' h_{\ell}' \dots h_{k+1}'$, wherein $h'_i \in R_{A_i; A_1, \dots, A_k, A_{i+1}, \dots, A_{\ell} }$ for $i = k+1, \dots, \ell$, and $h'_i \in R_{A_i; A_{i+1}, \dots, A_k}$ for $i = 1, \dots, k$. If we set $r'_1 = h'_k \dots h'_1$ and $r_2' = h'_{\ell} \dots h'_{k+1}$, then $r_1' \in R_{C_1}$ and by Proposition~\ref{equiv-gg1} we have $r_2' \in R_{C_2; C_1}$.  We then have $r_2 r_1 u = r_1' r_2' u$ as desired.  
\end{proof}

\section{Proof of Theorem~\ref{greedy-th0}}
\label{lfmis-app}
Consider a directed graph $G = (V,E)$, with a permutation $\pi: [n] \rightarrow V$ chosen uniformly at random. Let $G^{\pi}$ denote the directed acyclic graph on vertex set $V$ and edge-set $\{(u,v) \mid (u,v) \in E, \pi^{-1}(u) < \pi^{-1}(v) \}$.  Let $I^{\pi}$ denote the LFMIS of $G$ with respect to $\pi$.  For any integer $j \in [n]$, define the partial LFMIS $I^{\pi}_j = I^{\pi} \cap \{ \pi^{-1}(1), \dots, \pi^{-1}(j) \}$. For integers $0 \leq i \leq j \leq n$, define the residual vertex set $V^{\pi}_{(i,j]} = \{ \pi^{-1}(i+1), \dots,  \pi^{-1}(j) \} - I^{\pi}_{i} - \bigcup_{v \in I^{\pi}_i} N^{\text{out}}(v)$ and define $G^{\pi}_{(i,j]}$ to be the induced subgraph $G^{\pi}[V^{\pi}_{(i,j]}]$.

For the purpose of analysis, it will be useful to consider a slowed-down variant of Algorithm~\ref{algo:greedylfmis} called SLOW-GREEDY, as discussed in \cite{greedy2}.  Given integers $n_0, n_1, \dots, n_k$, it is defined as follows:
\begin{algorithm}[H]
\centering
\begin{algorithmic}[1]
\State Initialize $I = \emptyset$ and $A = V$
\For{$i = 1, \dots, k$}
\While{$ A \cap V_{(n_{i-1}, n_i]} \neq \emptyset$}
  \State Let $J$ be the set of nodes $v \in A \cap V_{(n_{i-1}, n_i]}$ such that $A \cap N^{\text{in}}(v) \cap P^{\pi}(v) = \emptyset$.
  \State Update $I \leftarrow I \cup J$ and $A \leftarrow A - \bigcup_{v \in J} N^{\text{out}}( v )$
\EndWhile
\EndFor
\end{algorithmic}
\caption{The SLOW-GREEDY algorithm}
\label{algo:slow-greedy}
\end{algorithm}

We refer to the $i^{\text{th}}$ iteration of the loop in line (2) as \emph{epoch $i$}. We make the following observations for Algorithm~\ref{algo:slow-greedy}; since the proofs are completely analogous to the undirected case, we refer to the reader to \cite{greedy2} for full proof details.
\begin{proposition}[\cite{greedy2}]
  \label{greedyprop}
For any integers $n_0, n_1, \dots, n_k$ with $0 = n_0 \leq n_1 \leq n_2 \leq \dots \leq n_k = n$, we have the following:
  \begin{enumerate}
  \item SLOW-GREEDY computes the LFMIS of $G$ with respect to $\pi$.
  \item The number of rounds in Algorithm~\ref{algo:greedylfmis} on $G$ and $\pi$ is at most the total number of rounds in SLOW-GREEDY.
    \item If all directed paths in $G^{\pi}_{(n_{i-1}, n_i]}$ have length at most $\ell$, then epoch $i$ of SLOW-GREEDY terminates in at most $\ell$ rounds.
  \end{enumerate}
\end{proposition}

Algorithm~\ref{algo:greedylfmis} can be viewed as a special case of SLOW-GREEDY with $n_0 = 0, n_1 = n, k = 1$; in particular, this shows that Algorithm~\ref{algo:greedylfmis} correctly computes the LFMIS of $G$ with respect to $\pi$.

We now analyze the path lengths in the subgraphs $G^{\pi}_{(i,j]}$. For $i = 0, \dots, n$, let us define
  $$
  D_i = \max_{v \in V^{\pi}_{(i,n]}}  |N^{\text{in}}(v) \cap V^{\pi}_{(i,n]}|
      $$

  \begin{proposition}
    \label{yw1}
    With probability at least $1 - n^{-100}$, we have $D_i \leq  \frac{200 n \log n}{i}$ for any $i = 1, \dots, n$.
\end{proposition}
\begin{proof}
 Let us fix some vertex $v$, and we want to show that either $v \notin V_{(i,n]}$ or $|N^{\text{in}}(v) \cap V_{(i,n]}| \leq d$ for $d = \frac{200 n \log n}{i}$. For each $k = 1, \dots, n$ define $\mathcal E_k$ to be the event that $v$ is alive and has at least $d$ alive in-neighbors after step $k$ of Algorithm~\ref{seq-lfmis1}.

We compute the probability of $\mathcal E_k$ conditional on $\mathcal E_1, \dots \mathcal E_{k-1}$. As $\mathcal E_1, \dots, \mathcal E_{k-1}$ are determined by $\pi(1), \dots, \pi(k-1)$, it suffices to compute the probability of $\mathcal E_i$ conditional on $\pi(1), \dots, \pi(k-1)$. This allows us to determine the set $A' = A \cap N^{\text{in}}(v)$ of alive in-neighbors of $v$ after step $k-1$. If $|A'| < d $,  then $\mathcal E_k$ is false. Otherwise, we have $\pi(k) \in A'$ with probability at least $\frac{d}{n-k+1}$, in which case $v$ is removed from $A$ after iteration $k$ and $\mathcal E_k$ is false. Thus,  $\Pr(\mathcal E_k \mid \mathcal E_1, \dots, \mathcal E_{k-1}) \leq 1 - \frac{d}{n-k+1}$. This implies that
$$
\Pr(\mathcal E_i) \leq (1 - \tfrac{d}{n}) (1 - \tfrac{d}{n-1}) \dots (1 - \tfrac{d}{n-k+1}) \leq \bigl( 1 - \tfrac{d}{n} \bigr)^{i} \leq e^{-d i/n} = e^{-\frac{200 i \log n}{i} } = n^{-200}
$$

By definition $V_{(i,n]}^{\pi}$ contains only vertices which are alive after iteration $i$. Thus, if  $\mathcal E_{i}$ is false, the desired property holds for $v$ and $i$. To finish, taking a union bound over all $n^2$ values of $v, i$.
\end{proof}

\begin{proposition}
  \label{yw2}
  Suppose that we condition on $\pi(1), \dots, \pi(i)$, and let $s = D_i j/n$, and let $L$ denote the length of the longest path in $G^{\pi}_{(i,j]}$. Then, with probability at least $1 - n^{-5}$, it holds that
  $$
  L \leq \begin{cases}
  O(s) & \text{if $s \leq \log n$} \\
   O( \frac{\log n}{\log \frac{2 \log n}{s}}) & \text{if $s > \log n$} 
   \end{cases}
   $$
    \end{proposition}
\begin{proof}
Consider the induced graph $H = G[V^{\pi}_{(i,n]}]$, which depends only on the values $\pi(1), \dots, \pi(i)$. Let $d = D_i$ be the maximum in-degree of $H$. We can enumerate the length $k$ paths of $H$ by choosing the final vertex in the path ($n$ choices), and each of the $k$ previous vertices in the path ($d$ choices each), so the number of length $k$-paths in $H$ is at most $n \times d^{k-1}$.

 A necessary condition for a path $v_1, \dots, v_k$ to survive to $G^{\pi}_{(i,j]}$ is that $\pi(v_1) < \pi(v_2) < \dots < \pi(v_k) \leq j$. Having conditioned on $\pi(1), \dots, \pi(i)$, this event has probability 
\begin{align*}
\frac{1}{k!}  \times \frac{j-i}{n-i} \times \frac{j-i-1}{n-i-1} \times \dots \times \frac{j-i-k+1}{n-i-k+1} \leq (j/n)^k/k!
\end{align*}

Taking a union-bound over all such paths, we have
\begin{align*}
\Pr(L \geq k) &\leq n d^{k-1} (j/n)^k/k! \leq n s^k / k! \leq (e s/k)^k
\end{align*}

If $s > \log n$, then note that for $k = 2 e s$ this is at most $2^{-k} \leq n^{-10}$. If $s \leq \log n$, then set $x = \frac{2 \log n}{s} \geq 2$ and $k = \frac{10 \log n}{\log x} \leq O(s)$; we then have $(e s/k)^k  = \exp\Bigl( \frac{-10 \log n}{\log  x} \times \log \bigl( \frac{10 \log n}{e s \log x } \bigr) \Bigr) = \exp\Bigl( \frac{-10 \log n}{\log  x} \times \log \bigl( \frac{5 x}{e \log x }\bigr) \Bigr)$.     As $x \geq 1$, standard analysis shows that $\log \bigl( \frac{5 x}{e \log x} \bigr) \geq 0.5 \log x$ for $x \geq 1$. Thus, this is at most $\exp \bigl( \frac{-10 \log n}{\log x} \times 0.5 \log x \bigr) = e^{-5 \log n} = n^{-5}$.
\end{proof}

We are now ready to bound the runtime. We show a slightly tighter bound in terms of the maximum in-degree of graph $G$.
\begin{theorem}
\label{greedy-th2}
Let $d = \max_{v \in G} |N^{\text{in}}(v)|$. When $\pi$ is chosen uniformly at random, then:
\begin{enumerate}
  \item For $d \leq \log n$, Algorithm~\ref{algo:greedylfmis} takes $O \Bigl( \frac{\log n}{\log \tfrac{2 \log n}{d} } \Bigr)$ rounds whp.
  \item For $d > \log n$, Algorithm~\ref{algo:greedylfmis} takes $O( \log n \log \tfrac{2 d}{\log n})$ rounds whp.
\end{enumerate}

In particular,  Algorithm~\ref{algo:greedylfmis} takes $O( \log d \log n ) \leq O(\log^2 n)$ rounds whp.
\end{theorem}

\begin{proof}
\myparagraph{1.} By Proposition~\ref{yw2} applied at $i = 0, j = n$, whp the graph $G^{\pi}_{(0,n]}$ has maximum path length $O(\frac{\log n}{\log \frac{2 \log n}{s}})$ where $s = D_i \leq d \leq \log n$.  By Proposition~\ref{greedyprop}, this implies that Algorithm~\ref{algo:greedylfmis} terminates in $O( \frac{\log n}{\log \tfrac{2 \log n}{d} })$ rounds whp.
    
    \myparagraph{2.} We will use Proposition~\ref{greedyprop} with parameters $k = \lceil \log_2  \frac{4d}{\log n}  \rceil$ and  $n_j = \min(n, \frac{2^j n \log n}{d})$ for $j = 1, \dots, k$ and $n_0 = 0$. Note that $n_k = n$ as required, since $\frac{2^k \log n}{d} \geq \frac{4d}{\log n} \times \frac{\log n}{d} \geq 4$. 

    Define $s_i = D_{n_{i-1}} n_i / n$ for $i = 1, \dots, k$.  For $i = 1$, we have $s_i \leq d n_1 /n \leq d \times \frac{2 n \log n}{n d} \leq \log n$. For $i \geq 2$, Proposition~\ref{yw1} shows that $D_{n_{i-1}} \leq \frac{200 n d \log n}{2^{i-1} n \log n} = O(d/2^i)$ with probability at least $1 - n^{-100}$, in which case $s_i \leq  O(d/2^i) \times ( 2^i n \log n / d ) / n = O( \log n)$. When these events occur,  then by Proposition~\ref{yw2}, each graph $G^{\pi}_{(n_{i-1}, n_i]}$ for $i \geq 1$ has maximum path length $O(\log n)$ with probability at least $1 - n^{-5}$.

By Proposition~\ref{greedyprop}, these facts imply that, whp, each epoch of SLOW-GREEDY takes $O(\log n)$ rounds. Overall, the total number of rounds over all $k$ epochs is $O( k \log n) = O( \log n \log \tfrac{2 d}{\log n})$. \qedhere
\end{proof}

\section{Hamiltonian cycles of $K_n$}
\label{hamilton-sec}
In order to use algebraic tools, we encode a hamiltonian cycle $(x_1, \dots, x_n, x_1)$ of $K_n$  as the permutation  $\pi = (x_1 \ x_2 \ x_3 \dots \ x_n )$. In this way, the ground set $U$ can be viewed as the set of permutations $\pi$ consisting of precisely one cycle of length $n$. We define $R$ to be the group $S_n$ with the natural group action of left-multiplication on $U$; thus properties (D0), (D2), (D3) are trivial.

For any sequence of distinct values $x_1, \dots, x_k$, let us define the set of permutations
$$
T( x_1, \dots, x_k ) = \bigl \{ (x_k \ z_k) \cdots (x_1 \ z_1) \mid z_i \in [n] - \{x_i, \dots, x_k \} \bigr \}
$$

Note that each choice for the values for $z_1, \dots, z_k$ give rise to a distinct permutation. Thus, $| T( x_1, \dots, x_k  ) | = \frac{ (n-1)! }{ (n-k-1)! }$.

We are now ready to define the resampling-space itself. Let $Q$ be the set of paths $q = (x_1, \dots, x_k)$ where $x_1, \dots, x_k$ are distinct elements of $[n]$.  We define the \emph{support} of the path $q$ by $\sup(q) = \{x_1, \dots, x_k \}$.  For such path $q$, define an atomic event
$$
\langle q \rangle = \bigl \{ \pi \in U \mid \pi(x_1) = x_2, \dots, \pi(x_{k-1}) = x_k \bigr \}
$$

We define the dependency relation by setting $\langle q \rangle \sim \langle q' \rangle$ if $\sup(q) \cap \sup(q') \neq \emptyset$.

For a given set $X \subseteq [n]$, let us define $U_X$ to be the set of permutations in $S_n$ whose cycle structure consists of fixed points at each $x \in X$, along with a single cycle on $[n] - X$. Note that $U = U_{\emptyset}$. There is an important permutation which ``normalizes'' the path $q = (x_1, \dots, x_k)$, namely
$$
\lambda_q = (x_k \ x_{k-1} \dots x_1)
$$

For $q = (x_1, \dots, x_k)$, we define $\Gamma_{\langle q \rangle}$ to be to the uniform distribution on $T( x_1, \dots, x_{k-1} ) \lambda_q$. The following observations explain the role of $\lambda_q$:
\begin{observation}
\label{gb30}
For $\pi \in S_n$ and path $q = (x_1, \dots, x_k)$, we have $\pi \in \langle q \rangle$ iff $\lambda_{q} \pi \in U_{ \{x_1, \dots, x_{k-1} \}}$.
\end{observation}
\begin{proposition}
  \label{c0-prob-3aa}
  Let $A = \langle (x_1, \dots, x_{k}) \rangle$. For $\pi \in A$ and $\sigma \lambda_q \in R_A$, we have $\sigma \lambda_q \pi \in U$.
  \end{proposition}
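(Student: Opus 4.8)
The plan is to prove the claim by computing the cycle structure of $\sigma \lambda^q \pi$ directly, exploiting the fact that multiplying by a transposition merges or splits cycles. Recall $q = (x_1, \dots, x_k)$, so $\pi \in \langle q \rangle$ is a single $n$-cycle with $\pi(x_i) = x_{i+1}$ for $1 \leq i \leq k-1$; writing $\pi = (x_1 \ x_2 \ \cdots \ x_k \ y_1 \ \cdots \ y_{n-k})$ with $\{y_1, \dots, y_{n-k}\} = [n] \setminus \{x_1, \dots, x_k\}$, and using $\lambda^q = (x_k \ x_{k-1} \ \cdots \ x_1)$ together with the composition convention $(\lambda^q \pi)(z) = \lambda^q(\pi(z))$, a one-line check of images shows that $\lambda^q \pi$ fixes each of $x_1, \dots, x_{k-1}$ and acts as the single cycle $(x_k \ y_1 \ \cdots \ y_{n-k})$ on the remaining $n-k+1$ elements. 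So the first step is to record that $\lambda^q \pi$ has fixed points exactly $x_1, \dots, x_{k-1}$ and one cycle on $[n] \setminus \{x_1, \dots, x_{k-1}\}$.

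Next I would invoke the standard fact that left-multiplying a permutation $\rho$ by a transposition $(a \ b)$ merges the two cycles of $\rho$ containing $a$ and $b$ into one cycle when $a$ and $b$ lie in distinct cycles of $\rho$; this is immediate from the convention $((a\ b)\rho)(z) = (a\ b)(\rho(z))$ and is verified by tracking images. By Proposition~\ref{grgg1}, write $\sigma \in T(\{x_1, \dots, x_{k-1}\})$ uniquely as $\sigma = (x_{k-1} \ z_{k-1}) \cdots (x_1 \ z_1)$ with $z_j \in [n] \setminus \{x_j, \dots, x_{k-1}\}$, and set $\rho_0 = \lambda^q \pi$, $\rho_j = (x_j \ z_j)\rho_{j-1}$, so that $\sigma \lambda^q \pi = \rho_{k-1}$.

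The heart of the argument is an induction showing that for each $j = 0, 1, \dots, k-1$, the permutation $\rho_j$ has fixed points exactly $x_{j+1}, \dots, x_{k-1}$ and acts as a single cycle on $[n] \setminus \{x_{j+1}, \dots, x_{k-1}\}$. The base case $j = 0$ is the computation above. For the inductive step, $x_j$ is a fixed point of $\rho_{j-1}$, while $z_j \notin \{x_j, \dots, x_{k-1}\}$ forces $z_j$ to lie in the big cycle of $\rho_{j-1}$; hence $(x_j \ z_j)$ merges the singleton cycle $\{x_j\}$ into that big cycle, and $\rho_j$ acquires exactly the claimed structure. Taking $j = k-1$, the list of fixed points $\{x_k, \dots, x_{k-1}\}$ is empty, so $\rho_{k-1} = \sigma \lambda^q \pi$ is a single $n$-cycle and therefore lies in $\mathfrak U$. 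The degenerate case $k = 1$ is handled separately and trivially: then $\lambda^q$ and $\sigma$ are both the identity, so $r_A(\pi, \sigma) = \pi \in \mathfrak U$.

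I do not expect a real obstacle here; the only points requiring care are getting the merge-versus-split direction right under the left-composition convention, and the bookkeeping of which elements remain fixed points at each stage of the induction — the constraint $z_j \notin \{x_j, \dots, x_{k-1}\}$ is exactly what guarantees that each transposition in turn merges (rather than splits) a cycle.
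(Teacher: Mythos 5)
Your proof is correct and follows essentially the same route as the paper's: both establish that $\lambda^q \pi$ fixes $x_1,\dots,x_{k-1}$ and is a single cycle on the complement, and then induct on $j$, using the constraint $z_j \in [n]-\{x_j,\dots,x_{k-1}\}$ to show that each transposition $(x_j\ z_j)$ merges the fixed point $x_j$ into the big cycle. The paper phrases the induction step as "inserting $x_j$ just before $z_j$ in the cycle," which is the same merge argument you give.
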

\begin{proof}
  Let $\sigma = (x_{k-1} \ z_{k-1}) \cdots (x_1 \ z_1)$ where $z_i \in [n] - \{x_i, \dots, x_{k-1} \}$, and $\tau_i = (x_i \ z_i) \cdots (x_1 \ z_1) \lambda_q \pi$ for $i = 0, \dots, k-1$.  We  show by induction on $i$ that $\tau_i \in U_{ \{x_{i+1}, \dots, x_{k-1} \} }$. The base case at $i = 0$ is precisely Observation~\ref{gb30} since $\tau_0 = \lambda_q \pi$, and the case at $i = k-1$ is what we are trying to show since $\sigma = \tau_{k-1}$ and $U_{\emptyset} = U$.

  For the induction step, we have $\tau_i = (x_i \ z_i) \tau_{i-1}$. The point $x_i$ does not appear in the cycle of $\tau_{i-1}$ by induction hypothesis. However, since $z_i \in [n] - \{ x_i, \dots, x_{k-1} \}$, the point $z_i$ does so. Thus $\tau_i$ has $x_i$ inserted just before $z_i$ in its cycle, moving $x_i$ from a fixed point to part of its cycle.
\end{proof}

We now show that the necessary properties are satisfied.
\begin{proposition}
\label{c3-prop-5gg}
Properties (C5) and (C1) hold.
\end{proposition}
\begin{proof}
Consider $A = \langle q \rangle$ for a path $q = (x_1, \dots, x_{k})$ and let $\rho \in U$. We claim that there is precisely one choice for the ordered pair $(\sigma, \pi)$ with $\sigma \in T( x_1, \dots, x_{k-1} )$ and $\pi \in A$ such that $\rho = \sigma \lambda_q \pi$. 

Since $\pi$ is uniquely determined from $\rho, \sigma$, we will show that there is precisely one choice for $\sigma$ such that $ \sigma^{-1} \rho \in \lambda_q A$. By Observation~\ref{gb30}, this is equivalent to showing  $\sigma^{-1} \rho \in U_{ \{x_1, \dots, x_{k-1} \}}$.

Consider $\sigma = (x_{k-1} \ z_{k-1}) \dots (x_1 \ z_1)$ where $z_i \in [n] - \{x_i, \dots, x_k \}$.  We want to show that there is a unique choice for indices $z_1, \dots, z_{k-1}$ such that $\sigma^{-1} \rho = (x_1 \ z_1) \dots (x_{k-1} \ z_{k-1}) \rho$ is in $U_{ \{x_1, \dots, x_{k-1} \}}$.

It suffices to show that for any index $j = k-1, \dots, 1$ and $\tau \in U_{ \{x_{j+1}, \dots, x_{k-1} \}}$, there is a unique choice for $z_j$ such that $(x_j \ z_j) \tau \in U_{ \{x_j, \dots, x_{k-1} \} }$. Since $\tau \in U_{ \{x_{j+1}, \dots, x_{k-1} \}}$, the element $x_j$ appears in the full cycle, followed by some $y \notin \{x_{j+1} \dots, x_{k-1} \}$. Now note that $(x_j \ z_j) \tau$ has an additional fixed point at $x_j$ precisely if $z_j = y$. Thus there is precisely one choice of $z_j$ with $(x_j \ z_j) \in U_{ \{x_j, \dots, x_{k-1} \} }$.

This shows the claim and immediately gives (C5). For (C1), note  that for any $\rho \in U$, the probability of  $\rho = \sigma \lambda_q \pi$, where $\sigma$ is drawn uniformly  from $T(x_1, \dots, x_{k-1})$ and $\pi$ is drawn uniformly from $A$, is precisely $\frac{1}{|T(x_1, \dots, x_{k-1})|} \times \frac{1}{|A|} = \frac{(n-k-1)!}{(n-1)!} \times \frac{1}{(n-k-1)!} = \frac{1}{(n-1)!}$.
\end{proof}

\begin{proposition}
\label{c4-prop-3}
Property (C2) holds.
\end{proposition}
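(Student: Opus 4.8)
The plan is to argue by contradiction. Write $q=(x_1,\dots,x_k)$ and $q'=(x'_1,\dots,x'_\ell)$; by definition of the dependency relation on paths, the hypothesis $A\not\sim A'$ says precisely that $\sup(q)\cap\sup(q')=\emptyset$. Given $\pi\in A-A'$ and $\sigma\in Y_A$, write $\sigma=(x_{k-1}\ z_{k-1})\cdots(x_1\ z_1)$ with each $z_i\in[n]\setminus\{x_i,\dots,x_{k-1}\}$, set $\rho=\lambda^q\pi$, and let $\tau=r_A(\pi,\sigma)=\sigma\rho$. I will suppose $\tau\in A'$, i.e.\ $\tau(x'_j)=x'_{j+1}$ for all $j\in[\ell-1]$, and deduce $\pi\in A'$, contradicting the hypothesis.

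Two routine observations are used throughout. First, $\rho(x_i)=x_i$ for every $i\le k-1$: this is the ``deletion'' computation from the proof of Proposition~\ref{c0-prob-3aa}, namely $\rho(x_i)=\lambda^q(\pi(x_i))=\lambda^q(x_{i+1})=x_i$. Second, the support of $\sigma$ is contained in $\{x_1,\dots,x_{k-1}\}\cup\{z_1,\dots,z_{k-1}\}$, while $\lambda^q$ moves only $\sup(q)=\{x_1,\dots,x_k\}$. Now fix $j\in[\ell-1]$ and put $w=x'_{j+1}$; since $w\in\sup(q')$, disjointness gives $w\notin\sup(q)$, so $\lambda^q$ fixes $w$ and $w\notin\{x_1,\dots,x_{k-1}\}$. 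From $\tau(x'_j)=w$ I split on whether $w$ equals some $z_i$. If $w\notin\{z_1,\dots,z_{k-1}\}$, then $w$ lies outside the support of $\sigma$, so $\sigma^{-1}(w)=w$ and the identity $\sigma(\lambda^q(\pi(x'_j)))=w$ collapses to $\lambda^q(\pi(x'_j))=w$; as $\lambda^q$ permutes $\sup(q)$ and $w\notin\sup(q)$, this forces $\pi(x'_j)\notin\sup(q)$, whence $\pi(x'_j)=w=x'_{j+1}$, exactly as desired.

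The only substantive case — and the main obstacle — is $w=z_i$ for some $i$, where the point is that under the assumption $\tau\in A'$ this cannot occur. To rule it out I will compute $\tau^{-1}(w)$ and show it lies in $\sup(q)$; since $\tau(x'_j)=w$ forces $\tau^{-1}(w)=x'_j\in\sup(q')$, this contradicts $\sup(q)\cap\sup(q')=\emptyset$. Concretely, pick $i$ maximal with $z_i=w$ and trace $w$ through $\tau^{-1}=\rho^{-1}(x_1\ z_1)(x_2\ z_2)\cdots(x_{k-1}\ z_{k-1})$, applying the rightmost transpositions first: each $(x_m\ z_m)$ with $m>i$ fixes $w$ (here $w\ne x_m$ since $w\notin\sup(q)$, and $w\ne z_m$ by maximality of $i$); then $(x_i\ z_i)=(x_i\ w)$ carries $w$ to $x_i$; each $(x_m\ z_m)$ with $m<i$ then fixes $x_i$ (here $x_i\ne x_m$, and $x_i\ne z_m$ since $z_m\notin\{x_m,\dots,x_{k-1}\}\ni x_i$); and finally $\rho^{-1}$ fixes $x_i$ by the first observation. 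Hence $\tau^{-1}(w)=x_i\in\sup(q)$, the contradiction we wanted, so this case is vacuous. Therefore the first case applies for every $j\in[\ell-1]$, giving $\pi(x'_j)=x'_{j+1}$ for all $j$, i.e.\ $\pi\in A'$ — contradicting $\pi\in A-A'$, which completes the proof. All the work is concentrated in the $w=z_i$ case (choosing $i$ maximal and using $z_m\notin\{x_m,\dots,x_{k-1}\}$ to see which transpositions fix $x_i$); the remaining steps are immediate, so I anticipate no further difficulty.
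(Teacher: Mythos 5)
Your proof is correct and is essentially the paper's argument in contrapositive form: the paper traces the image of the failing point $y_\ell$ forward through the transpositions $(x_i\ z_i)$ via a minimal-counterexample induction, while you trace the preimage of $x'_{j+1}$ backward, but both hinge on exactly the same two facts ($\lambda^q\pi$ fixes $x_1,\dots,x_{k-1}$, and $z_m\notin\{x_m,\dots,x_{k-1}\}$ forces the relevant transpositions to act trivially). No gaps; your write-up is, if anything, cleaner than the paper's (which has a couple of typos such as $\sigma_i$ for $\tau_i$).
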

\begin{proof}
  Consider $A = \langle q \rangle$ for $q = (x_1, \dots, x_{k})$ and $A' = \langle q' \rangle$ for $q' = (y_1, \dots, y_j)$ with $A \not \sim A'$ and $\pi \in A - A'$. There must exist some index $\ell < i$ with $\pi(y_{\ell}) \neq y_{\ell+1}$.

  Let $\sigma \in T( x_1, \dots, x_{k-1} )$. We claim that $\sigma \lambda_q \pi y_{\ell} \neq y_{\ell+1}$ so that $\sigma \lambda_q \pi \notin A'$. 

To show this, define $\tau_i = (x_i \ z_i) \cdots (x_1 \ z_1) \lambda_q \pi$ for $i =0, \dots, k-1$, wherein $z_j \in [n] - \{x_j, \dots, x_{k-1} \}$. Suppose that $i$ is minimal such that $\tau_i y_{\ell} = y_{\ell+1}$. It cannot be $i = 0$, as $\lambda_q y_{\ell+1} = y_{\ell+1}$ (since $y_{\ell+1} \notin \sup(q)$). 

For this value $i > 0$, it must be either that  (a) $x_i = \tau_{i-1} y_{\ell}, z_i = y_{\ell+1}$ or (b) $z_i = \sigma_{i-1} y_{\ell}, x_i = y_{\ell+1}$. The former cannot occur as $\tau_{i-1} x_i = x_i$ and the latter cannot occur as $x_i \neq y_{\ell+1}$.
\end{proof}

\begin{proposition}
  \label{c6-prop-3}
 Let $q = (x_1, \dots, x_{k})$ and $b \in [n] - \{x_1, \dots, x_{k} \}$. Let $\sigma = (x_{k-1} \ z_{k-1})  \cdots (x_1 \ z_1)$ where $z_i \in [n] - \{x_i, \dots, x_{k-1} \}$. Then $\sigma b = b$ iff $z_1, \dots, z_{k-1}$ are all distinct from $b$.
\end{proposition}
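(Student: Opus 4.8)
The plan is to follow the single point $b$ as the transpositions are applied one at a time and show that $b$ is displaced the first moment some $z_j$ equals it and is never bounced back afterwards. Write $\sigma_i = (x_i\ z_i)(x_{i-1}\ z_{i-1})\cdots(x_1\ z_1)$ for $0 \le i \le k-1$, so that $\sigma_0$ is the identity, $\sigma_{k-1} = \sigma$, and $b_i := \sigma_i(b)$ obeys $b_0 = b$ and $b_i = (x_i\ z_i)(b_{i-1})$. Two elementary facts will be used repeatedly: (i) $b \notin \{x_1,\dots,x_{k-1}\}$, since by hypothesis $b \notin \{x_1,\dots,x_k\}$; and (ii) $z_j \neq x_i$ whenever $1 \le j \le i \le k-1$, because $z_j \in [n] - \{x_j,\dots,x_{k-1}\}$ while $x_i \in \{x_j,\dots,x_{k-1}\}$.

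The core is an induction on $i$ proving, for every $0 \le i \le k-1$: (a) $b_i \in \{b\} \cup \{x_1,\dots,x_i\} \cup \{z_1,\dots,z_i\}$, and (b) $b_i = b$ if and only if $b \notin \{z_1,\dots,z_i\}$. The base case $i=0$ is immediate. For the inductive step I would branch on whether $b \in \{z_1,\dots,z_i\}$. If $b \notin \{z_1,\dots,z_i\}$, then $b \notin \{z_1,\dots,z_{i-1}\}$, so $b_{i-1} = b$ by the hypothesis, and since $b \neq x_i$ (fact (i)) and $b \neq z_i$ the transposition $(x_i\ z_i)$ fixes $b$, giving $b_i = b$ and preserving (a),(b). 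If $b \in \{z_1,\dots,z_i\}$, the one delicate point — and the only real obstacle in the whole argument — is to rule out $b_{i-1} = x_i$, which is exactly the configuration that would let $(x_i\ z_i)$ carry $b_{i-1}$ back onto $b$ when $z_i = b$; but by (a) of the hypothesis $b_{i-1}$ lies in $\{b\} \cup \{x_1,\dots,x_{i-1}\} \cup \{z_1,\dots,z_{i-1}\}$, and $x_i$ is in none of these, differing from $b$, from each $x_j$ with $j<i$, and from each $z_j$ with $j \le i-1$ by fact (ii). With $b_{i-1} \neq x_i$ secured, two short subcases finish it: if $b$ already occurs among $z_1,\dots,z_{i-1}$ then $b_{i-1} \neq b$ by the hypothesis and, $b_{i-1}$ being distinct from both $x_i$ and $b$, applying $(x_i\ z_i)$ cannot land it on $b$; and if $b = z_i$ but $b \notin \{z_1,\dots,z_{i-1}\}$ then $b_{i-1} = b$ and $(x_i\ z_i)(b) = x_i \neq b$. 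In either subcase $b_i \neq b$ and $b_i$ stays inside the set required by (a).

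Taking $i = k-1$ in the claim then yields $\sigma(b) = b_{k-1} = b$ if and only if $b \notin \{z_1,\dots,z_{k-1}\}$, i.e. if and only if $z_1,\dots,z_{k-1}$ are all distinct from $b$, which is the assertion of the proposition. The essential content, concentrated in the case $b \in \{z_1,\dots,z_i\}$ of the induction, is the ``non-return'' phenomenon: once $b$ has been displaced, the landing points $x_i$ of the subsequent transpositions can never have been reached along the trajectory, and this is precisely where the structural constraint $z_j \in [n] - \{x_j,\dots,x_{k-1}\}$ is used.
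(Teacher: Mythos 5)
Your proof is correct and follows essentially the same route as the paper's: both track the image of $b$ through the transpositions one at a time and use the structural constraint $z_j \in [n] - \{x_j,\dots,x_{k-1}\}$ (hence $z_j \neq x_i$ for $j \le i$) to show that once $b$ is displaced it can never be mapped back. The only cosmetic difference is that you package both directions into a single two-part induction invariant, whereas the paper dispatches the reverse direction immediately and, for the forward direction, takes the minimal index with $z_i = b$ and then inducts to show the trajectory remains inside $\{x_1,\dots,x_{k-1}\}$.
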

\begin{proof}
  The reverse direction is immediate.   For the forward direction, define $\sigma_j = (x_j \ z_j) \cdots (x_1 \ z_1)$ for $j = 0, \dots, k-1$ and let $i \leq k-1$ be minimal such that $z_i = b$.   We show by induction that for $j \geq i$ we have $\sigma_j b \in \{x_1, \dots, x_{k-1} \}$. For the base case, we have $\sigma_i b = (x_i \ b) (x_{i-1} \ z_{i-1}) \cdots (x_1 \ z_1) b = x_i$.  For the induction  step, suppose that $\sigma_{j-1} b = x_r$. If $z_i \neq x_r$ we have $\sigma_j b = \sigma_{j-1} b = x_r$ as desired. If $z_j = x_r$, then $\sigma_j b = (x_j \ x_r) \sigma_{j-1} x_r = x_j$, again as desired.

Thus, if some of the $z_i$ are equal to $b$ then $\sigma b \in \{x_1, \dots, x_{k-1} \}$, and in particular $\sigma b \neq b$.
\end{proof}

\begin{proposition}
\label{c6x-prop-3}
Property (C4) holds. Furthermore, for $A = \langle q \rangle, A' = \langle q' \rangle$ with $A \not \sim A', q = (x_1, \dots, x_k), q' = (y_1, \dots, y_j)$, we
have
  $$
  R_{A;A'} =\bigl \{ (x_{k-1} \ z_{k-1}) \cdots (x_1 \ z_1) \lambda_q \mid z_i \in [n] - \{y_2, \dots, y_j, x_i, \dots, x_{k-1} \} \bigr \} 
  $$  
\end{proposition}
\begin{proof}
 Let $\ell < j$. Consider $\sigma = (x_{k-1} \ z_{k-1}) \cdots (x_1 \ z_1) \lambda_q \in R_A$. For $\pi \in A'$, we have $\sigma \lambda_q \pi y_{\ell} = \sigma \lambda_q y_{\ell+1} = \sigma y_{\ell+1}$; by Proposition~\ref{c6-prop-3} this is equal to $y_{\ell+1}$ iff $z_1, \dots, z_{k-1}$ are distinct from $y_{\ell+1}$. Thus, $\sigma \lambda_q \pi \in A'$ iff $z_1, \dots, z_{k-1}$ are distinct from $y_2, \dots, y_j$.  To show (C4), note that this criterion does not depend on $\pi$, so it either holds for all $\pi \in A \cap A'$ or none of them.
\end{proof}

Given any event $A = \langle (x_1, \dots, x_k) \rangle$ and stable set $E \not \sim A$, this result allows us to efficiently draw from $R_{A;E}$, by selecting indices $z_2, \dots, z_k$ wherein each $z_i$ is distinct from the tail $y_2, \dots, y_j$ for each $A' = \langle (y_1, \dots, y_j)$ in $E$. In particular, this shows (D1'). 

We will now show commutativity. This follows from the observation that $T( x_1, \dots, x_k)$ depends only on the \emph{unordered} set $\{x_1, \dots, x_k \}$:
\begin{proposition}
\label{c0-prop-3b}
For any distinct values $x_1, \dots, x_k$ and any permutation $\pi \in S_k$, we have
$$
T( x_1, \dots, x_k) = T( x_{\pi 1}, \dots, x_{\pi k})
$$
\end{proposition}
\begin{proof}
 It suffices to consider $\pi = (j \ j+1)$ for $j < k$.  Consider $\sigma = (x_k \ z_k ) \cdots (x_1 \ z_1)$ where $z_i \in [n] - \{x_i, \dots, x_k \}$.  We will show that there exist $w_j, w_{j+1}$ such that $(x_j \ w_j) (x_{j+1} \ w_{j+1}) = (x_{j+1} \ z_{j+1}) (x_j \  z_j)$ with $w_j \notin \{x_j, x_{j+2},  \dots, x_k \}, w_{j+1} \notin \{x_j, x_{j+1}, x_{j+2}, \dots, x_k \}$. In this case, replacing the terms $(x_{j+1} \ z_{j+1}) ( x_j \  z_j )$ with $(x_j \ w_j)(x_{j+1} \ w_{j+1})$ allows us to swap $x_j, x_{j+1}$, showing that $\sigma \in T( x_1, x_2, \dots, x_{j-1}, x_{j+1}, x_j, x_{j+2}, \dots, x_k )$.  There are a few cases.
\begin{enumerate}
\item If all four values $z_j, z_{j+1}, x_j, x_{j+1}$ are distinct, then $(x_{j+1} \ z_{j+1}) (x_j \ z_j) = (x_{j+1} \ z_{j+1}) (x_j \ z_j)$ and so $w_j = z_j, w_{j+1} = z_{j+1}$ works.
\item If $z_j = z_{j+1} = z$, then $(x_{j+1} \ z_{j+1}) (x_j \ z_j) = (x_j \ x_{j+1} \ z) = (x_j \ x_{j+1}) (x_{j+1} \ z)$. Thus taking $w_j = x_{j+1}$ and $w_{j+1} = z$ works.
\item If $z_{j+1} = x_j$, then $(x_{j+1} \ z_{j+1}) (x_j \ z_j) = (x_j \ z_j \ x_{j+1}) = (x_j  \ z_j)(x_{j+1} \ z_j)$. Thus taking $w_j = z_j, w_{j+1} = z_j$ works. \qedhere
\end{enumerate}
\end{proof}

\begin{proposition}
Property (C3') holds.
\end{proposition}
\begin{proof}  
  Let $A_1 = \langle q_1 \rangle, A_2 = \langle q_2 \rangle$ where $q_1 = (x_1, \dots, x_{k}), q_2 = (b_1, \dots, b_{\ell})$ with $A_1 \not \sim A_2$. We will show that
\begin{equation}
\label{th-eqn}
R_{A_2} R_{A_1; A_2}  =  T(H) \lambda_q \lambda_{q'}
\end{equation}
where we define $H = (x_1, \dots, x_{k-1}, b_1, \dots, b_{\ell-1})$. Note that $\lambda_q$ and $\lambda_{q'}$ commute since $A_1 \not \sim A_2$, and by Proposition~\ref{c0-prop-3b} the set $T(H)$ does not depend upon the ordering of the list $H$, and so by symmetry this will then show that $R_{A_2} R_{A_1; A_2}  =  T(H) \lambda_q \lambda_{q'} = R_{A_1} R_{A_2; A_1}$ as desired.

Since $A_1 \not \sim A_2$, the values $b_1, \dots, b_{\ell}$ are distinct from $x_1, \dots, x_k$. We have $|R_{A_2}| = \frac{(n-1)!}{(n-\ell)!}$ and  $|T(H)| = \frac{(n-1)!}{(n-1-(\ell + k - 2))!}$. Using the explicit description of $R_{A_1;A_2}$ from Proposition~\ref{c6x-prop-3}, we calculate $|R_{A_1;A_2}| = \frac{(n-1-(\ell-1))!}{(n-1-(\ell-1)-(k-1))!}$. Thus $| R_{A_2} | \times | R_{A_1;A_2} | =  |T(H)|$. We will show that $T(H) \lambda_q \lambda_{q'}  \subseteq R_{A'} R_{A;A'}$;   a counting argument then shows Eq.~(\ref{th-eqn}).

Consider $\tau \in T(H)$ of the form 
$$
\tau = (b_{\ell-1} \ c_{\ell-1}) \cdots (b_1 \ c_1) (x_{k-1} \ z_{k-1}) \cdots (x_1 \ z_1),
$$ where $z_i \notin \{x_i, \dots, x_{k-1}, b_1, \dots, b_{\ell-1} \}$ and $c_i \notin \{b_i, \dots, b_{\ell-1} \}$.

If $z_i \neq b_1$, then $\lambda_{q'} (x_i \ z_i) = (x_i \ z_i) \lambda_{q'}$. Otherwise, for $z_i = b_1$, we have $\lambda_{q'} (x_i \ z_i) = \lambda_{q'} (x_i \ b_1) = (x_i \ b_{\ell} \dots b_1) = ( x_i  \ b_{\ell} ) \lambda_{q'}$. This shows that $\lambda_{q'} (x_{k-1} \ z_{k-1}') \cdots (x_1 \ z_1') = (x_{k-1} \ z_{k-1}) \cdots (x_1 \ z_1) \lambda_{q'}$, where $z'_i$ is defined as $$
z'_i = \begin{cases}
b_1 & \text{if $z_i = b_{\ell}$} \\
z_i & \text{otherwise}
\end{cases}
$$

So we have shown that $\tau \lambda_q \lambda_{q'} = (b_{\ell-1} \ c_{\ell-1}) \cdots (b_1 \ c_1) \lambda_{q'} (x_{k-1} \ z_{k-1}') \cdots (x_1 \ z_1') \lambda_q$. Since $z_i \notin \{x_i, \dots, x_{k-1}, b_1, \dots, b_{\ell-1} \}$, likewise $z'_i \notin \{x_i, \dots, x_{k-1}, b_2, \dots, b_{\ell} \}$.  So, by Proposition~\ref{c6x-prop-3} we have $(x_{k-1} \ z'_{k-1}) \cdots (x_1 \ z'_1) \lambda_{q'} \in R_{A;A'}$. Clearly, $(b_{\ell-1} \ c_{\ell-1}) \cdots (b_1 \ c_1) \lambda_q  \in R_A$. So we have shown that $\tau \lambda_q \lambda_{q'}$ can indeed be written as an element of $R_{A_2} R_{A_1; A_2}$.
\end{proof}

\section{Perfect matchings of $K_n^{(s)}$}
\label{match-sec2}
Let us fix $s \geq 2$ throughout this section and $n$ a multiple of $s$ and we define $U = \mathcal M$ to be the set of perfect matchings of $K_n^{(s)}$. Note that the case $s = 2$ is the space of perfect matchings of $K_n$, which has been studied more extensively, with a commutative resampling oracle given by Kolmogorov \cite{kolmogorov}. In \cite{mohr}, Lu, Sz\'{e}k\'{e}ly \& Mohr showed (non-algorithmically) that the LLLL held for all $s \geq 2$. 

We will construct an oblivious resampling-space for the uniform distribution on $\mathcal M$. This gives efficient sequential algorithms. We also show that when $s = 2$, the space is commutative and is compatible with our parallel algorithm. 

The probability space $\Omega$ is the uniform distribution on $\mathcal M$. For every size-$s$ subset $e$ of $[n]$, we define the atomic event
$$
\langle e \rangle = \bigl \{ M \in \mathcal M \mid e \in M  \bigr \}
$$

The dependency relation $\sim$ is defined by setting $\langle e \rangle \sim \langle e' \rangle$ iff $e \neq e'$ and $e \cap e' \neq \emptyset$.

The monoid $R$ is the symmetric group $S_n$, with the natural group action on $U$ defined by 
$$
\sigma M  = \bigl \{ \{\sigma x_1, \dots, \sigma x_s \} \mid \{ x_1, \dots, x_s \} \in M \bigr \}
$$
It is clear that properties (D0), (D2), (D3) hold.

Whenever we enumerate an edge  $e = \{x_1, x_2, \dots, x_s \}$, we always assume implicitly it is sorted so that $x_1 < x_2 < \dots < x_s$. With this notation in mind, for an event $A = \langle \{x_1, \dots, x_s \} \rangle$ we define the set of permutations $$
R_A = \bigl \{ (x_2 \ z_2) \dots (x_s \ z_s) \mid z_i \in [n] - \{x_1, \dots, x_{i-1} \} \bigr \}
$$
and we define $\Gamma_A$ to be the uniform distribution on $R_A$. Note that each choice of $z_2, \dots, z_s$ gives rise to a distinct permutation, so that $\Gamma_A$ also corresponds to the distribution obtained by choosing each index $z_i$ independently and uniformly from the range the $[n] - \{x_1, \dots, x_{i-1} \}$.

\begin{proposition}
\label{c1-str}
For any event $A = \langle e \rangle$ and any $N \in \mathcal M$,  there are precisely $(s-1)!$ ordered pairs $(\sigma, M) \in R_A \times A$ such that $\sigma M = N$.  In particular, for $s = 2$, property (C5) holds.
\end{proposition}
\begin{proof}
Let $e = \{x_1, \dots, x_s \}$.  Since $M$ is uniquely determined from $\sigma, N$ it suffices to show there are precisely $(s-1)!$ choices for $\sigma$ such that $\sigma^{-1} N \in A$.

Consider $\sigma = (x_2 \ z_2) \cdots (x_s \ z_s)$ where $z_i \in [n] - \{x_1, \dots,x_{i-1} \}$. For each $j = 1, \dots, s$ let us define $A_j$ to be the set of matchings $M$ such that $\{x_1, \dots, x_i \} \subseteq e$ for some $e \in M$.  We claim that, given any matching $M \in A_j$, there are precisely $s-j$ choices for $z_{j+1} \in [n] - \{x_1, \dots, x_j \}$ such that $(x_{j+1} \ z_{j+1}) M \in A_{j+1}$. As $N \in A_1 = \mathcal M$ and $A_s = s$, this will establish that there are precisely $(s-1) \cdots 1 = (s-1)!$ choices for $z_2, \dots, z_s$ such that $(x_s \ z_s) \cdots (x_2 \ z_2) N = \sigma^{-1} N$ is in $A$.

Now suppose we have chosen values $z_2, \dots, z_j$, and so $N' = (x_j \ z_j) \dots (x_2 \ z_2) N$ has been determined.  By hypothesis, $N' \in A_j$ and so $N'$ contains an edge $e = \{x_1, \dots, x_j, y_1, \dots, y_{s-j} \}$. We have $(x_{j+1} \ z_{j+1}) N' \in A_{j+1}$ iff $x_{j+1}$ is swapped into edge $e$, which occurs precisely when $z_{j+1} \in \{y_1, \dots, y_{s-j} \}$. Thus, there are $s-j$ choices for $z_{j+1}$ as we have claimed. 
\end{proof}

\begin{proposition}
Property (C1) holds.
\end{proposition}
\begin{proof}
  Consider event $A = \langle e \rangle$.  By Proposition~\ref{c1-str}, there are precisely $(s-1)!$ pairs  $\sigma \in R_A,  M \in A$ which lead to a given matching $N = \sigma M$. Thus, when $\sigma \approx \Gamma_A$ and $M \approx \Omega|A$, we have $\Pr( \sigma M = N) = (s-1)! \times \frac{1}{|R_A|} \times \frac{1}{|A|}$. This does not depend upon $N$, and so $\sigma M$ is uniformly distributed.
  \end{proof}

\begin{proposition}
Property (C2) holds.
\end{proposition}
\begin{proof}
  Consider $A = \langle e \rangle$ where $e = \{x_1, \dots, x_s \}$ and $A' = \langle e' \rangle$ and $M \in A - A'$. We cannot have $A = A'$ since $A - A'$ is non-empty, and so $e, e'$ are disjoint.

  Suppose for contradiction that $e' \in \sigma  M$ for $\sigma \in R_A$. Let $i \geq 2$ be maximal such that $e' \in (x_i \ z_i) \cdots (x_s \ z_s) M$. We must have $i \leq s$, since $e' \notin M$. It must be the case that $z_i \in e'$. Then matching $N = (x_{i+1} \ z_{i+1}) \cdots (x_s \ z_s) M$ must contain an edge $(e' - z_i) \cup \{x_i \}$. Thus, $x_i$ is matched to the vertices $e' - z_i$ in $N$.   On the other hand, the entries $z_{i+1}, \dots, z_s$ are all distinct from $x_1, \dots, x_{i+1}$; therefore, in the matching $N$, the entries $x_1, \dots, x_{i}$ are not affected, and so $x_1, \dots, x_i$ are matched to each other.  Thus $x_i$ is matched in $N$ to $s-1$ vertices in $e'$ as well as $i-1$ vertices in $e$. Since $N$ contains only $s$-edges, this is impossible.
\end{proof}

\begin{proposition}
  \label{cyy1}
  Let $A = \langle e \rangle$ where $e = \{x_1, \dots, x_s \}$ and $A' = \langle e' \rangle$ and $M \in A \cap A'$ for $A \not \sim A'$. Consider $\sigma \in R_A$ of the form $\sigma = (x_2 \ z_2) \cdots (x_s \ z_s)$ where $z_i \in [n] - \{x_1, \dots, x_{i-1} \}$. Let $Z = \{z_2, \dots, z_s \}$.
    \begin{enumerate}
  \item If $A = A'$, then $\sigma M \in A' \Leftrightarrow Z \subseteq e'$ 
  \item If $A \neq A'$, then $\sigma M \in A' \Leftrightarrow Z \cap e' = \emptyset$.
  \end{enumerate}
\end{proposition}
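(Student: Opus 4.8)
The plan is to translate everything through the group action: $\sigma M \in A'$ iff $e' \in \sigma M$ iff $\sigma^{-1}(e') \in M$. Two elementary facts about $\sigma = (x_2\ z_2)\cdots(x_s\ z_s)$ drive the argument. First, \emph{$\sigma$ fixes $x_1$}: since $z_i \notin \{x_1,\dots,x_{i-1}\}$ for each $i$, the point $x_1$ occurs neither as an $x_j$ nor as a $z_j$ inside $\sigma$, so $\sigma(x_1) = x_1$. Second, a \emph{locality} computation: if $i$ is the \emph{smallest} index at which $z_i$ fails the membership condition in question, then $\sigma(x_i) = z_i$ --- because the transpositions $(x_\ell\ z_\ell)$ with $\ell > i$ fix $x_i$ (as $z_\ell \notin\{x_1,\dots,x_{\ell-1}\}\ni x_i$), the transposition $(x_i\ z_i)$ itself sends $x_i \mapsto z_i$, and the transpositions $(x_\ell\ z_\ell)$ with $\ell < i$ fix $z_i$ (by the minimality of $i$, the relevant $z_\ell$ and $x_\ell$ are separated from $z_i$). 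This is exactly the bookkeeping step of Proposition~\ref{c6-prop-3}, and it is the only real obstacle; everything else is routine assembly.

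For the reverse implications: in Case~1 ($e = e'$), if every $z_i \in e$ then $\sigma$ is a permutation supported on $e$, so $\sigma M = M$ and hence $\sigma M \in \langle e\rangle$; in Case~2, $A \not\sim A'$ together with $e \neq e'$ forces $e \cap e' = \emptyset$, so if every $z_i \notin e'$ then the support $\{x_2,\dots,x_s\}\cup\{z_2,\dots,z_s\}$ of $\sigma$ is disjoint from $e'$, whence $\sigma$ fixes $e'$ pointwise and $e' = \sigma(e') \in \sigma M$ (using $e' \in M$).

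For the forward implications I argue by contradiction, assuming some $z_i$ violates the condition and taking $i$ minimal. In Case~1 the locality fact gives $\sigma(x_i) = z_i \notin e$; together with $\sigma(x_1) = x_1 \in e$ this says $\sigma^{-1}(e)$ meets the $M$-block $e$ (at $x_1$) but is not equal to $e$ (since $\sigma(x_i)\notin e$ while $x_i\in e$ forces $\sigma(e)\neq e$), and in a perfect matching a set that overlaps a block without equalling it cannot itself be a block --- contradicting $\sigma^{-1}(e)\in M$. Case~2 is the mirror image: minimality yields $\sigma(x_i) = z_i \in e'$, so $\sigma^{-1}(e')$ contains $x_i$ and hence meets the block $e$; yet $\sigma(x_1) = x_1 \notin e'$ shows $\sigma(e) \neq e'$, so $\sigma^{-1}(e') \neq e$; again a proper overlap with a block is impossible, contradicting $\sigma^{-1}(e') \in M$. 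In both cases the resulting criterion on $\sigma$ visibly does not reference $M$, which is precisely what will be needed to deduce (C4) for $\oa$.
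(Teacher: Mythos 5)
Your proof is correct and follows essentially the same strategy as the paper's: handle the reverse direction by observing that $\sigma$ is supported inside (respectively, away from) the relevant edge, and handle the forward direction by locating a violating index and showing the corresponding block of $M$ is disrupted. The only difference is bookkeeping --- you take the \emph{minimal} violating index and compute $\sigma(x_i)=z_i$ exactly, closing with the fact that two blocks of a partition sharing a point must coincide, whereas the paper takes the maximal index and tracks the intermediate matchings; your version is, if anything, slightly more airtight.
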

\begin{proof}
For case (1),  suppose $Z \subseteq e' = e$. So each $(x_i \ z_i)$ permutes two elements within $e$, and thus a simple induction on $i$ shows that $(x_i \ z_i) \cdots (x_s \ z_s) M = M$ for all $i = k+1, \dots, 2$. In particular $\sigma M = M$. On the other hand, let $i$ be maximal such that $z_i \notin e$. Then $(x_{i+1} \ z_{i+1}) \cdots (x_s \ z_s) M = M$. This $z_i$ will remain matched to $x_1$ in $(x_2 \ z_2) \cdots (x_s \ z_s) M$, and in particular $e \notin (x_2 \ z_2) \cdots (x_s \ z_s) M$.

For case (2), we have $e \cap e' = \emptyset$ since $A \not \sim A'$. If $Z \cap e' = \emptyset$, then edge $e'$ is unaffected in $(x_2 \ z_2) \cdots (x_s \ z_s) M$, and so $e' \in M$.  On the other hand, let $i$ be maximal such that $z_i \in e'$. This $z_i$ remains matched to $x_1$ in $(x_2 \ z_2) \cdots (x_s \ z_s) M$, and in particular the edge $e'$ cannot remain in $(x_2 \ z_2) \cdots (x_s \ z_s) M$.
  \end{proof}

\begin{proposition}
  Property (C4) holds.
\end{proposition}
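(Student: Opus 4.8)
The plan is to deduce (C4) directly from the explicit characterization already established in Proposition~\ref{cyy1}, exactly as the analogous facts for permutations and for hamiltonian cycles were deduced from Proposition~\ref{c6x-prop-11a} and Corollary~\ref{c6x-prop-3} respectively. Property (C4) requires that for atomic events $A = \langle e \rangle$ and $A' = \langle e' \rangle$ with $A \not\sim A'$ and any fixed $\sigma \in Y_A$, the quantity $P_{M \approx \Omega^{A \cap A'}}(\sigma M \in A')$ lies in $\{0,1\}$; equivalently, that membership $\sigma M \in A'$ is determined by $\sigma$ alone and is independent of which $M \in A \cap A'$ we take.

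Proposition~\ref{cyy1} supplies precisely this. Writing $\sigma = (x_2\ z_2)\cdots(x_s\ z_s)$ in terms of its defining coordinates $z_2,\dots,z_s$, the proposition states: if $A = A'$ then $\sigma M \in A'$ iff $z_i \in e$ for every $i$; and if $A \neq A'$ (so that $e \cap e' = \emptyset$, since $A \not\sim A'$ rules out $e \neq e'$ with $e \cap e' \neq \emptyset$) then $\sigma M \in A'$ iff $z_i \notin e'$ for every $i$. In both cases the condition refers only to the edges $e, e'$ and the coordinates of $\sigma$, and not at all to $M$. Hence, for a fixed $\sigma$, either $\sigma M \in A'$ for all $M \in A \cap A'$ or for none of them, which is exactly the claimed dichotomy.

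There is essentially no obstacle here; the only point to be careful about is that the representation $\sigma = (x_2\ z_2)\cdots(x_s\ z_s)$ assigns well-defined coordinates $z_i$ to each element of $Y_A$, so that the criterion above is unambiguous. This is the same one-line uniqueness fact used in the proof of (C1) (peel off the transpositions from the top, using that $z_2,\dots,z_s$ and $x_1,\dots,x_{s-1}$ are suitably distinct), or may be regarded as built into the definition of $Y_A$. With that observation in place, the proof of (C4) is a direct invocation of Proposition~\ref{cyy1}.
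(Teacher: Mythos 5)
Your proof is correct and follows exactly the paper's own argument: Property (C4) is read off directly from the explicit membership criterion of Proposition~\ref{cyy1}, which depends only on $e$, $e'$, and the coordinates $z_i$ of $\sigma$, and not on the matching $M$. The extra remark on the well-definedness of the coordinates $z_i$ is a harmless addition (it is indeed the same peeling argument used for (C1)).
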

\begin{proof}
  Proposition~\ref{cyy1} gives an explicit condition on when $\sigma M \in A'$ for $A \not \sim A', M \in A \cap A', \sigma \in R_A$. This condition depends solely on $A, A', \sigma$ and not on $M$.
  \end{proof}

\begin{proposition}
  Property (D1') holds.
\end{proposition}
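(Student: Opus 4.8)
The plan is to promote Proposition~\ref{cyy1} to an explicit coordinatewise description of $Y_{A;E}$, after which the parallel sampling is routine. Write $A = \langle e \rangle$ with $e = \{x_1 < x_2 < \dots < x_s\}$. Recall that every $\sigma \in Y_A$ has the form $\sigma = (x_2 \ z_2) \cdots (x_s \ z_s)$ with $z_i \in [n] - \{x_1, \dots, x_{i-1}\}$, and that this representation is \emph{unique}: one has $\sigma(x_2) = z_2$, then $(x_2 \ z_2)\sigma$ peels off $z_3$, and so on, exactly as in the proof that Property~(C1) holds for this space. Thus $\sigma \mapsto (z_2, \dots, z_s)$ is a bijection from $Y_A$ onto the set of admissible tuples. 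Now fix $A' = \langle e' \rangle \in E$; since $A \not\sim A'$, either $e' = e$ or $e' \cap e = \emptyset$, and in either case Proposition~\ref{cyy1} expresses $Y_{A;A'}$ as a constraint on each coordinate separately — $z_i \in e$ for all $i$ when $e' = e$, and $z_i \in [n] - e'$ for all $i$ when $e' \neq e$. Intersecting over the at most $\poly(n)$ events of $E$, we obtain
$$
Y_{A;E} = \bigl\{ (x_2 \ z_2) \cdots (x_s \ z_s) \mid z_i \in S_i \text{ for all } i = 2, \dots, s \bigr\},
$$
where $S_i = \bigl( [n] - \{x_1,\dots,x_{i-1}\} \bigr) \cap \bigcap_{\langle e'\rangle \in E,\, e' \neq e} \bigl( [n] - e' \bigr)$, intersected further with $e$ in the degenerate case $\langle e \rangle \in E$ (which does not occur in the use of (D1') inside Proposition~\ref{simpb1b2a}, where the atomic events are distinct).

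Two observations finish the matter. First, each $S_i$ is nonempty: the edge $e$ and the edges $e' \neq e$ appearing in $E$ are pairwise disjoint $s$-subsets of $[n]$, so $\{x_1,\dots,x_{i-1}\} \cup \bigcup_{e' \neq e} e'$ has size at most $(s-1) + (n-s) = n-1$ and at least one value of $z_i$ survives (and in the degenerate case $S_i = \{x_i,\dots,x_s\} \neq \emptyset$). Second, because the tuple-to-permutation map is a bijection, the uniform distribution on $Y_{A;E}$ is exactly the pushforward of the product of the uniform distributions on $S_2, \dots, S_s$. Hence sampling $\sigma \approx Y_{A;E}$ reduces to computing the $S_i$, drawing each $z_i$ independently and uniformly from $S_i$, and forming the product $(x_2 \ z_2) \cdots (x_s \ z_s)$.

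Each of these is implementable within the stated budget. Using $|E| \le \poly(n)$, the indicator vectors in $[n]$ of $\{x_1,\dots,x_{i-1}\}$ and of each $e'$ are written down directly, and $S_i$ is obtained from a single parallel conjunction over $\poly(n)$ bits, in $O(\log n)$ time and $\poly(n)$ processors; all $i \le s \le n$ are handled simultaneously. A uniform element of $S_i$ is then picked in $O(\log n)$ time by prefix sums on its indicator vector. Finally $(x_2 \ z_2) \cdots (x_s \ z_s)$ is a product of at most $s-1 \le n$ transpositions of $[n]$; since composition of permutations is associative, it is evaluated by a balanced binary composition tree of depth $O(\log n)$, each level costing $O(\log n)$ time and $\poly(n)$ processors. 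The total is $O(\log^2 n) \le O(\log^3 n)$ time and $\poly(n)$ processors.

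The one place the argument genuinely relies on obliviousness is the claim that the constraints imposed by the various $A' \in E$ decouple across the coordinates $z_i$, so that $Y_{A;E}$ is a product set in the bijective parametrization; this is exactly what Proposition~\ref{cyy1} delivers, and together with the bijectivity it is what makes coordinatewise uniform sampling output the uniform distribution on $Y_{A;E}$. No other step poses a real difficulty.
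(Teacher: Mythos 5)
Your proof is correct and follows essentially the same route as the paper's: both reduce to the coordinatewise description of $Y_{A;A'}$ from Proposition~\ref{cyy1}, intersect over $E$ to get a product set $H - \{x_1,\dots,x_{i-1}\}$ (your $S_i$ equals the paper's $H - \{x_1,\dots,x_{i-1}\}$ in both the $e \in E$ and $e \notin E$ cases), and sample each $z_i$ independently. You additionally spell out the bijectivity of the tuple parametrization, the nonemptiness of the $S_i$, and the PRAM implementation details, all of which the paper leaves implicit; these checks are accurate.
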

\begin{proof}
Consider $E = \{ \langle e_1 \rangle, \dots, \langle e_k \rangle \}$ and $A = \langle e \rangle$ where $e = \{x_1, \dots, x_s \}$. If $e_1, \dots, e_k$ are distinct from $e$, then we can sample $\sigma = (x_2 \ z_2) \dots (x_s \ z_s) \approx \Gamma_{A; E}$ by selecting each $z_i$ independently from the set $[n] - (e_1 \cup \dots \cup e_k) - \{x_1, \dots, x_{i-1} \}$. Similarly, if one of the sets $e_i$ is equal to $e$, then we select $z_i$ independently from $e - \{x_1, \dots, x_{i-1} \}$.
\end{proof}

\begin{proposition}
For $s = 2$, property (C3') holds.
\end{proposition}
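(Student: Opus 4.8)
The plan is to dispose of the trivial case $A_1 = A_2$ first, and then reduce (C3$'$) for two disjoint edges to an explicit, $M$‑dependent bijection between parameter sets. If $A_1 = A_2 = \langle e \rangle$ with $e = \{x_1,x_2\}$, $x_1 < x_2$, then Proposition~\ref{cyy1}(1) forces $z_2 \in e$, and since $Y_{A_1}$ also requires $z_2 \neq x_1$ we get $Y_{A_1;A_1} = \{(x_2\ x_2)\} = \{\mathrm{id}\}$; both sides of (C3$'$) then collapse to applying a uniformly random element of $Y_{A_1}$ to $M$. So assume $e_1 = \{a_1,a_2\}$ ($a_1 < a_2$) and $e_2 = \{b_1,b_2\}$ ($b_1 < b_2$) are disjoint. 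By Proposition~\ref{cyy1}(2), $Y_{A_1} = \{(a_2\ z) : z \neq a_1\}$, $Y_{A_1;A_2} = \{(a_2\ z) : z \notin e_2\cup\{a_1\}\}$, and symmetrically for $A_2$. Hence (C3$'$) amounts to the statement that, for each fixed $M \in A_1\cap A_2$, the distribution of $(b_2\ w)(a_2\ z)M$ over $(z,w)$ uniform in $P_1 := ([n]\setminus(e_2\cup\{a_1\}))\times([n]\setminus\{b_1\})$ equals the distribution of $(a_2\ z')(b_2\ w')M$ over $(z',w')$ uniform in $P_2 := ([n]\setminus\{a_1\})\times([n]\setminus(e_1\cup\{b_1\}))$; note $|P_1| = |P_2| = (n-1)(n-3)$.

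For each fixed $M$ I would construct a bijection $\beta_M : P_1 \to P_2$ such that $(b_2\ w)(a_2\ z)M = (a_2\ z')(b_2\ w')M$ whenever $\beta_M(z,w) = (z',w')$; bijectivity then gives the distributional identity. On the generic region $\{(z,w) \in P_1 : w \notin \{a_1,a_2,z\}\}$ I take $\beta_M(z,w) = (z,w)$: there the transpositions $(a_2\ z)$ and $(b_2\ w)$ have disjoint support, hence commute, so $(b_2\ w)(a_2\ z)M = (a_2\ z)(b_2\ w)M$, and $(z,w) \in P_2$. The exceptional pairs are those with $w \in \{a_1,a_2,z\}$, which split into the sub‑families $w = a_1$, $w = a_2$, $w = z$, each parametrized by $z$; for every one of them I would compute $(b_2\ w)(a_2\ z)M$ outright. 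That permutation relabels only the edges $e_1, e_2$ and --- when $z$ or $w$ lies in $V' := [n]\setminus(e_1\cup e_2)$ --- the edge of $M$ through $z$ resp.\ $w$, so the output matching is transparent and a valid $(z',w') \in P_2$ producing it can be named by hand. For instance $z = a_2, w = a_1$ gives $(b_2\ a_1)M$, which swaps $e_1, e_2$ for $\{a_2,b_2\}, \{a_1,b_1\}$ and also equals $(a_2\ b_1)(b_2\ b_2)M$, so $\beta_M(a_2,a_1) = (b_1,b_2)$; and for $w = a_1$, $z \in V'$, writing $z^*$ for the $M$‑partner of $z$, one gets $(b_2\ a_1)(a_2\ z)M = (a_2\ b_1)(b_2\ z^*)M$, so $\beta_M(z,a_1) = (b_1,z^*)$. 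The families $w = a_2$ and $w = z$ are handled the same way. In most of these cases the two permutations differ on the right by an element of the order‑$8$ subgroup $K \le S_n$ that permutes $\{a_1,a_2,b_1,b_2\}$ while preserving the partition $\{e_1,e_2\}$; since every element of $K$ fixes every $M \in A_1\cap A_2$, the two outputs then agree automatically. In the families where $z$ or $w$ lies in $V'$, the correcting stabilizer element involves the relevant $M$‑partner, which is exactly why $\beta_M$ is allowed to depend on $M$.

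Finally I would check that $\beta_M$ is a bijection; since $|P_1| = |P_2|$, injectivity suffices. The generic part of $\beta_M$ is the identity, and one verifies that its domain is precisely $P_2$ with the finitely many exceptional images removed, and that those exceptional images are pairwise distinct (as $z$ ranges over its values, each exceptional sub‑family contributes distinct pairs). These checks are finite and mechanical.

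The step I expect to be the real obstacle --- and what distinguishes this from the proofs of (C3) for the variable‑assignment, permutation, and hamiltonian‑cycle oracles --- is that the resampling action $\sigma \mapsto \sigma M$ on matchings is \emph{not} free: the stabilizer of $M$ already contains the dihedral group $K$, and consequently a statement purely about permutations (such as ``$(b_2\ a_1)$ is a product $(a_2\ z')(b_2\ w')$ with $w' \notin e_1\cup\{b_1\}$'') is simply false even when the corresponding \emph{matchings} coincide. This rules out the clean ``equality of (multi)sets of group elements'' argument that works in those earlier settings and forces one to carry the stabilizer along and reason at the level of matchings. The bookkeeping over the exceptional families, together with the injectivity check, is where the bulk of the work lies; the generic case is immediate.
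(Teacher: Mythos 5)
Your proposal is correct, and it follows the same overall strategy as the paper's proof: dispose of the trivial $A_1=A_2$ case, observe that the two transpositions commute (and the identity map on parameters already lands in the target parameter set) on a generic region, handle the remaining exceptional parameter pairs by direct computation on the matching, and close with the count $|P_1|=|P_2|=(n-1)(n-3)$. Where you differ is in the finishing move. The paper proves that the set of matchings reachable from $P_1$ is contained in (hence, by cardinality, equal to) the set reachable from $P_2$ --- delegating that containment to an exhaustive check on the WLOG-reduced instances $M=\{\{1,3\},\{2,4\},\{5,6\}\}$ and $M=\{\{1,3\},\{2,4\},\{5,7\},\{6,8\}\}$ --- and then separately shows that each two-parameter encoding $(z,w)\mapsto (b_2\ w)(a_2\ z)M$ is injective, so both distributions are uniform on the same set. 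You instead build an explicit bijection $\beta_M:P_1\to P_2$ intertwining the two maps, which gives the multiset identity directly and dispenses with injectivity of the parametrizations, at the cost of having to verify injectivity of $\beta_M$ itself; the two closings involve comparable bookkeeping. Your diagnosis of why one must compare \emph{matchings} rather than group elements --- the action of $S_n$ on $\mathcal M$ is not free, the stabilizer of $M$ containing the order-$8$ subgroup preserving $\{e_1,e_2\}$ --- is exactly right and is the reason the paper's argument also descends to the level of matchings here, in contrast to its proofs of (C3) for $S_n$ and for hamiltonian cycles. Two small caveats: the exceptional images form three one-parameter families of size $\Theta(n)$ (not finitely many in an absolute sense), and the families $w=a_2$ and $w=z$ are asserted rather than written out; but your worked instances (e.g.\ $\beta_M(z,a_1)=(b_1,z^*)$, which I checked) are correct, and the remaining verifications are routine.
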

\begin{proof}
  Consider $A_1 = \langle(x_1, y_1) \rangle, A_2 = \langle (x_2, y_2) \rangle$ and a matching $M \supseteq \{ \{x_1, y_1 \}, \{x_2, y_2 \} \}$. We need to show that for any $z_1 \in [n] - \{x_1 \}, z_2 \in [n] - \{x_1, y_1, x_2 \}$ there are $z'_2 \in [n] - \{ x_2 \}$ and $z'_1 \in [n] - \{x_2, y_2, x_1 \}$ such that
  \begin{equation}
    \label{teqn1}
    (y_1 \ z_1) (y_2 \ z_2) M = (y_2 \ z'_2) (y_1 \ z'_1) M
    \end{equation}

  By relabeling, we assume without loss of generality that  $x_1 = 1, y_1 = 3, x_2 = 2, y_2 = 4$, and $z_1, z_2 \in \{1, \dots, 6 \}$, and that either  $M = \{ \{1, 3 \}, \{2, 4 \}, \{5, 6 \} \}$ or $M = \{ \{1, 3 \}, \{ 2, 4 \}, \{5, 7 \}, \{6, 8 \} \}$. We have exhaustively tested all choices $z_1, z_2$ in both cases, verifying that there is always a choice of $z'_1, z'_2 \in \{1, \dots, 8 \}$ satisfying Eq.~(\ref{teqn1}).
\end{proof}

\end{document}